\title{A Linear-logical Reconstruction of Intuitionistic Modal Logic S4}
\author{Yosuke Fukuda}{Graduate School of Informatics, Kyoto University, Japan \and \url{http://www.fos.kuis.kyoto-u.ac.jp/~yfukuda}}{yfukuda@fos.kuis.kyoto-u.ac.jp}{}{}
\author{Akira Yoshimizu}{French Institute for Research in Computer Science and Automation\,(INRIA), France \and \url{http://www.cs.unibo.it/~akira.yoshimizu}}{akira.yoshimizu@inria.fr}{}{}
\authorrunning{Y. Fukuda and A. Yoshimizu}
\keywords{Linear logic, Modal logic, Girard translation, Curry--Howard correspondence, Geometry of Interaction, Staged computation}
\theoremstyle{plain}
\newtheorem{fact}[theorem]{Fact}
\newcommand{\defeq}{\overset{\mathrm{def}}{=}}
\newcommand{\fv}[1]{\mathrm{FV}(#1)}
\newcommand{\fdom}[1]{\mathrm{dom}(#1)}
\newcommand{\fint}[1]{\llbracket #1 \rrbracket}
\newcommand{\fgirardtrans}[1]{ \lceil #1 \rceil }
\newcommand{\ftrans}[1]{ \mathcal{T}\fgirardtrans{#1} }
\newcommand{\fembed}[1]{ (#1)^\ddagger }
\newcommand{\fnettrans}[1]{ (#1)^\bullet }
\newcommand{\flambdanettrans}[1]{ (#1)^\dagger }
\newcommand{\fbracket}[1]{\langle #1 \rangle}
\newcommand{\ftypeof}[1]{ \mathrm{typeof}(#1) }
\newcommand{\mset}{\mathbb{M}}
\newcommand{\bangbox}{\mathrel{\Box\hspace{-.48em}\raise0.20ex\hbox{\scalebox{0.7}{$!$}}}}
\newcommand{\smallbangbox}{\mathrel{\Box\hspace{-.39em}\raise0.20ex\hbox{\scalebox{0.5}{$!$}}}\hspace{.035em}}
\newcommand{\whynotdia}{\mathrel{\Diamond\hspace{-.475em}\raise0.20ex\hbox{\scalebox{0.6}{$?$}}}\hspace{-0.1em}}
\newcommand{\sarrow}{\supset}
\newcommand{\stensor}{\otimes}
\newcommand{\spar}{\parr}
\newcommand{\slimp}{\multimap}
\newcommand{\bangzero}{\underset{0}{!}}
\newcommand{\whynotzero}{\underset{0}{?}}
\newcommand{\bangone}{\underset{1}{!}}
\newcommand{\whynotone}{\underset{1}{?}}
\newcommand{\sbangs}{\delta}
\newcommand{\mcont}{\mathcal{M}}
\newcommand{\bcont}{\mathcal{B}}
\newcommand{\ncont}{\mathcal{N}}
\newcommand{\il}{\mathrm{IL}}
\newcommand{\isf}{\mathrm{IS4}}
\newcommand{\csf}{\mathrm{CS4}}
\newcommand{\imell}{\mathrm{IMELL}}
\newcommand{\cmell}{\mathrm{CMELL}}
\newcommand{\imellbox}{\mathrm{IMELL}^{\Box}}
\newcommand{\njbangbox}{\mathrm{NJ}^{\smallbangbox}\,}
\newcommand{\hjbangbox}{\mathrm{HJ}^{\smallbangbox}\,}
\newcommand{\ljbox}{\mathrm{LJ}^{\Box}}
\newcommand{\mell}{\mathrm{MELL}}
\newcommand{\imellbangbox}{\imell^{\smallbangbox}\,}
\newcommand{\cmellbangbox}{\mathrm{CMELL}^{\smallbangbox}\,}
\newcommand{\lambdabox}{\lambda^{\Box}}
\newcommand{\lambdabangbox}{\lambda^{\smallbangbox}\,}
\newcommand{\lambdabanglimp}{\lambda^{!, \slimp}}
\newcommand{\bill}{\textbf{2-LL}}
\newcommand{\clvbangbox}{\mathrm{CL}^{\smallbangbox}\,}
\newcommand{\rax}{\mathrm{Ax}}
\newcommand{\rmax}{\Box\mathrm{Ax}}
\newcommand{\rimpi}{\sarrow\!\mathrm{I}}
\newcommand{\rimpe}{\sarrow\!\mathrm{E}}
\newcommand{\rboxi}{\Box\mathrm{I}}
\newcommand{\rboxe}{\Box\mathrm{E}}
\newcommand{\rbax}{!\mathrm{Ax}}
\newcommand{\rbbax}{\bangbox\mathrm{Ax}}
\newcommand{\rlax}{\mathrm{LinAx}}
\newcommand{\rlimpi}{\slimp\!\mathrm{I}}
\newcommand{\rlimpe}{\slimp\!\mathrm{E}}
\newcommand{\rbangi}{!\mathrm{I}}
\newcommand{\rbange}{!\mathrm{E}}
\newcommand{\rbangboxi}{\bangbox\mathrm{I}}
\newcommand{\rbangboxe}{\bangbox\mathrm{E}}
\newcommand{\hcomb}{\mathrm{Ax}}
\newcommand{\hmp}{\mathrm{MP}}
\newcommand{\hbnec}{!}
\newcommand{\hbbnec}{\bangbox}
\newcommand{\rcut}{\mathrm{Cut}}
\newcommand{\rbcut}{!\mathrm{Cut}}
\newcommand{\rbbcut}{\bangbox\mathrm{Cut}}
\newcommand{\rtensor}{\stensor}
\newcommand{\rpar}{\spar}
\newcommand{\rbang}{!}
\newcommand{\rbangbox}{\bangbox}
\newcommand{\rwhynot}{?}
\newcommand{\rwhynotdia}{\whynotdia}
\newcommand{\rweak}{\mathrm{W}}
\newcommand{\rbweak}{!\mathrm{W}}
\newcommand{\rbbweak}{\bangbox\mathrm{W}}
\newcommand{\rqweak}{?\mathrm{W}}
\newcommand{\rmweak}{\whynotdia\mathrm{W}}
\newcommand{\rcont}{\mathrm{C}}
\newcommand{\rbcont}{!\mathrm{C}}
\newcommand{\rbbcont}{\bangbox\mathrm{C}}
\newcommand{\rqcont}{?\mathrm{C}}
\newcommand{\rmcont}{\whynotdia\mathrm{C}}
\newcommand{\rprom}{!\mathrm{R}}
\newcommand{\rdere}{!\mathrm{L}}
\newcommand{\rlimpl}{\slimp\!\mathrm{L}}
\newcommand{\rlimpr}{\slimp\!\mathrm{R}}
\newcommand{\rimpl}{\sarrow\!\mathrm{L}}
\newcommand{\rimpr}{\sarrow\!\mathrm{R}}
\newcommand{\rboxl}{\Box\mathrm{L}}
\newcommand{\rboxr}{\Box\mathrm{R}}
\newcommand{\rbangl}{!\mathrm{L}}
\newcommand{\rbangr}{!\mathrm{R}}
\newcommand{\rbangboxl}{\bangbox\mathrm{L}}
\newcommand{\rbangboxr}{\bangbox\mathrm{R}}
\newcommand{\rpseudonec}{\Box'}
\newcommand{\betarule}[1]{(\beta #1)}
\newcommand{\rbetaimp}{\betarule{\sarrow}}
\newcommand{\rbetalimp}{\betarule{\slimp}}
\newcommand{\rbetabox}{\betarule{\Box}}
\newcommand{\rbetabang}{\betarule{!}}
\newcommand{\rbetabangbox}{\betarule{\bangbox~}}
\newcommand{\ottnt}[1]{\mathit{#1}}
\newcommand{\ottmv}[1]{\mathit{#1}}
\newcommand{\ottkw}[1]{\mathbf{#1}}
\newcommand{\ottsym}[1]{#1}
\begin{document}

\maketitle

\begin{abstract}
We propose a \emph{modal linear logic} to reformulate intuitionistic modal logic S4\,($\isf$)
in terms of linear logic, establishing an S4-version of Girard translation from $\isf$ to it.
While the Girard translation from intuitionistic logic to linear logic is well-known,
its extension to modal logic is non-trivial since a naive combination of the S4 modality and the exponential modality
causes an undesirable interaction between the two modalities.
To solve the problem, we introduce an extension of intuitionistic multiplicative exponential linear logic with a modality combining the S4 modality and the exponential modality,
and show that it admits a sound translation from $\isf$.
Through the Curry--Howard correspondence we further obtain a Geometry of Interaction Machine semantics of
the modal $\lambda$-calculus by Pfenning and Davies for staged computation.
\end{abstract}

\section{Introduction}
Linear logic discovered by Girard\,\cite{G:linear_logic} is, as he wrote, not an alternative logic but should be regarded as 
an ``extension'' of usual logics.
Whereas usual logics such as classical logic and intuitionistic logic admit the structural rules of weakening and contraction,
linear logic does not allow to use the rules freely, but it reintroduces them in a controlled manner by using the exponential modality `$!$' (and its dual `$?$'). 
Usual logics are then reconstructed 
in terms of linear logic with the power of the exponential modalities, via the Girard translation.
 
In this paper, we aim to extend the framework of linear-logical reconstruction to the $(\Box, \sarrow)$-fragment of intuitionistic modal logic S4\,($\isf$)
by establishing what we call ``modal linear logic'' and an S4-version of Girard translation from $\isf$ into it.
However, the crux to give a faithful translation is that 
a naive combination of the $\Box$-modality and the $!$-modality causes an undesirable interaction between the inference rules of the two modalities.
To solve the problem, 
we define the modal linear logic as an extension of intuitionistic multiplicative exponential linear logic with a modality `$\bangbox$~'\,(pronounced by ``bangbox'') that integrates `$\Box$' and `$!$',
and show that it admits a faithful translation from $\isf$.

As an application, we consider a computational interpretation of the modal linear logic.
A typed $\lambda$-calculus that we will define corresponds to a natural deduction for the modal linear logic through the Curry--Howard correspondence, 
and it can be seen as a reconstruction of the modal $\lambda$-calculus by
Pfenning and Davies\,\cite{PD:judgmental_reconstruction, DP:modal_analysis} for the so-called staged computation.
Thanks to our linear-logical reconstruction, we can further obtain a Geometry of Interaction Machine\,(GoIM) for the modal $\lambda$-calculus.

The remainder of this paper is organized as follows.
In Section~\ref{sec:background} we review some formalizations of linear logic and $\isf$.
In Section~\ref{sec:linear_logical_reconstruction} we explain a linear-logical reconstruction of $\isf$.
First, we discuss how a naive combination of linear logic and modal logic fails to obtain a faithful translation.
Then, we propose a modal linear logic with the $\bangbox$~-modality that admits a faithful translation from $\isf$.
In Section~\ref{sec:curry-howard} we give a computational interpretation of modal linear logic through a typed $\lambda$-calculus.
In Section~\ref{sec:axiomatization} we provide an axiomatization of modal linear logic by a Hilbert-style deductive system.
In Section~\ref{sec:goim} we obtain a GoIM of our typed $\lambda$-calculus as an application of our linear-logical reconstruction.
In Sections~\ref{sec:related_work} and \ref{sec:conclusion} we discuss related work and conclude our work, respectively.

\section{Preliminaries}
\label{sec:background}
We recall several systems of linear logic and modal logic.
In this paper, we consider the minimal setting to give an S4-version of Girard translation and its computational interpretation.
Thus, every system we will use only contain an implication and a modality as operators.

\subsection{Intuitionistic MELL and its Girard translation}

\begin{rulefigure}{fig:imell}{Definition of $\imell$.}
    \hspace{-1em}
  \begin{tabular}{c|c}
    {
    \hspace{-1em}
    \begin{minipage}{0.3\hsize}
      \vspace{-0.2em}
      \paragraph*{Syntactic category}
        \vspace{-1em}
        \begin{align*}
          \hspace{-1em}\text{Formulae}~~A, B, C ::= p ~|~ \ottnt{A}  \multimap  \ottnt{B} ~|~ \ottsym{!}  \ottnt{A}
        \end{align*}
    \end{minipage}
    }
    &
    {
    \hspace{-1em}
    \begin{minipage}{0.5\hsize}
      \vspace{0.2em}
      \paragraph*{Inference rule}
        \vspace{-0.8em}
        \begin{tabular}{c}
          \begin{minipage}{0.18\hsize}
            \begin{prooftree}
            \def\ScoreOverhang{2pt}
                \AXC{$ $}
              \RightLabel{$ \rax $}
              \UIC{$\ottnt{A} \, \vdash \, \ottnt{A}$}
            \end{prooftree}
          \end{minipage}

          \begin{minipage}{0.63\hsize}
            \begin{prooftree}
            \def\ScoreOverhang{2pt}
            \def\defaultHypSeparation{\hskip 0.5em}
                \AXC{$\Gamma \, \vdash \, \ottnt{A}$}
                \AXC{$\Gamma'  \ottsym{,}  \ottnt{A} \, \vdash \, \ottnt{B}$}
              \RightLabel{$ \rcut $}
              \BIC{$\Gamma  \ottsym{,}  \Gamma' \, \vdash \, \ottnt{B}$}
            \end{prooftree}
          \end{minipage}

          \begin{minipage}{0.18\hsize}
            \begin{prooftree}
            \def\ScoreOverhang{2pt}
                \AXC{$\ottsym{!}  \Gamma \, \vdash \, \ottnt{A}$}
              \RightLabel{$ \rprom $}
              \UIC{$\ottsym{!}  \Gamma \, \vdash \, \ottsym{!}  \ottnt{A}$}
            \end{prooftree}
          \end{minipage}
        \end{tabular}
    \end{minipage}
    }
  \end{tabular}
 
  \hspace{-0.75em}\rule{0.445\textwidth}{0.4pt}

  \hspace{-1em}
  \begin{tabular}{c}
    {
    \hspace{-1em}
    \begin{minipage}{\hsize}
        \begin{center}
        \vspace{-0.3em}
        \begin{tabular}{c}
          \begin{minipage}{0.195\hsize}
            \begin{prooftree}
            \def\ScoreOverhang{2pt}
                \AXC{$\Gamma  \ottsym{,}  \ottnt{A} \, \vdash \, \ottnt{B}$}
              \RightLabel{$ \rlimpr $}
              \UIC{$\Gamma \, \vdash \, \ottnt{A}  \multimap  \ottnt{B}$}
            \end{prooftree}
          \end{minipage}

          \begin{minipage}{0.275\hsize}
            \begin{prooftree}
            \def\defaultHypSeparation{\hskip 0.5em}
            \def\ScoreOverhang{2pt}
                \AXC{$\Gamma \, \vdash \, \ottnt{A}$}
                \AXC{$\Gamma'  \ottsym{,}  \ottnt{B} \, \vdash \, \ottnt{C}$}
              \RightLabel{$ \rlimpl $}
              \BIC{$\Gamma  \ottsym{,}  \Gamma'  \ottsym{,}  \ottnt{A}  \multimap  \ottnt{B} \, \vdash \, \ottnt{C}$}
            \end{prooftree}
          \end{minipage}

          \begin{minipage}{0.15\hsize}
            \begin{prooftree}
            \def\ScoreOverhang{2pt}
                \AXC{$\Gamma  \ottsym{,}  \ottnt{A} \, \vdash \, \ottnt{B}$}
              \RightLabel{$ \rdere $}
              \UIC{$\Gamma  \ottsym{,}  \ottsym{!}  \ottnt{A} \, \vdash \, \ottnt{B}$}
            \end{prooftree}
          \end{minipage}

          \begin{minipage}{0.16\hsize}
            \begin{prooftree}
            \def\ScoreOverhang{2pt}
                \AXC{$\Gamma \, \vdash \, \ottnt{B}$}
              \RightLabel{$ \rbweak $}
              \UIC{$\Gamma  \ottsym{,}  \ottsym{!}  \ottnt{A} \, \vdash \, \ottnt{B}$}
            \end{prooftree}
          \end{minipage}

          \begin{minipage}{0.16\hsize}
            \begin{prooftree}
            \def\ScoreOverhang{2pt}
                \AXC{$\Gamma  \ottsym{,}  \ottsym{!}  \ottnt{A}  \ottsym{,}  \ottsym{!}  \ottnt{A} \, \vdash \, \ottnt{B}$}
              \RightLabel{$ \rbcont $}
              \UIC{$\Gamma  \ottsym{,}  \ottsym{!}  \ottnt{A} \, \vdash \, \ottnt{B}$}
            \end{prooftree}
          \end{minipage}
        \end{tabular}
        \end{center}
    \end{minipage}
    }
  \end{tabular}

\end{rulefigure}

\begin{rulefigure}{fig:girard_translation}{Definition of the Girard translation from intuitionistic logic.}
    \begin{tabular}{c|c}
    {
      \begin{minipage}{0.60\hsize}
        \underline{$ \fgirardtrans{ \ottnt{A} } $}
        \vspace{-1.85em}
        \begin{align*}
          \hspace{1.5em} \fgirardtrans{ \ottmv{p} }  \defeq p, \hspace{3.5em}  \fgirardtrans{ \ottnt{A}  \supset  \ottnt{B} }  \defeq  (  \ottsym{!}   \fgirardtrans{ \ottnt{A} }   )   \multimap   \fgirardtrans{ \ottnt{B} } 
        \end{align*}
      \end{minipage}
    }
    &
    {
      \begin{minipage}{0.30\hsize}
        \underline{$ \fgirardtrans{ \Gamma } $}
        \vspace{-1.85em}
        \begin{align*}
          \hspace{1.5em} \fgirardtrans{ \Gamma }  &\defeq \{  \fgirardtrans{ \ottnt{A} }  ~|~ A \in \Gamma \}
        \end{align*}
      \end{minipage}
    }
  \end{tabular}

\end{rulefigure}

\noindent
Figure~\ref{fig:imell} shows the standard definition of the ($!$, $\slimp$)-fragment of \emph{intuitionistic multiplicative exponential linear logic}, which we refer to as $\imell$.
A formula is either a propositional variable, a linear implication, or an exponential modality.
We let $p$ range over the set of propositional variables, and $A$, $B$, $C$ range over formulae.
A \emph{context} $\Gamma$ is defined to be a multiset of formulae, and hence the exchange rule is assumed as a meta-level operation.
A \emph{judgment} consists of a context and a formula, written as $\Gamma \, \vdash \, \ottnt{A}$.
As a convention, we often write $\Gamma \, \vdash \, \ottnt{A}$ to mean that the judgment is derivable (and we assume similar conventions throughout this paper).
The notation $\ottsym{!}  \Gamma$ in the rule $\rprom$ denotes the multiset $\{ !A ~|~ A \in \Gamma \}$.

Figure~\ref{fig:girard_translation} defines the Girard translation\footnote{This is known to be the \emph{call-by-name} Girard translation (cf.\,\cite{M+:cbn_cbv}) and we only follow this version in later discussions. However, we conjectured that our work can apply to other versions.} from the $\sarrow$-fragment of
intuitionistic propositional logic $\il$.
For an $\il$-formula $A$, $ \fgirardtrans{ \ottnt{A} } $ will be an $\imell$-formula;
and $ \fgirardtrans{ \Gamma } $ a multiset of $\imell$-formulae.
Then, we can show that the Girard translation from $\il$ to $\imell$ is sound.

\begin{theorem}[Soudness of the translation]
  \label{thm:girard_translation}
  If $\Gamma \, \vdash \, \ottnt{A}$ in $\il$, then $\ottsym{!}   \fgirardtrans{ \Gamma }  \, \vdash \,  \fgirardtrans{ \ottnt{A} } $ in $\imell$.
\end{theorem}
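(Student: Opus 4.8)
The plan is to proceed by induction on the derivation of $\Gamma \vdash A$ in $\il$. Since $\il$ is the $\sarrow$-fragment, there are essentially only the axiom, $\sarrow$-introduction, and $\sarrow$-elimination cases to treat (together with the structural rules, if $\il$ is presented in sequent-calculus rather than natural-deduction style). Before starting the induction I would record two auxiliary observations about $\imell$ that are used repeatedly: (i) iterating $\rbweak$ and $\rbcont$ yields \emph{bulk} weakening ($\Delta \vdash B$ implies $\Delta, \ottsym{!}\Gamma \vdash B$) and \emph{bulk} contraction ($\Delta, \ottsym{!}\Gamma, \ottsym{!}\Gamma \vdash B$ implies $\Delta, \ottsym{!}\Gamma \vdash B$); and (ii) if $\ottsym{!}\Gamma \vdash A$ then $\ottsym{!}\Gamma \vdash \ottsym{!}A$ by one application of $\rprom$. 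These are exactly what allow the shared-context discipline of $\il$ to be simulated by the context-splitting multiplicative rules of $\imell$, once every formula of the translated context is banged.

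For the base case, an $\il$-axiom $\Gamma, A \vdash A$ is handled by starting from the $\imell$-axiom $\fgirardtrans{A} \vdash \fgirardtrans{A}$, applying $\rdere$ to get $\ottsym{!}\fgirardtrans{A} \vdash \fgirardtrans{A}$, and then adjoining $\ottsym{!}\fgirardtrans{\Gamma}$ by bulk weakening, which gives $\ottsym{!}\fgirardtrans{\Gamma}, \ottsym{!}\fgirardtrans{A} \vdash \fgirardtrans{A}$, i.e. $\ottsym{!}\fgirardtrans{\Gamma, A} \vdash \fgirardtrans{A}$. For $\sarrow$-introduction, from the premise $\Gamma, A \vdash B$ the induction hypothesis gives $\ottsym{!}\fgirardtrans{\Gamma}, \ottsym{!}\fgirardtrans{A} \vdash \fgirardtrans{B}$, and a single use of $\rlimpr$ yields $\ottsym{!}\fgirardtrans{\Gamma} \vdash (\ottsym{!}\fgirardtrans{A}) \slimp \fgirardtrans{B}$, which is by definition $\fgirardtrans{A \sarrow B}$.

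The $\sarrow$-elimination case is where the exponential bookkeeping matters, and I expect it to be the step that is the most delicate to present cleanly. From premises $\Gamma \vdash A \sarrow B$ and $\Gamma \vdash A$ the induction hypotheses give $\ottsym{!}\fgirardtrans{\Gamma} \vdash (\ottsym{!}\fgirardtrans{A}) \slimp \fgirardtrans{B}$ and $\ottsym{!}\fgirardtrans{\Gamma} \vdash \fgirardtrans{A}$. Applying observation (ii) to the latter (legitimate because $\ottsym{!}\fgirardtrans{\Gamma}$ consists of $!$-formulas) gives $\ottsym{!}\fgirardtrans{\Gamma} \vdash \ottsym{!}\fgirardtrans{A}$; combining this with the axiom $\fgirardtrans{B} \vdash \fgirardtrans{B}$ through $\rlimpl$ yields $\ottsym{!}\fgirardtrans{\Gamma}, (\ottsym{!}\fgirardtrans{A}) \slimp \fgirardtrans{B} \vdash \fgirardtrans{B}$; a $\rcut$ against the first hypothesis gives $\ottsym{!}\fgirardtrans{\Gamma}, \ottsym{!}\fgirardtrans{\Gamma} \vdash \fgirardtrans{B}$; and bulk contraction (observation (i)) collapses the duplicated context to $\ottsym{!}\fgirardtrans{\Gamma} \vdash \fgirardtrans{B}$, as required. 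If $\il$ is formulated as a sequent calculus, its weakening, contraction, and cut are discharged by the bulk-structural lemmas and $\rcut$ of $\imell$, and $\sarrow$-left is a minor variant of the elimination case above; this completes the induction.
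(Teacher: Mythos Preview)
Your proposal is correct and follows exactly the standard induction-on-derivations argument that the paper has in mind; indeed the paper states this classical result without proof, and for the analogous S4 version (Theorem~\ref{thm:modal_girard_translation}) merely remarks that it ``is readily shown by induction on the derivation.'' Your careful handling of the exponential bookkeeping in the $\sarrow$-elimination case (promotion, $\rlimpl$ plus $\rcut$, then bulk contraction) is precisely the intended unfolding of that remark.
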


\subsection{Intuitionistic S4}

We review a formalization of the $(\Box, \sarrow)$-fragment of intuitionistic propositional modal logic S4\,($\isf$).
In what follows, we use a sequent calculus for the logic, called $\ljbox$.
The calculus $\ljbox$ used here is defined in a standard manner
in the literature\,(e.g. it can be seen as the $\isf$-fragment of \textbf{G1s} for classical modal logic S4 by Troelstra and Schwichtenberg\,\cite{TS:basic_proof_theory}).

Figure~\ref{fig:ljbox} shows the definition of $\ljbox$. 
A formula is either a propositional variable, an intuitionistic implication, or a box modality.
A \emph{context} and a \emph{judgment} are defined similarly in $\imell$. 
The notation $ \Box \Gamma $ in the rule $\rboxr$ denotes the multiset $\{  \Box \ottnt{A}  ~|~ A \in \Gamma \}$.

\begin{rulefigure}{fig:ljbox}{Definition of $\ljbox$.}
    \hspace{-1em}
  \begin{tabular}{c|c}
    {
      \hspace{-1em}
      \begin{minipage}{0.28\hsize}
        \vspace{-0.2em}
        \paragraph*{Syntactic category}
          \vspace{-1em}
          \begin{align*}
            \hspace{-1em}\text{Formulae}~~A, B, C ::= p ~|~ \ottnt{A}  \supset  \ottnt{B} ~|~  \Box \ottnt{A} 
          \end{align*}
      \end{minipage}
    }
    &
    {
      \hspace{-1em}
      \begin{minipage}{0.50\hsize}
        \vspace{0.2em}
        \paragraph*{Inference rule}
          \vspace{-0.8em}
          \begin{tabular}{c}
            \begin{minipage}{0.165\hsize}
              \begin{prooftree}
              \def\ScoreOverhang{2pt}
                  \AXC{$ $}
                \RightLabel{$ \rax $}
                \UIC{$\ottnt{A} \, \vdash \, \ottnt{A}$}
              \end{prooftree}
            \end{minipage}

            \begin{minipage}{0.598\hsize}
              \begin{prooftree}
              \def\ScoreOverhang{2pt}
              \def\defaultHypSeparation{\hskip 0.5em}
                  \AXC{$\Gamma \, \vdash \, \ottnt{A}$}
                  \AXC{$\Gamma'  \ottsym{,}  \ottnt{A} \, \vdash \, \ottnt{B}$}
                \RightLabel{$ \rcut $}
                \BIC{$\Gamma  \ottsym{,}  \Gamma' \, \vdash \, \ottnt{B}$}
              \end{prooftree}
            \end{minipage}

            \begin{minipage}{0.17\hsize}
              \begin{prooftree}
              \def\ScoreOverhang{2pt}
                  \AXC{$ \Box \Gamma  \, \vdash \, \ottnt{A}$}
                \RightLabel{$ \rboxr $}
                \UIC{$ \Box \Gamma  \, \vdash \,  \Box \ottnt{A} $}
              \end{prooftree}
            \end{minipage}
          \end{tabular}
      \end{minipage}
    }
  \end{tabular}

  \hspace{-0.75em}\rule{0.445\textwidth}{0.4pt}

  \hspace{-1em}
  \begin{tabular}{c}
  {
    \hspace{-1em}
    \begin{minipage}{\hsize}
        \begin{center}
        \vspace{-0.3em}
        \begin{tabular}{c}
          \begin{minipage}{0.18\hsize}
            \begin{prooftree}
              \def\ScoreOverhang{2pt}
                \AXC{$\Gamma  \ottsym{,}  \ottnt{A} \, \vdash \, \ottnt{B}$}
              \RightLabel{$ \rimpr $}
              \UIC{$\Gamma \, \vdash \, \ottnt{A}  \supset  \ottnt{B}$}
            \end{prooftree}
          \end{minipage}

          \begin{minipage}{0.30\hsize}
            \begin{prooftree}
              \def\ScoreOverhang{2pt}
                \AXC{$\Gamma \, \vdash \, \ottnt{A}$}
                \AXC{$\Gamma'  \ottsym{,}  \ottnt{B} \, \vdash \, \ottnt{C}$}
              \RightLabel{$ \rimpl $}
              \BIC{$\Gamma  \ottsym{,}  \Gamma'  \ottsym{,}  \ottnt{A}  \supset  \ottnt{B} \, \vdash \, \ottnt{C}$}
            \end{prooftree}
          \end{minipage}

          \begin{minipage}{0.165\hsize}
            \begin{prooftree}
              \def\ScoreOverhang{2pt}
                \AXC{$\Gamma  \ottsym{,}  \ottnt{A} \, \vdash \, \ottnt{B}$}
              \RightLabel{$ \rboxl $}
              \UIC{$\Gamma  \ottsym{,}   \Box \ottnt{A}  \, \vdash \, \ottnt{B}$}
            \end{prooftree}
          \end{minipage}

          \begin{minipage}{0.16\hsize}
            \begin{prooftree}
              \def\ScoreOverhang{2pt}
                \AXC{$\Gamma \, \vdash \, \ottnt{B}$}
              \RightLabel{$ \rweak $}
              \UIC{$\Gamma  \ottsym{,}  \ottnt{A} \, \vdash \, \ottnt{B}$}
            \end{prooftree}
          \end{minipage}

          \begin{minipage}{0.16\hsize}
            \begin{prooftree}
              \def\ScoreOverhang{2pt}
                \AXC{$\Gamma  \ottsym{,}  \ottnt{A}  \ottsym{,}  \ottnt{A} \, \vdash \, \ottnt{B}$}
              \RightLabel{$ \rcont $}
              \UIC{$\Gamma  \ottsym{,}  \ottnt{A} \, \vdash \, \ottnt{B}$}
            \end{prooftree}
          \end{minipage}
        \end{tabular}
        \end{center}
    \end{minipage}
  }
  \end{tabular}

\end{rulefigure}

\begin{remark}
  It is worth noting that 
  the $!$-exponential in $\imell$ and the $\Box$-modality in $\ljbox$ have similar structures.
  To see this, let us imagine the rules $\rboxr$ and $\rboxl$ replacing the symbol `$\Box$' with `$!$'. The results will be exactly the same as $\rprom$ and $\rdere$.
  In fact, the $!$-exponential satisfies the S4 axiomata in $\imell$, which is the reason we also call it as a modality.
\end{remark}

\subsection{Typed $\lambda$-calculus of the intuitionistic S4}
We review the modal $\lambda$-calculus developed by Pfenning and Davies\,\cite{PD:judgmental_reconstruction, DP:modal_analysis}, which we call $\lambdabox$.
The system $\lambdabox$ is essentially the same calculus as $\lambda^{\rightarrow\Box}_{e}$ in \cite{DP:modal_analysis}, although some syntax are changed to fit our notation in this paper.
$\lambdabox$ is known to correspond to a natural deduction system for $\isf$, as is shown in \cite{PD:judgmental_reconstruction}.

\begin{rulefigure}{fig:lambdabox}{Definition of $\lambdabox$.}
    \hspace{-1em}
  \begin{tabular}{c|c}
    {
    \hspace{-1em}
    \begin{minipage}{0.50\hsize}
      \paragraph*{Syntactic category}
        \vspace{-1em}
        \begin{alignat*}{3}
          &\hspace{-1em}\text{Types}~&A, B, C &::= p ~|~ \ottnt{A}  \supset  \ottnt{B} ~|~  \Box \ottnt{A} \\
          &\hspace{-1em}\text{Terms}~&M, N, L &::= x ~|~ \lambda  \ottmv{x}  \ottsym{:}  \ottnt{A}  \ottsym{.}  \ottnt{M} ~|~  \ottnt{M} \, \ottnt{N} \\
          &&&\,|~  \Box  \ottnt{M}  ~|~ \ottkw{let} \,  \Box  \ottmv{x}   \ottsym{=}  \ottnt{M} \, \ottkw{in} \, \ottnt{N}
        \end{alignat*}
    \end{minipage}
    }
    &
    {
    \hspace{-1em}
    \begin{minipage}{0.45\hsize}
      \vspace{-1.5em}
      \paragraph*{Reduction rule}
        \vspace{-1em}
        \begin{alignat*}{3}
          &\hspace{-1em}\rbetaimp\hspace{0.5em}& & \ottsym{(}  \lambda  \ottmv{x}  \ottsym{:}  \ottnt{A}  \ottsym{.}  \ottnt{M}  \ottsym{)} \, \ottnt{N} &             & \leadsto   \ottnt{M}  [  \ottmv{x}  :=  \ottnt{N}  ] \\
          &\hspace{-1em}\rbetabox\hspace{0.5em}& &\ottkw{let} \,  \Box  \ottmv{x}   \ottsym{=}   \Box  \ottnt{N}  \, \ottkw{in} \, \ottnt{M}& & \leadsto   \ottnt{M}  [  \ottmv{x}  :=  \ottnt{N}  ] 
        \end{alignat*}
    \end{minipage}
    }
  \end{tabular}

  \rule{\textwidth}{0.4pt}

  \hspace{-1em}
  \begin{tabular}{c}
    {
    \hspace{-1em}
    \begin{minipage}{0.9\hsize}
      \vspace{0.2em}
      \paragraph*{Typing rule}
        \vspace{-0.5em}
        \begin{center}
        \begin{tabular}{c}
          \begin{minipage}{0.45\hsize}
            \begin{prooftree}
                \AXC{$ $}
              \RightLabel{$ \rax $}
              \UIC{$\Delta  \ottsym{;}  \Gamma  \ottsym{,}  \ottmv{x}  \ottsym{:}  \ottnt{A} \, \vdash \, \ottmv{x}  \ottsym{:}  \ottnt{A}$}
            \end{prooftree}
          \end{minipage}

          \begin{minipage}{0.45\hsize}
            \begin{prooftree}
                \AXC{$ $}
              \RightLabel{$ \rmax $}
              \UIC{$\Delta  \ottsym{,}  \ottmv{x}  \ottsym{:}  \ottnt{A}  \ottsym{;}  \Gamma \, \vdash \, \ottmv{x}  \ottsym{:}  \ottnt{A}$}
            \end{prooftree}
          \end{minipage}
        \end{tabular}

        \begin{tabular}{c}
          \begin{minipage}{0.42\hsize}
            \begin{prooftree}
                \AXC{$\Delta  \ottsym{;}  \Gamma  \ottsym{,}  \ottmv{x}  \ottsym{:}  \ottnt{A} \, \vdash \, \ottnt{M}  \ottsym{:}  \ottnt{B}$}
              \RightLabel{$ \rimpi $}
              \UIC{$\Delta  \ottsym{;}  \Gamma \, \vdash \, \ottsym{(}  \lambda  \ottmv{x}  \ottsym{:}  \ottnt{A}  \ottsym{.}  \ottnt{M}  \ottsym{)}  \ottsym{:}  \ottnt{A}  \supset  \ottnt{B}$}
            \end{prooftree}
          \end{minipage}

          \begin{minipage}{0.52\hsize}
            \begin{prooftree}
                \AXC{$\Delta  \ottsym{;}  \Gamma \, \vdash \, \ottnt{M}  \ottsym{:}  \ottnt{A}  \supset  \ottnt{B}$}
                \AXC{$\Delta  \ottsym{;}  \Gamma \, \vdash \, \ottnt{N}  \ottsym{:}  \ottnt{A}$}
              \RightLabel{$ \rimpe $}
              \BIC{$\Delta  \ottsym{;}  \Gamma \, \vdash \,  \ottnt{M} \, \ottnt{N}   \ottsym{:}  \ottnt{B}$}
            \end{prooftree}
          \end{minipage}
        \end{tabular}

        \begin{tabular}{c}
          \begin{minipage}{0.38\hsize}
            \begin{prooftree}
                \AXC{$\Delta  \ottsym{;}  \emptyset \, \vdash \, \ottnt{M}  \ottsym{:}  \ottnt{A}$}
              \RightLabel{$ \rboxi $}
              \UIC{$\Delta  \ottsym{;}  \Gamma \, \vdash \,  \Box  \ottnt{M}   \ottsym{:}   \Box \ottnt{A} $}
            \end{prooftree}
          \end{minipage}

          \begin{minipage}{0.55\hsize}
            \begin{prooftree}
                \AXC{$\Delta  \ottsym{;}  \Gamma \, \vdash \, \ottnt{M}  \ottsym{:}   \Box \ottnt{A} $}
                \AXC{$\Delta  \ottsym{,}  \ottmv{x}  \ottsym{:}  \ottnt{A}  \ottsym{;}  \Gamma \, \vdash \, \ottnt{N}  \ottsym{:}  \ottnt{B}$}
              \RightLabel{$ \rboxe $}
              \BIC{$\Delta  \ottsym{;}  \Gamma \, \vdash \, \ottkw{let} \,  \Box  \ottmv{x}   \ottsym{=}  \ottnt{M} \, \ottkw{in} \, \ottnt{N}  \ottsym{:}  \ottnt{B}$}
            \end{prooftree}
          \end{minipage}
        \end{tabular}
        \end{center}
    \end{minipage}
    }
  \end{tabular}

\end{rulefigure}

Figure~\ref{fig:lambdabox} shows the definition of $\lambdabox$.
The set of types corresponds to that of formulae of $\isf$.
We let $x$ range over the set of term variables, and $M, N, L$ range over the set of terms.
The first three terms are as in the simply-typed $\lambda$-calculus.
The terms $ \Box  \ottnt{M} $ and $\ottkw{let} \,  \Box  \ottmv{x}   \ottsym{=}  \ottnt{M} \, \ottkw{in} \, \ottnt{N}$ is used to represent a constructor and a destructor for types $ \Box \ottnt{A} $, respectively.
The variable $x$ in $\lambda  \ottmv{x}  \ottsym{:}  \ottnt{A}  \ottsym{.}  \ottnt{M}$ and $\ottkw{let} \,  \Box  \ottmv{x}   \ottsym{=}  \ottnt{M} \, \ottkw{in} \, \ottnt{N}$ is supposed to be \emph{bound} in the usual sense
and the \emph{scope} of the biding is $M$ and $N$, respectively.
The set of \emph{free} (i.e., unbound) variables in $M$ is denoted by $\fv{M}$.
We write the \emph{capture-avoiding substitution} $ \ottnt{M}  [  \ottmv{x}  :=  \ottnt{N}  ] $ to denote the result of replacing $N$ for every free occurrence of $x$ in $M$.

A \emph{(type) context} is defined to be the set of pairs of a term variable $x_i$ and a type $A_i$
such that all the variables are distinct, which is written as $x_1 : A_1, \cdots, x_n : A_n$ and is denoted by $\Gamma$, $\Delta$, $\Sigma$, etc.
Then, a \emph{(type) judgment} is defined, in the so-called \emph{dual-context} style, to consists of two contexts, a term, and a type, written as $\Delta  \ottsym{;}  \Gamma \, \vdash \, \ottnt{M}  \ottsym{:}  \ottnt{A}$.

The intuition behind the judgment $\Delta  \ottsym{;}  \Gamma \, \vdash \, \ottnt{M}  \ottsym{:}  \ottnt{A}$ is that the context $\Delta$ is intended to implicitly represent assumptions for types of form $ \Box \ottnt{A} $,
while the context $\Gamma$ is used to represent ordinary assumptions as in the simply-typed $\lambda$-calculus.

The typing rules are summarized as follows.
$\rax$, $\rimpi$, and $\rimpe$ are all standard, although they are defined in the dual-context style.
$\rmax$ is another variable rule, which can be seen as what to formalize
the modal axiom \textit{T} (i.e., $\vdash \,  \Box \ottnt{A}  \supset  \ottnt{A} $) from the logical viewpoint.
$\rboxi$ is a rule for the constructor of $ \Box \ottnt{A} $, which corresponds to the necessitation rule for the $\Box$-modality.
Similarly, $\rboxe$ is for the destructor of $ \Box \ottnt{A} $, which corresponds to the elimination rule.

The \emph{reduction} $ \leadsto $ is defined to be the least compatible relation on terms generated by $\rbetaimp$ and $\rbetabox$.
The \emph{multistep reduction} $ \leadsto^+ $ is defined to be the transitive closure of $ \leadsto $.

\section{Linear-logical reconstruction}
\label{sec:linear_logical_reconstruction}

\subsection{Naive attempt at the linear-logical reconstruction}

It is natural for a ``linear-logical reconstruction'' of $\isf$
to define a system that has both properties of linear logic and modal logic, so as to be a target system for an S4-version of Girard translation.
However, a naive combination of linear logic and modal logic is not suitable to establish a faithful translation.

Let us consider what happens if we adopt a naive system.
The simplest way to define a target system for the S4-version of Girard translation is to make
an extension of $\imell$ with the $\Box$-modality.
Suppose that a deductive system $\imellbox$ is such a calculus, that is, 
the formulae of $\imellbox$ are defined by the following grammar:
  \begin{align*}
    A, B &::= p ~|~ \ottnt{A}  \multimap  \ottnt{B} ~|~ \ottsym{!}  \ottnt{A} ~|~  \Box \ottnt{A} 
  \end{align*}
with the inference rules being those of $\imell$, along with the rules $\rboxr$ and $\rboxl$ of $\ljbox$.

As in the case of Girard translation from $\il$ to $\imell$, 
we have to establish the following theorem for some translation $\fgirardtrans{-}$:

\begin{center}
  If $\Gamma \, \vdash \, \ottnt{A}$ is derivable in $\ljbox$, then so is $\ottsym{!}   \fgirardtrans{ \Gamma }  \, \vdash \,  \fgirardtrans{ \ottnt{A} } $ in $\imellbox$.
\end{center}

\noindent
but, if we extend our previous translation $\fgirardtrans{-}$ from $\il$ to $\imell$
with $ \fgirardtrans{  \Box \ottnt{A}  }  \defeq  \Box  \fgirardtrans{ \ottnt{A} }  $,
we get stuck in the case of $\rboxr$. This is because we need to establish the inference $\rpseudonec$ in Figure~\ref{fig:bad_translation},
which means that we have to be able to obtain a derivation of form $\ottsym{!}   \fgirardtrans{  \Box \Gamma  }  \, \vdash \,  \Box  \fgirardtrans{ \ottnt{A} }  $ from that of $\ottsym{!}   \fgirardtrans{  \Box \Gamma  }  \, \vdash \,  \fgirardtrans{ \ottnt{A} } $ in $\imellbox$.

\begin{figure}[htbp]
    \begin{tabular}{|c|c|}
      \hline
      {
      \begin{minipage}{0.46\hsize}
        {
          \vspace{-1em}
          \begin{tabular}{c}
            \hspace{-1.5em}
            \begin{minipage}{0.355\hsize}
              \begin{prooftree}
                      \AXC{$ \vdots $}
                    \noLine
                    \UIC{$ \Box \Gamma  \, \vdash \, \ottnt{A}$}
                  \RightLabel{$ \rboxr $}
                  \UIC{$ \Box \Gamma  \, \vdash \,  \Box \ottnt{A} $}
                \noLine
                \UIC{ in $\ljbox$ }
              \end{prooftree}
            \end{minipage}
          
            \begin{minipage}{0.195\hsize}
              \[
                \xymatrix{
                   \ar@{|->}[r]^-{ \fgirardtrans{-} } & 
                }
              \]
            \end{minipage}
          
            \begin{minipage}{0.30\hsize}
              \begin{prooftree}
                      \AXC{$ \vdots $}
                    \noLine
                    \UIC{$\ottsym{!}   \fgirardtrans{  \Box \Gamma  }  \, \vdash \,  \fgirardtrans{ \ottnt{A} } $}
                  \doubleLine
                  \RightLabel{$ \rpseudonec $}
                  \UIC{$\ottsym{!}   \fgirardtrans{  \Box \Gamma  }  \, \vdash \,  \fgirardtrans{  \Box \ottnt{A}  } $}
                \noLine
                \UIC{ in $\imellbox$ }
              \end{prooftree}
            \end{minipage}
          \end{tabular}
        }
        \caption{Translation for the case of $\rboxr$.}
        \label{fig:bad_translation}
      \end{minipage}
      }
      &
      {
      \begin{minipage}{0.45\hsize}
        \begin{tabular}{c}
          \begin{minipage}{0.90\hsize}
            \begin{prooftree}
                \AXC{$ \Box \ottsym{(}  \ottmv{p}  \supset  \ottmv{q}  \ottsym{)}  \ottsym{,}   \Box \ottmv{p}   \, \vdash \, \ottmv{q}$}
              \RightLabel{$ \rboxr $}
              \UIC{$ \Box \ottsym{(}  \ottmv{p}  \supset  \ottmv{q}  \ottsym{)}  \ottsym{,}   \Box \ottmv{p}   \, \vdash \,  \Box \ottmv{q} $}
            \end{prooftree}
          \end{minipage}
          \vspace{-0.7em}
        \end{tabular}
        \caption{ Valid inference in $\ljbox$. }
        \label{fig:valid_inference}

        \vspace{-0.5em}
        \rule{\textwidth}{0.4pt}

        \begin{tabular}{c}
          \begin{minipage}{0.90\hsize}
            \begin{prooftree}
                \AXC{$\ottsym{!}   \Box \ottsym{(}  \ottsym{!}  \ottmv{p}  \multimap  \ottmv{q}  \ottsym{)}  \ottsym{,}  \ottsym{!}   \Box \ottmv{p}   \, \vdash \, \ottmv{q}$} 
              \RightLabel{$ \rboxr $}
              \UIC{$\ottsym{!}   \Box \ottsym{(}  \ottsym{!}  \ottmv{p}  \multimap  \ottmv{q}  \ottsym{)}  \ottsym{,}  \ottsym{!}   \Box \ottmv{p}   \, \vdash \,  \Box \ottmv{q} $}
            \end{prooftree}
          \end{minipage}
          \vspace{-0.7em}
        \end{tabular}
        \caption{ Invalid inference in $\imellbox$. }
        \label{fig:invalid_inference}
      \end{minipage}
      }\\
      \hline
    \end{tabular}
\end{figure}

However, the inference $\rpseudonec$ is invalid in $\imellbox$ in general,
because there exists a counterexample. First,
the inference shown in Figure~\ref{fig:valid_inference} is valid, and the judgment $ \Box \ottsym{(}  \ottmv{p}  \supset  \ottmv{q}  \ottsym{)}  \ottsym{,}   \Box \ottmv{p}   \, \vdash \,  \Box \ottmv{q} $ is indeed derivable in $\ljbox$.
However, the corresponding inference via $\fgirardtrans{-}$ is invalid as Figure~\ref{fig:invalid_inference} shows. In the figure, the judgments correspond to those in Figure~\ref{fig:valid_inference} via $\fgirardtrans{ - }$, 
but the inference $\rboxr$ in Figure~\ref{fig:invalid_inference} is invalid in $\imellbox$ due to the side-condition of $\rboxr$.
Even worse, we can see that the judgment $\ottsym{!}   \Box \ottsym{(}  \ottsym{!}  \ottmv{p}  \multimap  \ottmv{q}  \ottsym{)}  \ottsym{,}  \ottsym{!}   \Box \ottmv{p}   \, \vdash \,  \Box \ottmv{q} $ is itself underivable in
$\imellbox$\footnote{Precisely speaking, this can be shown as a consequence of the cut-elimination theorem of $\imellbox$,
and the theorem was shown in the authors' previous work\,\cite{FY:higher-arity}.}.

Moreover, one may think the other cases that we extend the original translation
$\fgirardtrans{-}$ from $\il$ to $\imell$ with
$ \fgirardtrans{  \Box \ottnt{A}  }  \defeq ~\ottsym{!}   \Box  \fgirardtrans{ \ottnt{A} }  $ or $ \fgirardtrans{  \Box \ottnt{A}  }  \defeq  \Box \ottsym{!}   \fgirardtrans{ \ottnt{A} }  $ will work to obtain a faithful translation.
However, the judgment $ \Box \ottmv{p}  \, \vdash \,  \Box  \Box \ottmv{p}  $ will be a counter-example in either case.
 
All in all, the problem of the naive combination formulated as $\imellbox$
intuitively came from an undesirable interaction between the right rules of the two modalities:

\begin{center}
\begin{tabular}{c}
  \begin{minipage}{0.30\hsize}
    \begin{prooftree}
        \AXC{$\ottsym{!}  \Gamma \, \vdash \, \ottnt{A}$}
      \RightLabel{$ \rprom $}
      \UIC{$\ottsym{!}  \Gamma \, \vdash \, \ottsym{!}  \ottnt{A}$}
    \end{prooftree}
  \end{minipage}

  \begin{minipage}{0.30\hsize}
    \begin{prooftree}
        \AXC{$ \Box \Gamma  \, \vdash \, \ottnt{A}$}
      \RightLabel{$ \rboxr $}
      \UIC{$ \Box \Gamma  \, \vdash \,  \Box \ottnt{A} $}
    \end{prooftree}
  \end{minipage}
\end{tabular}
\end{center}

\noindent
Each of these rules has a side-condition: the conclusion $!A$ in $\rprom$ must be derived from the modalized context $\ottsym{!}  \Gamma$, and similarly for $ \Box \ottnt{A} $ in $\rboxr$. This makes it hard to obtain a faithful S4-version of Girard translation for this naive extension.

\subsection{Modal linear logic}

We propose a \emph{modal linear logic} to give a faithful S4-version of Girard translation from $\isf$.

First of all, the problem we have identified essentially came from the fact that there is no relationship between `$!$' and `$\Box$',
and hence the side-conditions of $\rprom$ and $\rboxr$ do not hold when we intuitively expect them to hold.
Thus, we introduce a modality, `$\bangbox$~' combining `$!$' and `$\Box$', to solve this problem.

\begin{rulefigure}{fig:imellbangbox}{Definition of $\imellbangbox$.}
    \hspace{-1em}
  \begin{tabular}{c|c}
    {
    \hspace{-1em}
    \begin{minipage}{0.5\hsize}
      \vspace{-0.2em}
      \paragraph*{Syntactic category}
        \vspace{-1em}
        \begin{align*}
          \text{Formulae}~~A, B, C ::= p ~|~ \ottnt{A}  \multimap  \ottnt{B} ~|~ \ottsym{!}  \ottnt{A} ~|~  \bangbox \ottnt{A} 
        \end{align*}
    \end{minipage}
    }
    &
    {
    \hspace{-1em}
    \begin{minipage}{0.5\hsize}
      \vspace{0.2em}
      \paragraph*{Inference rule}
        \vspace{-0.8em}
        \begin{tabular}{c}
          \begin{minipage}{0.30\hsize}
            \begin{prooftree}
              \def\ScoreOverhang{2pt}
                \AXC{$ $}
              \RightLabel{$ \rax $}
              \UIC{$\ottnt{A} \, \vdash \, \ottnt{A}$}
            \end{prooftree}
          \end{minipage}

          \begin{minipage}{0.60\hsize}
            \begin{prooftree}
              \def\ScoreOverhang{2pt}
              \def\defaultHypSeparation{\hskip 0.5em}
                \AXC{$\Gamma \, \vdash \, \ottnt{A}$}
                \AXC{$\ottnt{A}  \ottsym{,}  \Gamma' \, \vdash \, \ottnt{B}$}
              \RightLabel{$ \rcut $}
              \BIC{$\Gamma  \ottsym{,}  \Gamma' \, \vdash \, \ottnt{B}$}
            \end{prooftree}
          \end{minipage}
        \end{tabular}
    \end{minipage}
    }
  \end{tabular}

  \hspace{-0.75em}\rule{0.53\textwidth}{0.4pt}

  \hspace{-1em}
  \begin{tabular}{c}
    {
    \hspace{-1em}
    \begin{minipage}{\hsize}
        \vspace{-0.3em}
        \begin{tabular}{c}
          \begin{minipage}{0.195\hsize}
            \begin{prooftree}
              \def\ScoreOverhang{2pt}
                \AXC{$\Gamma  \ottsym{,}  \ottnt{A} \, \vdash \, \ottnt{B}$}
              \RightLabel{$ \rlimpr $}
              \UIC{$\Gamma \, \vdash \, \ottnt{A}  \multimap  \ottnt{B}$}
            \end{prooftree}
          \end{minipage}

          \begin{minipage}{0.27\hsize}
            \begin{prooftree}
              \def\ScoreOverhang{2pt}
              \def\defaultHypSeparation{\hskip 0.5em}
                \AXC{$\Gamma \, \vdash \, \ottnt{A}$}
                \AXC{$\Gamma'  \ottsym{,}  \ottnt{B} \, \vdash \, \ottnt{C}$}
              \RightLabel{$ \rlimpl $}
              \BIC{$\Gamma  \ottsym{,}  \Gamma'  \ottsym{,}  \ottnt{A}  \multimap  \ottnt{B} \, \vdash \, \ottnt{C}$}
            \end{prooftree}
          \end{minipage}

          \begin{minipage}{0.18\hsize}
            \begin{prooftree}
              \def\ScoreOverhang{2pt}
                \AXC{$ \bangbox \Delta   \ottsym{,}  \ottsym{!}  \Gamma \, \vdash \, \ottnt{A}$}
              \RightLabel{$ \rbangr $}
              \UIC{$ \bangbox \Delta   \ottsym{,}  \ottsym{!}  \Gamma \, \vdash \, \ottsym{!}  \ottnt{A}$}
            \end{prooftree}
          \end{minipage}

          \begin{minipage}{0.16\hsize}
            \begin{prooftree}
              \def\ScoreOverhang{2pt}
                \AXC{$\Gamma  \ottsym{,}  \ottnt{A} \, \vdash \, \ottnt{B}$}
              \RightLabel{$ \rbangl $}
              \UIC{$\Gamma  \ottsym{,}  \ottsym{!}  \ottnt{A} \, \vdash \, \ottnt{B}$}
            \end{prooftree}
          \end{minipage}

          \begin{minipage}{0.16\hsize}
            \begin{prooftree}
              \def\ScoreOverhang{2pt}
                \AXC{$ \bangbox \Delta  \, \vdash \, \ottnt{A}$}
              \RightLabel{$ \rbangboxr $}
              \UIC{$ \bangbox \Delta  \, \vdash \,  \bangbox \ottnt{A} $}
            \end{prooftree}
          \end{minipage}
        \end{tabular}

        \begin{tabular}{c}
          \begin{minipage}{0.175\hsize}
            \begin{prooftree}
              \def\ScoreOverhang{2pt}
                \AXC{$\Gamma  \ottsym{,}  \ottnt{A} \, \vdash \, \ottnt{B}$}
              \RightLabel{$ \rbangboxl $}
              \UIC{$\Gamma  \ottsym{,}   \bangbox \ottnt{A}  \, \vdash \, \ottnt{B}$}
            \end{prooftree}
          \end{minipage}

          \begin{minipage}{0.178\hsize}
            \begin{prooftree}
              \def\ScoreOverhang{2pt}
                \AXC{$\Gamma \, \vdash \, \ottnt{B}$}
              \RightLabel{$ \rbweak $}
              \UIC{$\Gamma  \ottsym{,}  \ottsym{!}  \ottnt{A} \, \vdash \, \ottnt{B}$}
            \end{prooftree}
          \end{minipage}

          \begin{minipage}{0.20\hsize}
            \begin{prooftree}
              \def\ScoreOverhang{2pt}
                \AXC{$\Gamma  \ottsym{,}  \ottsym{!}  \ottnt{A}  \ottsym{,}  \ottsym{!}  \ottnt{A} \, \vdash \, \ottnt{B}$}
              \RightLabel{$ \rbcont $}
              \UIC{$\Gamma  \ottsym{,}  \ottsym{!}  \ottnt{A} \, \vdash \, \ottnt{B}$}
            \end{prooftree}
          \end{minipage}

          \begin{minipage}{0.185\hsize}
            \begin{prooftree}
              \def\ScoreOverhang{2pt}
                \AXC{$\Gamma \, \vdash \, \ottnt{B}$}
              \RightLabel{$ \rbbweak $}
              \UIC{$\Gamma  \ottsym{,}   \bangbox \ottnt{A}  \, \vdash \, \ottnt{B}$}
            \end{prooftree}
          \end{minipage}

          \begin{minipage}{0.20\hsize}
            \begin{prooftree}
              \def\ScoreOverhang{2pt}
                \AXC{$\Gamma  \ottsym{,}   \bangbox \ottnt{A}   \ottsym{,}   \bangbox \ottnt{A}  \, \vdash \, \ottnt{B}$}
              \RightLabel{$ \rbbcont $}
              \UIC{$\Gamma  \ottsym{,}   \bangbox \ottnt{A}  \, \vdash \, \ottnt{B}$}
            \end{prooftree}
          \end{minipage}
        \end{tabular}
      \end{minipage}
    }
  \end{tabular}

\end{rulefigure}

Our modal linear logic, which is called $\imellbangbox$, is defined by a sequent calculus which is given in Figure~\ref{fig:imellbangbox}.
As we mentioned, the formulae are defined as an extension of those of $\imell$ with the $\bangbox$~-modality.
A point is that the $!$-modality is still there with the $\bangbox~$-modality.

The $\bangbox~$-modality is defined so as to have properties of both `$!$' and `$\Box$',
but `$!$' still behaves similarly to $\imell$.
Therefore, all the intuitions of the inference rules except $\rbangr$ and $\rbangboxr$ should be clear.
The rules $\rbangr$ and $\rbangboxr$ reflect the ``strength'' between the modalities `$!$' and `$\bangbox$~'.
Indeed, `$!$' and `$\bangbox~$' satisfy the S4 axiomata and `$\bangbox~$' is stronger than `$!$'.

\begin{example}
  \label{ex:axiomata}
  The following hold:
  \begin{enumerate}
    \item $\vdash \, \ottsym{!}  \ottnt{A}  \multimap  \ottnt{A}$ and $\vdash \,  \bangbox \ottnt{A}  \multimap  \ottnt{A} $
    \item $\vdash \, \ottsym{!}   (  \ottnt{A}  \multimap  \ottnt{B}  )   \multimap  \ottsym{!}  \ottnt{A}  \multimap  \ottsym{!}  \ottnt{B}$ and $\vdash \,  \bangbox  (  \ottnt{A}  \multimap  \ottnt{B}  )   \multimap   \bangbox \ottnt{A}   \multimap   \bangbox \ottnt{B}  $
    \item $\vdash \, \ottsym{!}  \ottnt{A}  \multimap  \ottsym{!}  \ottsym{!}  \ottnt{A}$ and $\vdash \,  \bangbox \ottnt{A}  \multimap   \bangbox\,\,\bangbox \ottnt{A}  $
    \item $\vdash \,  \bangbox \ottnt{A}  \multimap  \ottsym{!}  \ottnt{A} $ but $\nvdash \, \ottsym{!}  \ottnt{A}  \multimap   \bangbox \ottnt{A} $
  \end{enumerate}
\end{example}

\begin{remark}
  In Example~\ref{ex:axiomata}, the first three represent the so-called S4 axiomata: \textit{T}, \textit{K}, and \textit{4}.
  The last one represents the strength of the two modalities.
  Actually, assuming the $!$-modality and the $\bangbox~$-modality to satisfy the S4 axiomata and the ``strength'' axiom $\vdash \,  \bangbox \ottnt{A}  \multimap  \ottsym{!}  \ottnt{A} $
  is enough to characterize our modal linear logic (see Section~\ref{sec:axiomatization} for more details).
\end{remark}

The cut-elimination theorem for $\imellbangbox$ is shown similarly to the case of $\imell$, and hence $\imellbangbox$ is consistent.
The addition of `$\bangbox~$' causes no problems in the proof.

\begin{definition}[Cut-degree and degree]
  For an application of $\rcut$ in a proof, its \emph{cut-degree}
  is defined to be the number of logical connectives in the cut-formula.
  The \emph{degree} of a proof is defined to be the maximal cut-degree of the proof\,(and $0$ if there is no application of $\rcut$).
\end{definition}

\begin{theorem}[Cut-elimination]
  The rule $\rcut$ in $\imellbangbox$ is admissible, i.e.,
  if $\Gamma \, \vdash \, \ottnt{A}$ is derivable, then there is a derivation of the same judgment without any applications of $\rcut$.
\end{theorem}

\begin{proof}
  We follow the proof for propositional linear logic by Lincoln et al.\,\cite{L+:decision_problems}.
  To show the admissibility of $\rcut$, we consider the admissibility of the following cut rules:

  \begin{tabular}{c}
    \begin{minipage}{0.45\hsize}
      \begin{prooftree}
          \AXC{$\Gamma \, \vdash \, \ottsym{!}  \ottnt{A}$}
          \AXC{$\Gamma'  \ottsym{,}   ( \ottsym{!}  \ottnt{A} )^{n}  \, \vdash \, \ottnt{B}$}
        \RightLabel{$ \rbcut $}
        \BIC{$\Gamma  \ottsym{,}  \Gamma' \, \vdash \, \ottnt{B}$}
      \end{prooftree}
    \end{minipage}

    \begin{minipage}{0.45\hsize}
      \begin{prooftree}
          \AXC{$\Gamma \, \vdash \,  \bangbox \ottnt{A} $}
          \AXC{$\Gamma'  \ottsym{,}   (  \bangbox \ottnt{A}  )^{n}  \, \vdash \, \ottnt{B}$}
        \RightLabel{$ \rbbcut $}
        \BIC{$\Gamma  \ottsym{,}  \Gamma' \, \vdash \, \ottnt{B}$}
      \end{prooftree}
    \end{minipage}
  \end{tabular}

  \noindent
  where $ ( \ottnt{C} )^{n} $ denotes the multiset that has $n$ occurrences of $C$
  and $n$ is assumed to be positive as a side-condition; and $\Gamma'$ in $\rbcut$ (resp. in $\rbbcut$) is supposed to contain no formulae of form $!A$\,(resp. $ \bangbox \ottnt{A} $).
  The \emph{cut-degrees} of $\rbcut$ and $\rbbcut$ are defined similarly to that of $\rcut$.

  Then, all the three rules ($\rcut$, $\rbcut$, $\rbbcut$) are shown to be admissible by simultaneous induction 
  on the lexicographic complexity $\fbracket{\delta, h}$,
  where $\delta$ is the degree of the assumed derivation and $h$ is its height.
  See the appendix for details of the proof. 
\end{proof}

\begin{corollary}[Consistency]
  $\imellbangbox$ is consistent, i.e., there exists an underivable judgment.
\end{corollary}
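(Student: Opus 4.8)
The plan is to exhibit one concrete judgment and show, using the cut-elimination theorem just established, that it has no derivation. A convenient witness is the judgment $\vdash p$ with empty left-hand context, where $p$ is a propositional variable; intuitively it asserts ``$p$ from nothing,'' which ought to fail.

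First I would apply cut-elimination: if $\vdash p$ were derivable in $\imellbangbox$, then it would have a derivation with no application of $\rcut$. So it suffices to rule out cut-free derivations of $\vdash p$. I would do this by a case analysis on the last rule of a putative cut-free derivation, using the rule set in Figure~\ref{fig:imellbangbox}. The axiom $\rax$ has conclusion of shape $\ottnt{A} \, \vdash \, \ottnt{A}$, whose context is non-empty, so it cannot conclude $\vdash p$. Each left rule ($\rlimpl$, $\rbangl$, $\rbangboxl$, $\rbweak$, $\rbcont$, $\rbbweak$, $\rbbcont$) has a conclusion whose left-hand context contains a distinguished formula, hence is non-empty, so again it cannot conclude $\vdash p$. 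Among the right rules, $\rlimpr$ concludes a linear implication, $\rbangr$ concludes a formula of shape $\ottsym{!}  \ottnt{A}$, and $\rbangboxr$ concludes a formula of shape $ \bangbox \ottnt{A} $; none of these is the atom $p$. Since $\rcut$ is excluded, no rule can produce the conclusion $\vdash p$, so there is no cut-free derivation of it, and therefore $\vdash p$ is underivable in $\imellbangbox$, which is exactly consistency.

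The argument is immediate once cut-elimination is available, so there is no genuine obstacle; the only care needed is to make the case analysis on the last rule exhaustive with respect to Figure~\ref{fig:imellbangbox}, and in particular to read $\rbangr$ and $\rbangboxr$ as having principal (right) formula $\ottsym{!}  \ottnt{A}$ and $ \bangbox \ottnt{A} $ respectively, which are never atomic.
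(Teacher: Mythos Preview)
Your argument is correct and is precisely the standard route the paper leaves implicit: the corollary is stated without proof immediately after the cut-elimination theorem, and your case analysis on the last rule of a cut-free derivation of $\vdash p$ is exactly how one unpacks that implication. There is nothing to add.
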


\begin{rulefigure}{fig:modal_girard_translation}{Definition of the S4-version of Girard translation.}
    \hspace{-1em}
  \begin{tabular}{c|c}
    {
      \hspace{-1em}
      \begin{minipage}{0.60\hsize}
        \underline{$ \fgirardtrans{ \ottnt{A} } $}
        \vspace{-1.5em}
        \begin{align*}
          \hspace{0.4em}
           \fgirardtrans{ \ottmv{p} }        \defeq \ottmv{p}, \hspace{0.4em}
           \fgirardtrans{ \ottnt{A}  \supset  \ottnt{B} }   \defeq  (  \ottsym{!}   \fgirardtrans{ \ottnt{A} }   )   \multimap   \fgirardtrans{ \ottnt{B} } , \hspace{0.4em}
           \fgirardtrans{  \Box \ottnt{A}  }    \defeq  \bangbox  \fgirardtrans{ \ottnt{A} }  
        \end{align*}
      \end{minipage}
    }
    &
    {
      \hspace{-1em}
      \begin{minipage}{0.37\hsize}
        \underline{$ \fgirardtrans{ \Gamma } $}
        \vspace{-1.5em}
        \begin{align*}
           \fgirardtrans{ \Gamma }  &\defeq \{ (x :  \fgirardtrans{ \ottnt{A} } ) ~|~ (x : A) \in \Gamma \}
        \end{align*}
      \end{minipage}
    }
  \end{tabular}

\end{rulefigure}

Then, we can define an S4-version of Girard translation as in Figure~\ref{fig:modal_girard_translation}, and it can be justified by the following theorem,
which is readily shown by induction on the derivation.

\begin{theorem}[Soundness]
  \label{thm:modal_girard_translation}
  If $ \Box \Delta   \ottsym{,}  \Gamma \, \vdash \, \ottnt{A}$ in $\ljbox$,
  then $ \bangbox  \fgirardtrans{ \Delta }    \ottsym{,}  \ottsym{!}   \fgirardtrans{ \Gamma }  \, \vdash \,  \fgirardtrans{ \ottnt{A} } $ in $\imellbangbox$.
\end{theorem}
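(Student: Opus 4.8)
The plan is to prove the statement by induction on the derivation of $\Box\Delta, \Gamma \vdash A$ in $\ljbox$. The key observation is that the claim, read as quantifying over \emph{all} ways of writing the context of an $\ljbox$-derivable sequent as $\Box\Delta, \Gamma$ (where $\Gamma$ may itself contain $\Box$-formulae), is already in the shape needed to close the induction: for each immediate subderivation I am free to pick whichever decomposition is convenient. Two structural facts about $\imellbangbox$ drive the argument. First, every context in the image of $\fgirardtrans{-}$ has the form $\bangbox\Delta', !\Gamma'$, so the promotion rule $\rbangr$ is always applicable to a translated sequent, and if the $!$-part happens to be empty then even $\rbangboxr$ applies. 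Second, the $\bangbox$-modality coerces down to $!$: the sequent $\bangbox A \vdash !A$ is derivable (by $\rax$, then $\rbangboxl$, then $\rbangr$), and hence so is $!\bangbox A \vdash !A$ (one further $\rbangl$); via a single $\rcut$ these let me replace a hypothesis $!\fgirardtrans{A}$ by the stronger $\bangbox\fgirardtrans{A}$ or $!\bangbox\fgirardtrans{A}$.

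For the propositional rules the construction mirrors Theorem~\ref{thm:girard_translation}. The axiom $\rax$ is discharged by $\rax$ followed by a dereliction $\rbangl$, or directly by $\rax$ when the axiom formula is a $\Box$-formula accounted for through the $\Delta$-slot of the chosen decomposition. The rule $\rimpr$ becomes $\rlimpr$, using $\fgirardtrans{A\supset B} = (!\fgirardtrans{A}) \multimap \fgirardtrans{B}$; the rules $\rweak$ and $\rcont$ become $\rbweak$/$\rbcont$ when the principal formula lies in the $\Gamma$-slot and $\rbbweak$/$\rbbcont$ when it is a $\Box$-formula in the $\Delta$-slot. For the binary rules $\rcut$ and $\rimpl$ I split the conclusion's decomposition across the two premises, apply the induction hypothesis to each (treating the cut formula, resp.\ for $\rimpl$ the hypothesis $B$ introduced in the right premise, as ordinary $\Gamma$-material), promote the succedent of the left premise from $\fgirardtrans{A}$ to $!\fgirardtrans{A}$ by $\rbangr$ — legitimate by the first structural fact — and then apply $\rcut$, resp.\ $\rlimpl$; in the $\rimpl$ case one closing $\rbangl$ converts the $\fgirardtrans{A\supset B}$ that appears on the left into the $!\fgirardtrans{A\supset B}$ demanded by the conclusion. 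The rule $\rboxl$ is where the second fact is used: the induction hypothesis applied to $\Gamma_0, A \vdash B$ with $A$ in the $\Gamma$-slot yields a derivation carrying $!\fgirardtrans{A}$ on the left, and I cut it against $\bangbox\fgirardtrans{A} \vdash !\fgirardtrans{A}$ or $!\bangbox\fgirardtrans{A} \vdash !\fgirardtrans{A}$ to obtain $\bangbox\fgirardtrans{A}$ or $!\bangbox\fgirardtrans{A}$, according to whether $\Box A$ was assigned to the $\Delta$-slot or the $\Gamma$-slot.

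I expect the case of $\rboxr$ to be the conceptual heart — it is exactly the step that failed for $\imellbox$ in Section~\ref{sec:linear_logical_reconstruction}. From a derivation of $\Box\Gamma_0 \vdash A$ I must build the translation of $\Box\Gamma_0 \vdash \Box A$. The move is to invoke the induction hypothesis on $\Box\Gamma_0 \vdash A$ with \emph{all} of $\Gamma_0$ pushed into the $\Delta$-slot, giving $\bangbox\fgirardtrans{\Gamma_0} \vdash \fgirardtrans{A}$ whose context is purely $\bangbox$-modalized; now $\rbangboxr$ fires and produces precisely $\bangbox\fgirardtrans{\Gamma_0} \vdash \bangbox\fgirardtrans{A}$, i.e.\ $\fgirardtrans{\Box\Gamma_0} \vdash \fgirardtrans{\Box A}$, which is the required conclusion for the decomposition with empty $\Gamma$-slot; for any other decomposition of the conclusion I simply apply $\rbangl$ to the finitely many formulae it places in the $\Gamma$-slot. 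This succeeds, unlike the naive attempt, because $\rbangboxr$ — in contrast to $\rprom$ — asks for a $\bangbox$-prefixed context rather than a $!$-prefixed one, and $\fgirardtrans{\Box\Gamma_0}$ supplies exactly that. Consequently I do not anticipate a single genuinely hard step; the real work in the write-up is bookkeeping: carrying the ``for all decompositions'' reading uniformly through every case and keeping the $!$- versus $\bangbox$-prefixes aligned.
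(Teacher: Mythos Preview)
Your overall approach---induction on the $\ljbox$ derivation, strengthened to quantify over all splittings of the antecedent into a $\Box\Delta$-part and a $\Gamma$-part---is exactly what the paper does (its proof is the single sentence ``readily shown by induction on the derivation''), and your treatment of $\rax$, $\rimpr$, $\rweak$, $\rcont$, $\rcut$, $\rboxl$, and especially the crucial $\rboxr$ case is correct.

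There is, however, one step that fails as written: the $\rimpl$ case. After applying the induction hypothesis to the right premise with $B$ placed in the $\Gamma$-slot you obtain a sequent carrying $!\fgirardtrans{B}$ (not $\fgirardtrans{B}$) in the antecedent, so firing $\rlimpl$ against the promoted left premise and this right premise would introduce the connective $!\fgirardtrans{A}\multimap {!\fgirardtrans{B}}$, which is \emph{not} $\fgirardtrans{A\supset B}={!\fgirardtrans{A}}\multimap\fgirardtrans{B}$; a closing $\rbangl$ does not repair the extra `$!$'. The fix is routine and uses only ingredients you already isolated: combine the promoted left premise with the axiom $\fgirardtrans{B}\vdash\fgirardtrans{B}$ via $\rlimpl$ to get $\ldots,\fgirardtrans{A\supset B}\vdash\fgirardtrans{B}$, apply $\rbangl$ to obtain $\ldots,{!\fgirardtrans{A\supset B}}\vdash\fgirardtrans{B}$, \emph{now} promote by $\rbangr$ (the context is fully $\bangbox$/$!$-prefixed) to $\ldots,{!\fgirardtrans{A\supset B}}\vdash {!\fgirardtrans{B}}$, and finally $\rcut$ against the right-hand induction hypothesis. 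In short, the $\rbangl$ on the implication must precede the promotion that manufactures the $!\fgirardtrans{B}$ needed to meet the right premise---exactly the kind of bookkeeping you warned yourself about in the last sentence.
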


\section{Curry--Howard correspondence}
\label{sec:curry-howard}

In this section, we give a computational interpretation for our modal linear logic through the Curry--Howard correspondence
and establish the corresponding S4-version of Girard translation for the modal linear logic in terms of typed $\lambda$-calculus.

\subsection{Typed $\lambda$-calculus for the intuitionistic modal linear logic}
We introduce $\lambdabangbox$\,(pronounced by ``lambda bangbox'') that is a typed $\lambda$-calculus corresponding to
the modal linear logic under the Curry--Howard correspondence.
The calculus $\lambdabangbox$ can be seen as an integration of $\lambdabox$ of Pfenning and Davies
and the linear $\lambda$-calculus for dual intuitionistic linear logic of Barber\,\cite{B:dual_intuitionistic_linear_logic}.
The rules of $\lambdabangbox$ are designed considering the ``necessity'' of modal logic and the ``linearity'' of linear logic,
and formally defined as in Figure~\ref{fig:lambdabangbox}.

\begin{rulefigure}{fig:lambdabangbox}{Definition of $\lambdabangbox$.}
    \hspace{-1em}
  \begin{tabular}{c|c}
    {
    \hspace{-1em}
    \begin{minipage}{0.48\hsize}
      \paragraph*{Syntactic category}
        \vspace{-1em}
        \begin{alignat*}{3}
          &\hspace{-1em}\text{Types}~&A, B, C &::= p ~|~ \ottnt{A}  \multimap  \ottnt{B} ~|~ \ottsym{!}  \ottnt{A} ~|~  \bangbox \ottnt{A} \\
          &\hspace{-1em}\text{Terms}~&M, N, L &::= x ~|~ \lambda  \ottmv{x}  \ottsym{:}  \ottnt{A}  \ottsym{.}  \ottnt{M} ~|~  \ottnt{M} \, \ottnt{N} ~|~  !  \ottnt{M}  ~|~ \bangbox  \ottnt{M} \\
          &&&\,|~ \ottkw{let} \,  !  \ottmv{x}   \ottsym{=}  \ottnt{M} \, \ottkw{in} \, \ottnt{N}~|~ \ottkw{let} \,  \bangbox  \ottmv{x}   \ottsym{=}  \ottnt{M} \, \ottkw{in} \, \ottnt{N}
        \end{alignat*} 
    \end{minipage}
    }
    &
    {
    \hspace{-1em}
    \begin{minipage}{0.50\hsize}
      \paragraph*{Reduction rule}
        \vspace{-1em}
        \begin{alignat*}{3}
          &\hspace{-1.2em}\rbetalimp\hspace{0.5em}& & \ottsym{(}  \lambda  \ottmv{x}  \ottsym{:}  \ottnt{A}  \ottsym{.}  \ottnt{M}  \ottsym{)} \, \ottnt{N} &                      & \leadsto   \ottnt{M}  [  \ottmv{x}  :=  \ottnt{N}  ] \\
          &\hspace{-1.2em}\rbetabang\hspace{0.5em}& &\ottkw{let} \,  !  \ottmv{x}   \ottsym{=}   !  \ottnt{N}  \, \ottkw{in} \, \ottnt{M}&       & \leadsto   \ottnt{M}  [  \ottmv{x}  :=  \ottnt{N}  ] \\
          &\hspace{-1.2em}\rbetabangbox\hspace{0.5em}& &\ottkw{let} \,  \bangbox  \ottmv{x}   \ottsym{=}   \bangbox  \ottnt{N}  \, \ottkw{in} \, \ottnt{M}& & \leadsto   \ottnt{M}  [  \ottmv{x}  :=  \ottnt{N}  ] 
        \end{alignat*}
    \end{minipage}
    }
  \end{tabular}

  \rule{\textwidth}{0.4pt}

  \hspace{-1em}
  \begin{tabular}{c}
    {
    \hspace{-1.2em}
    \begin{minipage}{0.98\hsize}
      \vspace{0.2em}
      \paragraph*{Typing rule}
        \vspace{-1em}
        \begin{center}
        \begin{tabular}{c}
          \begin{minipage}{0.30\hsize}
            \begin{prooftree}
                \AXC{$ $}
              \RightLabel{$ \rlax $}
              \UIC{$\Delta  \ottsym{;}  \Gamma  \ottsym{;}  \ottmv{x}  \ottsym{:}  \ottnt{A} \, \vdash \, \ottmv{x}  \ottsym{:}  \ottnt{A}$}
            \end{prooftree}
          \end{minipage}
      
          \begin{minipage}{0.30\hsize}
            \begin{prooftree}
                \AXC{$ $}
              \RightLabel{$ \rbax $}
              \UIC{$\Delta  \ottsym{;}  \Gamma  \ottsym{,}  \ottmv{x}  \ottsym{:}  \ottnt{A}  \ottsym{;}  \emptyset \, \vdash \, \ottmv{x}  \ottsym{:}  \ottnt{A}$}
            \end{prooftree}
          \end{minipage}
      
          \begin{minipage}{0.32\hsize}
            \begin{prooftree}
                \AXC{$ $}
              \RightLabel{$ \rbbax $}
              \UIC{$\Delta  \ottsym{,}  \ottmv{x}  \ottsym{:}  \ottnt{A}  \ottsym{;}  \Gamma  \ottsym{;}  \emptyset \, \vdash \, \ottmv{x}  \ottsym{:}  \ottnt{A}$}
            \end{prooftree}
          \end{minipage}
        \end{tabular}
      
        \begin{tabular}{c}
          \begin{minipage}{0.38\hsize}
            \begin{prooftree}
                \AXC{$\Delta  \ottsym{;}  \Gamma  \ottsym{;}  \Sigma  \ottsym{,}  \ottmv{x}  \ottsym{:}  \ottnt{A} \, \vdash \, \ottnt{M}  \ottsym{:}  \ottnt{B}$}
              \RightLabel{$ \rlimpi $}
              \UIC{$\Delta  \ottsym{;}  \Gamma  \ottsym{;}  \Sigma \, \vdash \, \lambda  \ottmv{x}  \ottsym{:}  \ottnt{A}  \ottsym{.}  \ottnt{M}  \ottsym{:}  \ottnt{A}  \multimap  \ottnt{B}$}
            \end{prooftree}
          \end{minipage}
      
          \begin{minipage}{0.57\hsize}
            \begin{prooftree}
                \AXC{$\Delta  \ottsym{;}  \Gamma  \ottsym{;}  \Sigma \, \vdash \, \ottnt{M}  \ottsym{:}  \ottnt{A}  \multimap  \ottnt{B}$}
                \AXC{$\Delta  \ottsym{;}  \Gamma  \ottsym{;}  \Sigma' \, \vdash \, \ottnt{N}  \ottsym{:}  \ottnt{A}$}
              \RightLabel{$ \rlimpe $}
              \BIC{$\Delta  \ottsym{;}  \Gamma  \ottsym{;}  \Sigma  \ottsym{,}  \Sigma' \, \vdash \,  \ottnt{M} \, \ottnt{N}   \ottsym{:}  \ottnt{B}$}
            \end{prooftree}
          \end{minipage}
        \end{tabular}
      
        \begin{tabular}{c}
          \begin{minipage}{0.37\hsize}
            \begin{prooftree}
                \AXC{$\Delta  \ottsym{;}  \Gamma  \ottsym{;}  \emptyset \, \vdash \, \ottnt{M}  \ottsym{:}  \ottnt{A}$}
              \RightLabel{$ \rbangi $}
              \UIC{$\Delta  \ottsym{;}  \Gamma  \ottsym{;}  \emptyset \, \vdash \,  !  \ottnt{M}   \ottsym{:}  \ottsym{!}  \ottnt{A}$}
            \end{prooftree}
          \end{minipage}
      
          \begin{minipage}{0.55\hsize}
            \begin{prooftree}
                \AXC{$\Delta  \ottsym{;}  \Gamma  \ottsym{;}  \Sigma \, \vdash \, \ottnt{M}  \ottsym{:}  \ottsym{!}  \ottnt{A}$}
                \AXC{$\Delta  \ottsym{;}  \Gamma  \ottsym{,}  \ottmv{x}  \ottsym{:}  \ottnt{A}  \ottsym{;}  \Sigma' \, \vdash \, \ottnt{N}  \ottsym{:}  \ottnt{B}$}
              \RightLabel{$ \rbange $}
              \BIC{$\Delta  \ottsym{;}  \Gamma  \ottsym{;}  \Sigma  \ottsym{,}  \Sigma' \, \vdash \, \ottkw{let} \,  !  \ottmv{x}   \ottsym{=}  \ottnt{M} \, \ottkw{in} \, \ottnt{N}  \ottsym{:}  \ottnt{B}$}
            \end{prooftree}
          \end{minipage}
        \end{tabular}
      
        \begin{tabular}{c}
          \begin{minipage}{0.37\hsize}
            \begin{prooftree}
                \AXC{$\Delta  \ottsym{;}  \emptyset  \ottsym{;}  \emptyset \, \vdash \, \ottnt{M}  \ottsym{:}  \ottnt{A}$}
              \RightLabel{$ \rbangboxi $}
              \UIC{$\Delta  \ottsym{;}  \Gamma  \ottsym{;}  \emptyset \, \vdash \,  \bangbox  \ottnt{M}   \ottsym{:}   \bangbox \ottnt{A} $}
            \end{prooftree}
          \end{minipage}
      
          \begin{minipage}{0.57\hsize}
            \begin{prooftree}
                \AXC{$\Delta  \ottsym{;}  \Gamma  \ottsym{;}  \Sigma \, \vdash \, \ottnt{M}  \ottsym{:}   \bangbox \ottnt{A} $}
                \AXC{$\Delta  \ottsym{,}  \ottmv{x}  \ottsym{:}  \ottnt{A}  \ottsym{;}  \Gamma  \ottsym{;}  \Sigma' \, \vdash \, \ottnt{N}  \ottsym{:}  \ottnt{B}$}
              \RightLabel{$ \rbangboxe $}
              \BIC{$\Delta  \ottsym{;}  \Gamma  \ottsym{;}  \Sigma  \ottsym{,}  \Sigma' \, \vdash \, \ottkw{let} \,  \bangbox  \ottmv{x}   \ottsym{=}  \ottnt{M} \, \ottkw{in} \, \ottnt{N}  \ottsym{:}  \ottnt{B}$}
            \end{prooftree}
          \end{minipage}
        \end{tabular}
        \end{center}
    \end{minipage}
    }
  \end{tabular}

\end{rulefigure}

The structure of types are exactly the same as that of formulae in $\imellbangbox$.
Terms are defined as an extension of the simply-typed $\lambda$-calculus with the following:
the terms $ !  \ottnt{M} $ and $\ottkw{let} \,  !  \ottmv{x}   \ottsym{=}  \ottnt{M} \, \ottkw{in} \, \ottnt{N}$, which are a constructor and a destructor for types $!A$, respectively;
and the terms $ \bangbox  \ottnt{M} $ and $\ottkw{let} \,  \bangbox  \ottmv{x}   \ottsym{=}  \ottnt{M} \, \ottkw{in} \, \ottnt{N}$, which are those for types $ \bangbox \ottnt{A} $ similarly.
Note that the variable $x$ in $\ottkw{let} \,  !  \ottmv{x}   \ottsym{=}  \ottnt{M} \, \ottkw{in} \, \ottnt{N}$ and $\ottkw{let} \,  \bangbox  \ottmv{x}   \ottsym{=}  \ottnt{M} \, \ottkw{in} \, \ottnt{N}$ is supposed to be \emph{bound}.

A \emph{(type) context} is defined by the same way as $\lambdabox$ and a \emph{(type) judgment} consists of three contexts, a term and a type, written as $\Delta  \ottsym{;}  \Gamma  \ottsym{;}  \Sigma \, \vdash \, \ottnt{M}  \ottsym{:}  \ottnt{A}$.
These three contexts of a judgment $\Delta  \ottsym{;}  \Gamma  \ottsym{;}  \Sigma \, \vdash \, \ottnt{M}  \ottsym{:}  \ottnt{A}$ have the following intuitive meaning:
(1) $\Delta$ implicitly represents a context for modalized types of form $ \bangbox \ottnt{A} $;
(2) $\Gamma$ implicitly represents a context for modalized types of form $ \Box \ottnt{A} $;
(3) $\Sigma$ represents an ordinary context but its elements must be used linearly.

The intuitive meanings of the typing rules are as follows.
Each of the first three rules is a variable rule depending on the context's kind.
It is allowed for the $\Delta$-part and the $\Gamma$-part to weaken the antecedent in these rules, 
but is not for the $\Sigma$-part since it must satisfy the linearity condition. 
The rules $\rlimpi$ and $\rlimpe$ are for the type $\slimp$,
and again, the $\rlimpe$ is designed to satisfy the linearity.
The remaining rules are for types $\ottsym{!}  \ottnt{A}$ and $ \bangbox \ottnt{A} $.

The \emph{reduction} $ \leadsto $ is defined to be the least compatible relation on terms generated by $\rbetalimp$, $\rbetabang$, and $\rbetabangbox$.
The \emph{multistep reduction} $ \leadsto^+ $ is defined as in the case of $\lambdabox$.

Then, we can show the subject reduction and the strong normalization of $\lambdabangbox$ as follows. 

\begin{lemma}[Substitution]\ 
  \label{lem:substitution}
  \begin{enumerate}
    \item If $\Delta  \ottsym{;}  \Gamma  \ottsym{;}  \Sigma  \ottsym{,}  \ottmv{x}  \ottsym{:}  \ottnt{A} \, \vdash \, \ottnt{M}  \ottsym{:}  \ottnt{B}$ and $\Delta  \ottsym{;}  \Gamma  \ottsym{;}  \Sigma' \, \vdash \, \ottnt{N}  \ottsym{:}  \ottnt{A}$, then $\Delta  \ottsym{;}  \Gamma  \ottsym{;}  \Sigma  \ottsym{,}  \Sigma' \, \vdash \,  \ottnt{M}  [  \ottmv{x}  :=  \ottnt{N}  ]   \ottsym{:}  \ottnt{B}$;
    \item If $\Delta  \ottsym{;}  \Gamma  \ottsym{,}  \ottmv{x}  \ottsym{:}  \ottnt{A}  \ottsym{;}  \Sigma \, \vdash \, \ottnt{M}  \ottsym{:}  \ottnt{B}$ and $\Delta  \ottsym{;}  \Gamma  \ottsym{;}  \emptyset \, \vdash \, \ottnt{N}  \ottsym{:}  \ottnt{A}$, then $\Delta  \ottsym{;}  \Gamma  \ottsym{;}  \Sigma \, \vdash \,  \ottnt{M}  [  \ottmv{x}  :=  \ottnt{N}  ]   \ottsym{:}  \ottnt{B}$;
    \item If $\Delta  \ottsym{,}  \ottmv{x}  \ottsym{:}  \ottnt{A}  \ottsym{;}  \Gamma  \ottsym{;}  \Sigma \, \vdash \, \ottnt{M}  \ottsym{:}  \ottnt{B}$ and $\Delta  \ottsym{;}  \emptyset  \ottsym{;}  \emptyset \, \vdash \, \ottnt{N}  \ottsym{:}  \ottnt{A}$, then $\Delta  \ottsym{;}  \Gamma  \ottsym{;}  \Sigma \, \vdash \,  \ottnt{M}  [  \ottmv{x}  :=  \ottnt{N}  ]   \ottsym{:}  \ottnt{B}$.
  \end{enumerate}
\end{lemma}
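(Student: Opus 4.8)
The plan is to prove each clause by induction on the derivation of the typing judgment for $M$, after first isolating two routine auxiliary facts. The first is a \emph{weakening lemma}: if $\Delta; \Gamma; \Sigma \vdash M : B$, then $\Delta, \Delta'; \Gamma, \Gamma'; \Sigma \vdash M : B$ for any $\Delta', \Gamma'$ whose variables are fresh --- note that the linear zone $\Sigma$ cannot be weakened, and that exchange is automatic since contexts are sets. The second is the standard observation that the free variables of a well-typed term are contained in the domains of its three contexts. Both are proved by straightforward induction on derivations; for weakening, the only cases worth noting are $\rbangboxi$ and the empty-linear-zone premises of $\rbangi$ and $\rbangboxi$, where the added assumptions are simply discarded by the rule.

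For clause (1) I would induct on the derivation of $\Delta; \Gamma; \Sigma, x : A \vdash M : B$. Since $x$ lies in the linear zone, the variable rules $\rbax$ and $\rbbax$ are impossible, their linear zones being empty, and likewise $\rbangi$ and $\rbangboxi$, whose conclusions have empty linear zone; only $\rlax$ survives, with $M = x$ and $\Sigma$ empty, so that $M[x := N] = N$ is precisely the second hypothesis. The remaining cases are $\rlimpi$, for which the substitution is pushed under the $\lambda$ after renaming the bound variable away from $x$ and $\fv{N}$, and the three multiplicative rules $\rlimpe$, $\rbange$, $\rbangboxe$; in each of the latter the linear zone of the conclusion splits as $\Sigma_1, \Sigma_2$, and since $x$ occurs linearly it lies in exactly one of the two parts, so one applies the induction hypothesis to the corresponding premise and reassembles with the other premise unchanged, using the free-variable fact to see that $x$ does not occur in the other subterm.

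Clauses (2) and (3) are analogous inductions; the difference is that $x$ may now be used any number of times, so in the multiplicative rules the shared zone --- $\Gamma$ for clause (2), $\Delta$ for clause (3) --- is common to both premises and the induction hypothesis is applied to both. The reason the hypotheses on $N$ become progressively stronger is exactly that $N$ must be transportable under the boxing rules: in clause (2) the case $\rbangi$ requires $N$ to be typable with empty linear zone, and in clause (3) the cases $\rbangi$ and $\rbangboxi$ require $N$ to be typable with empty $\Gamma$ and empty $\Sigma$, which is precisely what is assumed. Descending under the binders introduced by $\rbange$ and $\rbangboxe$, which extend $\Gamma$ and $\Delta$ respectively, requires weakening the derivation of $N$ by the freshly bound variable, hence the weakening lemma; and the cases where the relevant zone is emptied by the rule --- for instance $\rbangboxi$ in clause (2) --- are handled by the free-variable fact, which shows $x$ does not occur there, so the substitution is vacuous. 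The base cases are the matching variable rules: $\rbax$ with $M = x$ for clause (2), and $\rbbax$ with $M = x$ for clause (3), the latter up to one use of weakening on $\Gamma$.

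The part I expect to need the most care is not conceptual but bookkeeping in the multiplicative and boxing rules: keeping straight, for each rule, which zone is split --- and therefore must host $x$ in exactly one premise, for clause (1) --- and which is shared, and verifying that the constraint imposed on $N$ in each clause is exactly strong enough to survive the empty-context premises of $\rbangi$ and $\rbangboxi$. This is, in effect, what forces the three clauses to take the precise shapes stated.
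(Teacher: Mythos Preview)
Your proposal is correct and is the standard approach; the paper itself states the lemma without proof, treating it as routine, so there is nothing to compare against beyond noting that your induction-on-derivations argument with weakening as an auxiliary is exactly what one would expect. Your bookkeeping analysis of where weakening is needed (under $\rbange$/$\rbangboxe$ binders, and in the $\rbbax$ base case of clause~(3)) and of why the side-conditions on $N$ are forced by $\rbangi$/$\rbangboxi$ is accurate.
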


\begin{theorem}[Subject reduction]
  If $\Delta  \ottsym{;}  \Gamma  \ottsym{;}  \Sigma \, \vdash \, \ottnt{M}  \ottsym{:}  \ottnt{A}$ and $\ottnt{M}  \leadsto  \ottnt{N}$,
  then $\Delta  \ottsym{;}  \Gamma  \ottsym{;}  \Sigma \, \vdash \, \ottnt{N}  \ottsym{:}  \ottnt{A}$.
\end{theorem}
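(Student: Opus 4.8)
The plan is to argue by induction on the derivation of $M \leadsto N$. Since $\leadsto$ is the least compatible relation generated by $\rbetalimp$, $\rbetabang$, and $\rbetabangbox$, the cases split into the congruence rules — reduction inside $\lambda x : A.\,(-)$, inside either position of an application, inside $!(-)$ and $\bangbox(-)$, and inside either position of a \textbf{let}-binder — and the three base cases. The congruence cases are routine: by inversion on the relevant typing rule one extracts a typing derivation for the reducing subterm, applies the induction hypothesis to it (noting that a congruence step changes neither the context nor the type of the subterm), and reassembles the derivation with the same rule, so that the context splittings and emptiness side-conditions are simply preserved.

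For the base cases I would first record the fact that $\lambdabangbox$ is \emph{syntax-directed}: each term former is the conclusion of exactly one typing rule, so an inversion (generation) property for each constructor is immediate. Consider $\rbetalimp$, where $(\lambda x : A.\, M)\, N \leadsto M[x := N]$. From $\Delta ; \Gamma ; \Sigma \vdash (\lambda x : A.\, M)\, N : B$, inversion on $\rlimpe$ gives a split $\Sigma = \Sigma_1, \Sigma_2$ with $\Delta ; \Gamma ; \Sigma_1 \vdash \lambda x : A.\, M : A \multimap B$ and $\Delta ; \Gamma ; \Sigma_2 \vdash N : A$; inversion on $\rlimpi$ then gives $\Delta ; \Gamma ; \Sigma_1, x : A \vdash M : B$. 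Now Lemma~\ref{lem:substitution}(1) yields $\Delta ; \Gamma ; \Sigma_1, \Sigma_2 \vdash M[x := N] : B$, that is $\Delta ; \Gamma ; \Sigma \vdash M[x := N] : B$, as required.

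The cases $\rbetabang$ and $\rbetabangbox$ follow the same pattern but invoke parts (2) and (3) of the Substitution Lemma, and here the side-conditions of the introduction rules are exactly what make the premises line up. For $\rbetabang$ — the redex with scrutinee $!N$ and body $M$, reducing to $M[x := N]$ — inversion on $\rbange$ gives $\Sigma = \Sigma_1, \Sigma_2$, $\Delta ; \Gamma ; \Sigma_1 \vdash {!N} : {!A}$, and $\Delta ; \Gamma, x : A ; \Sigma_2 \vdash M : B$; inversion on $\rbangi$ then forces $\Sigma_1 = \emptyset$ and $\Delta ; \Gamma ; \emptyset \vdash N : A$, so $\Sigma = \Sigma_2$ and Lemma~\ref{lem:substitution}(2) gives $\Delta ; \Gamma ; \Sigma \vdash M[x := N] : B$. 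Symmetrically, for $\rbetabangbox$ inversion on $\rbangboxe$ followed by inversion on $\rbangboxi$ forces $\Sigma_1 = \emptyset$ and, crucially, $\Delta ; \emptyset ; \emptyset \vdash N : A$ — both the linear zone and the intermediate zone of the premise of $\rbangboxi$ being empty — which is precisely the hypothesis that Lemma~\ref{lem:substitution}(3) requires to conclude $\Delta ; \Gamma ; \Sigma \vdash M[x := N] : B$.

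The only delicate point, and the place where the argument would fail for a carelessly designed system, is checking that the context decompositions produced by inversion match the shape demanded by the Substitution Lemma: the linear zone must split multiplicatively ($\Sigma = \Sigma_1, \Sigma_2$) while the $\Delta$- and $\Gamma$-zones are shared, and the emptiness conditions imposed by $\rbangi$ and $\rbangboxi$ must coincide with those appearing in the hypotheses of Lemma~\ref{lem:substitution}(2) and (3). Once this bookkeeping is carried out there is no genuine obstacle; in particular, the statement needs no strengthening to push the induction through.
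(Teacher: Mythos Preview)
Your proof is correct and uses the same key ingredient as the paper, namely Lemma~\ref{lem:substitution} together with inversion on the syntax-directed typing rules. The only difference is the choice of induction variable: the paper proceeds by induction on the typing derivation of $\Delta;\Gamma;\Sigma\vdash M:A$, whereas you induct on the reduction derivation $M\leadsto N$. In a syntax-directed system these two organisations are interchangeable --- each term former is introduced by a unique rule, so the case split on the last typing rule and the case split on the outermost term constructor (which governs both the congruence cases and which base rule can fire) coincide --- and the actual work in the base cases (inversion, matching the emptiness side-conditions of $\rbangi$/$\rbangboxi$ to the hypotheses of parts (2) and (3) of the Substitution Lemma) is identical either way.
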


\begin{proof}
  By induction on the derivation of $\Delta  \ottsym{;}  \Gamma  \ottsym{;}  \Sigma \, \vdash \, \ottnt{M}  \ottsym{:}  \ottnt{A}$ together with Lemma~\ref{lem:substitution}.
\end{proof}

\begin{theorem}[Strong normalization]
  For well-typed term $M$, there are no infinite reduction sequences starting from $M$.
\end{theorem}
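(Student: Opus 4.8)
The plan is to derive strong normalization of $\lambdabangbox$ from that of the ordinary simply-typed $\lambda$-calculus through a reduction-reflecting \emph{erasure}. On types I would set $p^\circ \defeq p$, $(A \multimap B)^\circ \defeq A^\circ \to B^\circ$, $(!A)^\circ \defeq A^\circ$, and $(\bangbox A)^\circ \defeq A^\circ$, so that both modalities are forgotten. On terms --- regarded, for the only clause that needs it, as decorated by a typing derivation --- I would erase the modal markers and collapse the eliminators into applications: $x^\circ \defeq x$, $(\lambda x : A.M)^\circ \defeq \lambda x : A^\circ.M^\circ$, $(M\,N)^\circ \defeq M^\circ\,N^\circ$, $(!M)^\circ \defeq M^\circ$, $(\bangbox M)^\circ \defeq M^\circ$, and $(\ottkw{let}\, !x = M\, \ottkw{in}\, N)^\circ \defeq (\lambda x : A^\circ.N^\circ)\,M^\circ$ where $A$ is the type inferred for $x$ (and symmetrically for $\ottkw{let}\, \bangbox x = M\, \ottkw{in}\, N$). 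On contexts the three zones are simply merged, $(\Delta; \Gamma; \Sigma)^\circ \defeq \Delta^\circ, \Gamma^\circ, \Sigma^\circ$, which is a legitimate simply-typed context because the standard convention for such triple-context systems already requires the variables occurring in $\Delta, \Gamma, \Sigma$ to be pairwise distinct.

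The argument then hinges on three routine lemmas. \emph{Type preservation}: if $\Delta; \Gamma; \Sigma \vdash M : A$ in $\lambdabangbox$ then $\Delta^\circ, \Gamma^\circ, \Sigma^\circ \vdash M^\circ : A^\circ$ in the simply-typed $\lambda$-calculus, by induction on the typing derivation --- the variable rules $\rlax$, $\rbax$, $\rbbax$ and the introduction rule $\rbangboxi$ use weakening of the simply-typed calculus (for $\rbangboxi$, from $\Delta^\circ \vdash M^\circ : A^\circ$ one weakens to $\Delta^\circ, \Gamma^\circ \vdash M^\circ : A^\circ$), while the eliminators $\rbange$ and $\rbangboxe$ become an $\to$-introduction followed by an $\to$-elimination (again with a weakening), the remaining rules being immediate. \emph{Commutation with substitution}: $(M[x := N])^\circ = M^\circ[x := N^\circ]$, by induction on $M$ under the usual variable convention. \emph{Reduction reflection}: if $M \leadsto N$ in $\lambdabangbox$ then $M^\circ \to_\beta N^\circ$ in a single step; I would prove this by induction on the generation of $\leadsto$ as the least compatible relation, where each of $\rbetalimp$, $\rbetabang$, $\rbetabangbox$ produces exactly one $\beta$-step by the substitution lemma (e.g.\ $(\ottkw{let}\, !x = !N\, \ottkw{in}\, M)^\circ = (\lambda x : A^\circ.M^\circ)\,N^\circ \to_\beta M^\circ[x := N^\circ] = (M[x := N])^\circ$), and the congruence cases go through because $(\cdot)^\circ$ sends each one-hole $\lambdabangbox$-context to a one-hole context of the simply-typed calculus --- no subterm is discarded by the erasure.

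Combining the three lemmas finishes the proof: if some well-typed $M$ admitted an infinite reduction sequence $M = M_0 \leadsto M_1 \leadsto \cdots$, then $M^\circ$ would be a well-typed simply-typed term (type preservation) admitting the infinite sequence $M_0^\circ \to_\beta M_1^\circ \to_\beta \cdots$ (reduction reflection), contradicting strong normalization of the simply-typed $\lambda$-calculus. I expect the only delicate point --- hence the main obstacle --- to be type preservation: one must be careful about the implicit global distinctness of the variables across the three context zones (so that merging them and the weakenings used in $\rbbax$ and $\rbangboxi$ are well-defined), and one must pin the erasure down on typing derivations rather than raw terms so that the $\ottkw{let}$-to-application clause has the type annotation it requires. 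If one would rather avoid this bookkeeping, a direct Tait--Girard reducibility-candidate argument --- with three interacting families of candidates reflecting the three context zones --- also works, but it is appreciably longer than the erasure route sketched above.
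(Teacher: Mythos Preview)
Your argument is correct, but it takes a different route from the paper. The paper does not erase the modalities all the way down to the simply-typed $\lambda$-calculus; instead it collapses only $\bangbox$ onto $!$, obtaining an embedding of $\lambdabangbox$ into Barber-style dual intuitionistic linear logic (the calculus $\lambdabanglimp$ of Ohta and Hasegawa), and then invokes the strong normalization result already established for that system. Concretely, $\fembed{\bangbox A} \defeq {!}\fembed{A}$, $\fembed{\bangbox M} \defeq {!}\fembed{M}$, $\fembed{\ottkw{let}\ \bangbox x = M\ \ottkw{in}\ N} \defeq \ottkw{let}\ !x = \fembed{M}\ \ottkw{in}\ \fembed{N}$, and the two modal contexts $\Delta, \Gamma$ are merged into the single exponential context of $\lambdabanglimp$; one step in the source becomes exactly one step in the target, and the rest is as in your outline.

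The trade-off is this: the paper's embedding is a trivial relabelling --- no $\ottkw{let}$-to-application encoding, no need to annotate the translation with typing derivations --- but it appeals to a less elementary target (strong normalization of $\lambdabanglimp$, which itself requires a nontrivial argument because of the commuting conversions). Your route has a slightly heavier translation (the $\ottkw{let}$ clauses and the attendant bookkeeping you flag) but lands in the simply-typed $\lambda$-calculus, whose strong normalization is as standard as it gets. Both are sound; the paper's choice is presumably guided by its overall linear-logical narrative rather than by any technical necessity.
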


\begin{proof}
  By embedding to a typed $\lambda$-calculus of the $(!, \slimp)$-fragment of dual intuitionistic linear logic, named $\lambdabanglimp$,
  which is shown to be strongly normalizing by Ohta and Hasegawa\,\cite{OH:terminating_linear_lambda}.

  The details are in the appendix, but the intuition is described as follows.  
  First,
  for every well-typed term $M$,
  we define the term $ \fembed{ \ottnt{M} } $ by replacing the occurrences of $ \bangbox  \ottnt{N} $ and $\ottkw{let} \,  \bangbox  \ottmv{x}   \ottsym{=}  \ottnt{N} \, \ottkw{in} \, \ottnt{L}$ in $M$
  with $ !   \fembed{ \ottnt{N} }  $ and $\ottkw{let} \,  !  \ottmv{x}   \ottsym{=}   \fembed{ \ottnt{N} }  \, \ottkw{in} \,  \fembed{ \ottnt{L} } $, respectively.
  Then, we can show that $ \fembed{ \ottnt{M} } $ is typable in $\lambdabanglimp$,
  because the structure of `$\bangbox$~' collapses to that of `$!$',
  and that the embedding $\fembed{-}$ preserves reductions. Therefore, $\lambdabangbox$ is strongly normalizing.
\end{proof}

As we mentioned, we can view that $\lambdabangbox$ is indeed a typed $\lambda$-calculus for the intuitionistic modal linear logic.
A natural deduction that corresponds to $\lambdabangbox$
is obtained as the ``logical-part'' of the calculus,
and we can show that the natural deduction is equivalent to $\imellbangbox$.

\begin{definition}[Natural deduction]
  A natural deduction for modal linear logic,
  called $\njbangbox$, is defined to be one that is extracted from $\lambdabangbox$ by erasing term annotations.
\end{definition}

\begin{fact}[Curry--Howard correspondence]
  There is a one-to-one correspondence between $\njbangbox$ and $\lambdabangbox$,
  which preserves provability/typability and proof-normalizability/reducibility.
\end{fact}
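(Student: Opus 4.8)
The plan is to establish the correspondence at the levels of derivations, judgments, and reductions, exploiting the fact that, by \emph{definition}, $\njbangbox$ is $\lambdabangbox$ with term annotations erased. First I would make the erasure map $|\cdot|$ precise: by induction on a $\lambdabangbox$ derivation of $\Delta  \ottsym{;}  \Gamma  \ottsym{;}  \Sigma \, \vdash \, \ottnt{M}  \ottsym{:}  \ottnt{A}$, replace each typing rule by the eponymous logical rule of $\njbangbox$ with the term component dropped, obtaining an $\njbangbox$ derivation of $\Delta  \ottsym{;}  \Gamma  \ottsym{;}  \Sigma \, \vdash \, \ottnt{A}$. Conversely I would define a \emph{decoration} map sending an $\njbangbox$ derivation $\pi$ of $\Delta  \ottsym{;}  \Gamma  \ottsym{;}  \Sigma \, \vdash \, \ottnt{A}$ to a unique term $M_\pi$ together with a $\lambdabangbox$ derivation of $\Delta  \ottsym{;}  \Gamma  \ottsym{;}  \Sigma \, \vdash \, M_\pi  \ottsym{:}  \ottnt{A}$, again by induction on $\pi$: the three axioms $\rlax$, $\rbax$, $\rbbax$ read the variable off the distinguished context entry, and the binding rules $\rlimpi$, $\rbange$, $\rbangboxe$ read the bound variable off the context of the relevant premise. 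Since every logical rule of $\njbangbox$ has exactly one $\lambdabangbox$ typing rule lying over it and vice versa, a straightforward structural induction shows that $|\cdot|$ and decoration are mutually inverse up to $\alpha$-equivalence of terms; this is the claimed one-to-one correspondence, and it immediately yields that $\Delta  \ottsym{;}  \Gamma  \ottsym{;}  \Sigma \, \vdash \, \ottnt{A}$ is $\njbangbox$-derivable iff $\Delta  \ottsym{;}  \Gamma  \ottsym{;}  \Sigma \, \vdash \, \ottnt{M}  \ottsym{:}  \ottnt{A}$ is $\lambdabangbox$-derivable for some $M$ (unique once the derivation is fixed), i.e.\ provability equals typability.

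For the normalization part I would equip $\njbangbox$ with the standard proof-normalization rewriting that removes the three kinds of \emph{detour} --- an introduction of $\multimap$, $!$, or $\bangbox$ immediately followed by the matching elimination --- using a substitution-on-derivations principle that is the derivation-level content of Lemma~\ref{lem:substitution}. Inspecting the three detour shapes, one checks that under decoration they correspond exactly to the $\rbetalimp$, $\rbetabang$, and $\rbetabangbox$ redexes, and that contracting a detour decorates to the corresponding contracted term; closing this rewriting under subderivations then matches the compatible closure defining $\leadsto$ on terms. Hence a detour contraction in $\njbangbox$ corresponds to a $\leadsto$-step in $\lambdabangbox$ and back, so a proof is (strongly) normalizing iff the corresponding term is; in particular $\njbangbox$ inherits strong normalization from the corresponding theorem for $\lambdabangbox$.

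The only delicate points are the bookkeeping of bound variables in the decoration map and the substitution-on-derivations principle driving the detour contractions: one must verify that detour elimination on proofs is well defined independently of the choice of representative names and that it commutes with decoration, which is precisely the proof-theoretic reading of Lemma~\ref{lem:substitution}. Everything else is a routine, entirely mechanical rule-by-rule verification, so I expect no genuine obstacle beyond this bookkeeping.
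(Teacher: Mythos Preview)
Your proposal is correct and is exactly the standard Curry--Howard argument one would give here. However, note that the paper provides no proof at all: the statement is labelled a \emph{Fact}, not a theorem or lemma, and is simply asserted. This is because $\njbangbox$ is \emph{defined} as $\lambdabangbox$ with term annotations erased (Definition preceding the Fact), so the correspondence is essentially definitional; the authors treat it as folklore not requiring elaboration. Your write-up is thus a careful unpacking of something the paper leaves implicit, and there is nothing to compare against on the paper's side.
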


\begin{lemma}[Judgmental reflection]
\label{lem:reflection}
  The following hold in $\njbangbox$.
  \begin{enumerate}
    \item $\Delta  \ottsym{;}  \Gamma  \ottsym{;}  \Sigma  \ottsym{,}  \ottsym{!}  \ottnt{A} \, \vdash \, \ottnt{B}$ if and only if $\Delta  \ottsym{;}  \Gamma  \ottsym{,}  \ottnt{A}  \ottsym{;}  \Sigma \, \vdash \, \ottnt{B}$;
    \item $\Delta  \ottsym{;}  \Gamma  \ottsym{;}  \Sigma  \ottsym{,}   \bangbox \ottnt{A}  \, \vdash \, \ottnt{B}$ if and only if $\Delta  \ottsym{,}  \ottnt{A}  \ottsym{;}  \Gamma  \ottsym{;}  \Sigma \, \vdash \, \ottnt{B}$.
  \end{enumerate}
\end{lemma}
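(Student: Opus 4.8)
The plan is to derive the four implications explicitly. By the Curry--Howard correspondence, provability in $\njbangbox$ coincides with typability in $\lambdabangbox$, so I would carry out all the constructions in $\lambdabangbox$, where the substitution lemma (Lemma~\ref{lem:substitution}) is available; I will also use the routine admissibility of weakening in the $\Delta$- and $\Gamma$-zones, which is proved by a straightforward induction on derivations (and could equally be folded into Lemma~\ref{lem:substitution}).

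For the two ``if'' directions the idea is to reintroduce the modal formula on the linear zone via an elimination rule. Given a derivation of $\Delta; \Gamma, x:A; \Sigma \vdash N : B$, take the linear axiom $\Delta; \Gamma; y:{!}A \vdash y:{!}A$ (rule $\rlax$) as the major premise and this derivation as the minor premise of $\rbange$; the conclusion is $\Delta; \Gamma; \Sigma, y:{!}A \vdash \ottkw{let}\, {!}x = y\, \ottkw{in}\, N : B$, which is part~(1). Part~(2) is obtained in exactly the same way from $\Delta, x:A; \Gamma; \Sigma \vdash N : B$, using $\rbangboxe$ together with the axiom $\Delta; \Gamma; y:\bangbox A \vdash y:\bangbox A$.

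For the two ``only if'' directions the idea is to promote a non-linear variable and substitute it for the linear one. First, $\rbax$ gives $\Delta; \Gamma, x:A; \emptyset \vdash x:A$, and since the linear zone is empty the rule $\rbangi$ applies, yielding $\Delta; \Gamma, x:A; \emptyset \vdash {!}x:{!}A$; symmetrically, $\rbbax$ with the premise $\Delta, x:A; \emptyset; \emptyset \vdash x:A$ (both the $\Gamma$- and the $\Sigma$-zone are empty, as required) followed by $\rbangboxi$ gives $\Delta, x:A; \Gamma; \emptyset \vdash \bangbox x:\bangbox A$. Now, starting from $\Delta; \Gamma; \Sigma, z:{!}A \vdash N : B$, weaken the $\Gamma$-zone to $\Delta; \Gamma, x:A; \Sigma, z:{!}A \vdash N : B$ and then apply the linear substitution (Lemma~\ref{lem:substitution}(1)) with the promoted term ${!}x$, obtaining $\Delta; \Gamma, x:A; \Sigma \vdash N[z:={!}x] : B$. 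Part~(2) is analogous, weakening the $\Delta$-zone and substituting $\bangbox x$ for $z$. The only point requiring any care is verifying the emptiness side-conditions of $\rbangi$ and $\rbangboxi$ in the promotion step; with that checked, the rest is pure bookkeeping on the three zones, and I do not expect any genuine obstacle.
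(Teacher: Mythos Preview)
Your argument is correct. The paper states this lemma without proof, so there is no ``paper's own proof'' to compare against; your construction via $\lambdabangbox$, using the elimination rules for the ``if'' directions and promotion-plus-linear-substitution for the ``only if'' directions, is exactly the natural proof one would expect and all the side-conditions you flag (emptiness for $\rbangi$/$\rbangboxi$, admissible weakening in the non-linear zones) are handled correctly.
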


\begin{theorem}[Equivalence]
  \label{thm:equivalence}
  $\Delta  \ottsym{;}  \Gamma  \ottsym{;}  \Sigma \, \vdash \, \ottnt{A}$ in $\njbangbox$ if and only if 
  $ \bangbox \Delta   \ottsym{,}  \ottsym{!}  \Gamma  \ottsym{,}  \Sigma \, \vdash \, \ottnt{A}$ in $\imellbangbox$.
\end{theorem}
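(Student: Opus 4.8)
The proof splits into the two implications, each by induction on derivations. The only substantial tools are Lemma~\ref{lem:substitution} (substitution for $\lambdabangbox$) and Lemma~\ref{lem:reflection} (judgmental reflection), together with three routine structural facts about $\njbangbox$ that I would record first: (W) weakening is admissible on the $\Delta$- and $\Gamma$-zones; (C) contraction is admissible on the $\Delta$- and $\Gamma$-zones; and (P) a hypothesis may be promoted, without changing the term, from the $\Sigma$-zone into the $\Gamma$-zone and from the $\Gamma$-zone into the $\Delta$-zone. Property (P) holds because the side conditions ``$\Sigma$ empty'' on $\rbangi$ and ``$\Gamma,\Sigma$ empty'' on $\rbangboxi$ only become easier to satisfy after such a move, so the given derivation transports verbatim; (W) and (C) are immediate inductions. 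By the Curry--Howard fact I identify $\njbangbox$-provability with $\lambdabangbox$-typability throughout.

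For ($\Rightarrow$) I would induct on the derivation of $\Delta  \ottsym{;}  \Gamma  \ottsym{;}  \Sigma \, \vdash \, \ottnt{M}  \ottsym{:}  \ottnt{A}$. The axiom rules $\rlax$, $\rbax$, $\rbbax$ are simulated in $\imellbangbox$ by $\rax$ followed by $\rbangl$ or $\rbangboxl$ and then $\rbweak$/$\rbbweak$ to reinstate the remaining zones. The introduction rules $\rlimpi$, $\rbangi$, $\rbangboxi$ map to $\rlimpr$, $\rbangr$, $\rbangboxr$ (with one extra $\rbweak$ in the $\rbangboxi$ case to restore the $!\fgirardtrans{\Gamma}$ that $\rbangboxr$ discards). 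Each elimination rule combines its two induction hypotheses by a $\rcut$ on the principal formula — preceded, in the $\rlimpe$ case, by an $\rlimpl$ against an $\rax$-leaf $B\vdash B$ — and then collapses the two copies of $\bangbox\fgirardtrans{\Delta}$ and $!\fgirardtrans{\Gamma}$ introduced by the cut using $\rbbcont$ and $\rbcont$. Since $\rcut$ is a primitive rule of $\imellbangbox$, nothing here is delicate.

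For ($\Leftarrow$) I would induct on the $\imellbangbox$-derivation of $ \bangbox \Delta   \ottsym{,}  \ottsym{!}  \Gamma  \ottsym{,}  \Sigma \, \vdash \, \ottnt{A}$, using Lemma~\ref{lem:reflection} throughout to read the three zones off the multiset antecedent (an occurrence of $!B$ may be taken to live in $\Sigma$ as $!B$ or in $\Gamma$ as $B$, and likewise $\bangbox B$ in $\Sigma$ or $\Delta$). An $\rax$-leaf with antecedent $\{C\}$ is handled by $\rlax$ when $C$ stays in $\Sigma$, and by $\rbax$ then $\rbangi$ (resp. $\rbbax$ then $\rbangboxi$) when $C=!B$ is read into $\Gamma$ (resp. $C=\bangbox B$ into $\Delta$). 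The rules $\rlimpr$, $\rbangr$, $\rbangboxr$ map back to $\rlimpi$, $\rbangi$, $\rbangboxi$ after using (P) and Lemma~\ref{lem:reflection} to make the relevant zone empty. The rules $\rcut$ and $\rlimpl$ are simulated by splitting $\Sigma$ between the two premises, weakening the $\Delta$- and $\Gamma$-zones of both induction hypotheses to their union by (W), forming an application with $\rlimpe$ where needed, and then invoking Lemma~\ref{lem:substitution}(1). The structural rules $\rbweak$, $\rbbweak$, $\rbcont$, $\rbbcont$ reduce to (W) and (C) once the weakened/contracted $!B$ (resp. $\bangbox B$) is read as an occurrence of $B$ in the $\Gamma$- (resp. $\Delta$-)zone. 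The cases $\rbangl$ and $\rbangboxl$ are the ones that genuinely use (P): the induction hypothesis supplies the hypothesis $B$ linearly in $\Sigma$, which (P) moves into the $\Gamma$- or $\Delta$-zone, after which an application of $\rbange$ or $\rbangboxe$ fed by an $\rlax$-leaf produces the required judgment, and Lemma~\ref{lem:reflection} turns the result back into a statement about $!B$ or $\bangbox B$ in $\Sigma$.

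I expect the main obstacle to be the bookkeeping in the ($\Leftarrow$) direction: one must keep track of how the single multiset antecedent of $\imellbangbox$ has to be re-zoned so that the principal formula of each sequent-calculus left rule ends up where the matching $\njbangbox$ rule can act on it, and one must check that the promotion lemma (P) is strong enough for $\rbangl$ and $\rbangboxl$. Once (W), (C), (P) and Lemmas~\ref{lem:substitution}--\ref{lem:reflection} are in place, no case hides a real difficulty; in particular cut-elimination for $\imellbangbox$ is not needed here, as $\rcut$ is discharged directly by substitution.
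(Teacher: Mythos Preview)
Your proposal is correct and follows the paper's approach: induction on derivations in both directions, with Lemma~\ref{lem:reflection} as the key device for the if-part (the paper's proof is a one-line sketch that leaves implicit the auxiliary structural facts (W), (C), (P) and the appeal to substitution that you spell out). One small slip: in the $(\Rightarrow)$ direction you write $!\fgirardtrans{\Gamma}$ and $\bangbox\fgirardtrans{\Delta}$, but the Girard translation $\fgirardtrans{-}$ plays no role in this theorem---these should simply be $!\Gamma$ and $\bangbox\Delta$.
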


\begin{proof}
  By straightforward induction. 
  Lemma~\ref{lem:reflection} is used to show the if-part.
\end{proof}

\subsection{Embedding from the modal $\lambda$-calculus by Pfenning and Davies} 

We give a translation from Pfenning and Davies' $\lambdabox$ to our $\lambdabangbox$.
We also show that the translation preserves the reductions of $\lambdabox$,
and thus it can be seen as the S4-version of Girard translation on the level of proofs through the Curry--Howard correspondence. 

To give the translation, we introduce two meta $\lambda$-terms in $\lambdabangbox$ to encode the function space $\sarrow$ of $\lambdabox$.
The simulation of reduction of $ \ottsym{(}  \lambda  \ottmv{x}  \ottsym{:}  \ottnt{A}  \ottsym{.}  \ottnt{M}  \ottsym{)} \, \ottnt{N} $ in $\lambdabox$ can be shown readily.

\begin{definition}
  Let $M$ and $N$ be terms such that 
  $\Delta  \ottsym{;}  \Gamma  \ottsym{,}  \ottmv{x}  \ottsym{:}  \ottnt{A}  \ottsym{;}  \Sigma \, \vdash \, \ottnt{M}  \ottsym{:}  \ottnt{B}$ and $\Delta  \ottsym{;}  \Gamma  \ottsym{;}  \emptyset \, \vdash \, \ottnt{N}  \ottsym{:}  \ottnt{A}$.
  Then, $\overline{\lambda}  \ottmv{x}  \ottsym{:}  \ottnt{A}  \ottsym{.}  \ottnt{M}$ and $\ottnt{M}  \overline{@}  \ottnt{N}$ are defined as the terms 
$\lambda  \ottmv{y}  \ottsym{:}  \ottsym{!}  \ottnt{A}  \ottsym{.}  \ottkw{let} \,  !  \ottmv{x}   \ottsym{=}  \ottmv{y} \, \ottkw{in} \, \ottnt{M}$ and 
$ \ottnt{M} \, \ottsym{(}   !  \ottnt{N}   \ottsym{)} $, respectively, where $y$ is chosen to be fresh, i.e., it is a variable satisfying $y \not\in (\fv{M} \cup \{ x \})$.
\end{definition}

\begin{lemma}[Derivable full-function space]
  The following rules are derivable in $\lambdabangbox$:

  \begin{tabular}{c}
    \begin{minipage}{0.37\hsize}
      \begin{prooftree}
          \AXC{$\Delta  \ottsym{;}  \Gamma  \ottsym{,}  \ottmv{x}  \ottsym{:}  \ottnt{A}  \ottsym{;}  \Sigma \, \vdash \, \ottnt{M}  \ottsym{:}  \ottnt{B}$}
        \doubleLine
        \UIC{$\Delta  \ottsym{;}  \Gamma  \ottsym{;}  \Sigma \, \vdash \, \ottsym{(}  \overline{\lambda}  \ottmv{x}  \ottsym{:}  \ottnt{A}  \ottsym{.}  \ottnt{M}  \ottsym{)}  \ottsym{:}  \ottsym{!}  \ottnt{A}  \multimap  \ottnt{B}$}
      \end{prooftree}
    \end{minipage}

    \begin{minipage}{0.53\hsize}
      \begin{prooftree}
          \AXC{$\Delta  \ottsym{;}  \Gamma  \ottsym{;}  \Sigma \, \vdash \, \ottnt{M}  \ottsym{:}  \ottsym{!}  \ottnt{A}  \multimap  \ottnt{B}$}
          \AXC{$\Delta  \ottsym{;}  \Gamma  \ottsym{;}  \emptyset \, \vdash \, \ottnt{N}  \ottsym{:}  \ottnt{A}$}
        \doubleLine
        \BIC{$\Delta  \ottsym{;}  \Gamma  \ottsym{;}  \Sigma \, \vdash \, \ottnt{M}  \overline{@}  \ottnt{N}  \ottsym{:}  \ottnt{B}$}
      \end{prooftree}
    \end{minipage}
  \end{tabular}

  \noindent
  Moreover, it holds that $\ottsym{(}  \overline{\lambda}  \ottmv{x}  \ottsym{:}  \ottnt{A}  \ottsym{.}  \ottnt{M}  \ottsym{)}  \overline{@}  \ottnt{N}  \leadsto^+   \ottnt{M}  [  \ottmv{x}  :=  \ottnt{N}  ] $ in $\lambdabangbox$.
\end{lemma}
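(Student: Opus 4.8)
The statement has three parts: the two derivability claims (that $\overline{\lambda}x\ottsym{:}A.M$ inhabits $!A \slimp B$ and that $M\,\overline{@}\,N$ inhabits $B$), and the reduction claim. The plan is to treat these by unfolding the definitions of the two meta-terms and exhibiting the explicit derivations/reduction sequences they abbreviate.

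\emph{Derivability of the abstraction rule.} I would start from the hypothesis $\Delta;\Gamma,x\ottsym{:}A;\Sigma \vdash M\ottsym{:}B$. By Lemma~\ref{lem:reflection}(1) (judgmental reflection), this is interderivable with $\Delta;\Gamma;\Sigma,!A \vdash M\ottsym{:}B$ at the level of provability; on the level of terms the passage is realized by $\ottkw{let}\,!x = y\,\ottkw{in}\,M$ with a fresh $y\ottsym{:}\ottsym{!}A$ — that is, I would first derive $\Delta;\Gamma;\Sigma,y\ottsym{:}\ottsym{!}A \vdash \ottkw{let}\,!x = y\,\ottkw{in}\,M : B$ by applying $\rbange$ to the axiom $\Delta;\Gamma;y\ottsym{:}\ottsym{!}A \vdash y : \ottsym{!}A$ (via $\rlax$) together with the hypothesis $\Delta;\Gamma,x\ottsym{:}A;\Sigma \vdash M\ottsym{:}B$, checking that the linear contexts split as $\Sigma,\emptyset$ as required. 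Then one application of $\rlimpi$ binding $y$ gives $\Delta;\Gamma;\Sigma \vdash \lambda y\ottsym{:}\ottsym{!}A.\ottkw{let}\,!x = y\,\ottkw{in}\,M : \ottsym{!}A\slimp B$, which is exactly $\Delta;\Gamma;\Sigma \vdash (\overline{\lambda}x\ottsym{:}A.M) : \ottsym{!}A\slimp B$ by definition.

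\emph{Derivability of the application rule.} From $\Delta;\Gamma;\Sigma \vdash M\ottsym{:}\ottsym{!}A\slimp B$ and $\Delta;\Gamma;\emptyset \vdash N\ottsym{:}A$, I would first apply $\rbangi$ to the second premise — legal because its linear context is already empty — obtaining $\Delta;\Gamma;\emptyset \vdash\,!N\ottsym{:}\ottsym{!}A$. Then $\rlimpe$ applied to $M$ and $!N$, with the linear contexts splitting as $\Sigma,\emptyset$, yields $\Delta;\Gamma;\Sigma \vdash M\,(\ottsym{(}!N\ottsym{)}) : B$, i.e. $\Delta;\Gamma;\Sigma \vdash M\,\overline{@}\,N : B$.

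\emph{The reduction.} Unfolding both notations, $\ottsym{(}\overline{\lambda}x\ottsym{:}A.M\ottsym{)}\,\overline{@}\,N$ is $(\lambda y\ottsym{:}\ottsym{!}A.\ottkw{let}\,!x = y\,\ottkw{in}\,M)\,(\ottsym{(}!N\ottsym{)})$. One step of $\rbetalimp$ gives $(\ottkw{let}\,!x = y\,\ottkw{in}\,M)[y := {}!N] = \ottkw{let}\,!x = {}!N\,\ottkw{in}\,M$ (using freshness of $y$ to ensure $y$ does not occur in $M$), and then one step of $\rbetabang$ gives $M[x := N]$. Hence $\ottsym{(}\overline{\lambda}x\ottsym{:}A.M\ottsym{)}\,\overline{@}\,N \leadsto^+ M[x := N]$ in two steps. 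I expect no real obstacle here; the only points requiring care are the bookkeeping of the three-zone contexts in the $\rbange$ and $\rlimpe$ applications (making sure the $\Delta$- and $\Gamma$-zones are shared and the $\Sigma$-zone splits correctly) and the freshness side-condition $y \notin \fv{M}\cup\{x\}$, which is exactly what licenses the substitution step to leave $M$ untouched except at $y$.
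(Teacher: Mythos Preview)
Your proof is correct and is exactly the natural argument: unfold the definitions of $\overline{\lambda}$ and $\overline{@}$, build the derivations from $\rlax$, $\rbange$, $\rlimpi$ (for the abstraction) and $\rbangi$, $\rlimpe$ (for the application), and perform the two-step reduction $\rbetalimp$ followed by $\rbetabang$. The paper does not spell out a proof of this lemma at all---it is stated as an immediate consequence of the definitions---so your write-up is precisely what the authors are leaving to the reader; the only cosmetic slip is your remark that the linear contexts ``split as $\Sigma,\emptyset$'' in the $\rbange$ step, when in fact they split as $(y{:}{!}A)$ and $\Sigma$, but your stated conclusion $\Delta;\Gamma;\Sigma,y{:}{!}A \vdash \ldots$ is correct regardless.
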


\begin{rulefigure}{fig:translation}{Definitions of the S4-version of Girard translation in term of typed $\lambda$-calculus.}
    \hspace{-1em}
  \begin{tabular}{c|c}
    {
    \begin{minipage}{0.38\hsize}
      \underline{$ \fgirardtrans{ \ottnt{A} } $}
      \vspace{-2em}
      \begin{align*}
         \fgirardtrans{ \ottmv{p} }       &\defeq p \\
         \fgirardtrans{ \ottnt{A}  \supset  \ottnt{B} }  &\defeq \ottsym{!}   \fgirardtrans{ \ottnt{A} }   \multimap   \fgirardtrans{ \ottnt{B} }  \\
         \fgirardtrans{  \Box \ottnt{A}  }   &\defeq  \bangbox  \fgirardtrans{ \ottnt{A} }  
      \end{align*}
      \vspace{-2.5em}

      \rule{\textwidth}{0.4pt}

      \underline{$ \fgirardtrans{ \Gamma } $}
      \vspace{-1.8em}
      \begin{align*}
        \hspace{0.3em} \fgirardtrans{ \Gamma }  &\defeq \{ (x :  \fgirardtrans{ \ottnt{A} } ) ~|~ (x : A) \in \Gamma \}
      \end{align*}
    \end{minipage}
    }
    &
    {
    \begin{minipage}{0.60\hsize}
      \vspace{0.3em}
      \underline{$ \ftrans{ \ottnt{M} } $}
      \vspace{-2.6em}
      \begin{align*}
         \ftrans{ \ottmv{x} }  &\defeq \ottmv{x}\\
         \ftrans{ \lambda  \ottmv{x}  \ottsym{:}  \ottnt{A}  \ottsym{.}  \ottnt{M} }  &\defeq \overline{\lambda}  \ottmv{x}  \ottsym{:}   \fgirardtrans{ \ottnt{A} }   \ottsym{.}   \ftrans{ \ottnt{M} } \\
         \ftrans{  \ottnt{M} \, \ottnt{N}  }  &\defeq  \ftrans{ \ottnt{M} }   \overline{@}   \ftrans{ \ottnt{N} } \\
         \ftrans{  \Box  \ottnt{M}  }  &\defeq  \bangbox   \ftrans{ \ottnt{M} }  \\
         \ftrans{ \ottkw{let} \,  \Box  \ottmv{x}   \ottsym{=}  \ottnt{M} \, \ottkw{in} \, \ottnt{N} }  &\defeq \ottkw{let} \,  \bangbox  \ottmv{x}   \ottsym{=}   \ftrans{ \ottnt{M} }  \, \ottkw{in} \,  \ftrans{ \ottnt{N} }  
      \end{align*}
    \end{minipage}
    }
  \end{tabular}

\end{rulefigure}

Together with the above meta $\lambda$-terms $\overline{\lambda}  \ottmv{x}  \ottsym{:}  \ottnt{A}  \ottsym{.}  \ottnt{M}$ and $\ottnt{M}  \overline{@}  \ottnt{N}$,
we can define the translation from $\lambdabox$ into $\lambdabangbox$
and show that it preserves typability and reducibility.

\begin{definition}[Translation]
  The \emph{translation} from $\lambdabox$ to $\lambdabangbox$
  is defined to be the triple of
  the type/context/term translations $ \fgirardtrans{ \ottnt{A} } $, $ \fgirardtrans{ \Gamma } $, and $ \ftrans{ \ottnt{M} } $ defined in Figure~\ref{fig:translation}.
\end{definition}

\begin{theorem}[Embedding]
  \label{thm:embedding}
  $\lambdabox$ can be embedded into $\lambdabangbox$, i.e., the following hold:
  \begin{enumerate}
    \item If $\Delta  \ottsym{;}  \Gamma \, \vdash \, \ottnt{M}  \ottsym{:}  \ottnt{A}$ in $\lambdabox$, then $ \fgirardtrans{ \Delta }   \ottsym{;}   \fgirardtrans{ \Gamma }   \ottsym{;}  \emptyset \, \vdash \,  \ftrans{ \ottnt{M} }   \ottsym{:}   \fgirardtrans{ \ottnt{A} } $ in $\lambdabangbox$.
    \item If $\ottnt{M}  \leadsto  \ottnt{M'}$ in $\lambdabox$, then $ \ftrans{ \ottnt{M} }   \leadsto^+   \ftrans{ \ottnt{M'} } $ in $\lambdabangbox$.
  \end{enumerate}
\end{theorem}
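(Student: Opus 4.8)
The plan is to prove both parts by induction on the given derivation/reduction, relying on the derivable full-function-space rules for the cases involving $\sarrow$. For part~(1), I would induct on the typing derivation $\Delta  \ottsym{;}  \Gamma \, \vdash \, \ottnt{M}  \ottsym{:}  \ottnt{A}$ in $\lambdabox$. The variable cases $\rax$ and $\rmax$ translate to $\rbax$ and $\rbbax$ of $\lambdabangbox$, respectively, since $\fgirardtrans{-}$ sends a $\Gamma$-assumption to a $\Gamma$-assumption (linear-intuitionistic, used non-linearly) and a $\Delta$-assumption to a $\Delta$-assumption (modal), with the $\Sigma$-part sta empty. For $\rimpi$, the induction hypothesis gives $ \fgirardtrans{ \Delta }   \ottsym{;}   \fgirardtrans{ \Gamma }  ,  \ottmv{x}  \ottsym{:}   \fgirardtrans{ \ottnt{A} }   \ottsym{;}  \emptyset \, \vdash \,  \ftrans{ \ottnt{M} }   \ottsym{:}   \fgirardtrans{ \ottnt{B} } $, and the derivable rule for $\overline{\lambda}$ yields $ \fgirardtrans{ \Delta }   \ottsym{;}   \fgirardtrans{ \Gamma }   \ottsym{;}  \emptyset \, \vdash \, \overline{\lambda}  \ottmv{x}  \ottsym{:}   \fgirardtrans{ \ottnt{A} }   \ottsym{.}   \ftrans{ \ottnt{M} }   \ottsym{:}  \ottsym{!}   \fgirardtrans{ \ottnt{A} }   \multimap   \fgirardtrans{ \ottnt{B} } $, which is exactly $\fgirardtrans{A \sarrow B}$. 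For $\rimpe$, both premises are translated with empty $\Sigma$, and the derivable rule for $\overline{@}$ applies directly (its side-condition that the argument is typed in the empty linear context is met). For $\rboxi$, the premise $\Delta  \ottsym{;}  \emptyset \, \vdash \, \ottnt{M}  \ottsym{:}  \ottnt{A}$ translates to $ \fgirardtrans{ \Delta }   \ottsym{;}  \emptyset  \ottsym{;}  \emptyset \, \vdash \,  \ftrans{ \ottnt{M} }   \ottsym{:}   \fgirardtrans{ \ottnt{A} } $, matching the premise shape of $\rbangboxi$, whose conclusion $ \fgirardtrans{ \Delta }   \ottsym{;}   \fgirardtrans{ \Gamma }   \ottsym{;}  \emptyset \, \vdash \,  \bangbox   \ftrans{ \ottnt{M} }    \ottsym{:}   \bangbox  \fgirardtrans{ \ottnt{A} }  $ is as required. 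For $\rboxe$, the two premises translate appropriately and $\rbangboxe$ closes the case, noting that the extra modal assumption $x : A$ becomes $x :  \fgirardtrans{ \ottnt{A} } $ in the $\Delta$-part.

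For part~(2), I would induct on the derivation of $\ottnt{M}  \leadsto  \ottnt{M'}$. The compatible-closure cases are routine: the translation is defined homomorphically on all term constructors (with $\sarrow$-application going through $\overline{@}$, which is itself built compatibly), so a reduction in a subterm lifts to a reduction in the same position of the translated term, and transitivity of $ \leadsto^+ $ handles it. The two base cases are $\rbetaimp$ and $\rbetabox$. For $\rbetabox$, $\ftrans{-}$ sends $\ottkw{let} \,  \Box  \ottmv{x}   \ottsym{=}   \Box  \ottnt{N}  \, \ottkw{in} \, \ottnt{M}$ to $\ottkw{let} \,  \bangbox  \ottmv{x}   \ottsym{=}   \bangbox   \ftrans{ \ottnt{N} }   \, \ottkw{in} \,  \ftrans{ \ottnt{M} } $, which $\rbetabangbox$-reduces in one step to $ \ftrans{ \ottnt{M} }  [ x :=  \ftrans{ \ottnt{N} }  ]$; this equals $\ftrans{ \ottnt{M}  [ x :=  \ottnt{N}  ] }$ by a substitution lemma for $\ftrans{-}$ (which I would state and prove by an easy structural induction: $\ftrans{-}$ commutes with capture-avoiding substitution because it is homomorphic and respects variables). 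For $\rbetaimp$, $\ftrans{ \ottsym{(}  \lambda  \ottmv{x}  \ottsym{:}  \ottnt{A}  \ottsym{.}  \ottnt{M}  \ottsym{)} \, \ottnt{N} }$ is by definition $(\overline{\lambda}  \ottmv{x}  \ottsym{:}   \fgirardtrans{ \ottnt{A} }   \ottsym{.}   \ftrans{ \ottnt{M} } )  \overline{@}   \ftrans{ \ottnt{N} } $, and the last clause of the derivable-full-function-space lemma gives $(\overline{\lambda}  \ottmv{x}  \ottsym{:}   \fgirardtrans{ \ottnt{A} }   \ottsym{.}   \ftrans{ \ottnt{M} } )  \overline{@}   \ftrans{ \ottnt{N} }   \leadsto^+   \ftrans{ \ottnt{M} }  [ x :=  \ftrans{ \ottnt{N} }  ]$, which again equals $\ftrans{ \ottnt{M}  [ x :=  \ottnt{N}  ] }$ by the substitution lemma.

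I expect the main obstacle to be the $\overline{@}$-based simulation of $\beta$-reduction, i.e., unfolding the meta-notation: $(\overline{\lambda}  \ottmv{x}  \ottsym{:}  \ottnt{A}  \ottsym{.}  \ottnt{M})  \overline{@}  \ottnt{N}$ abbreviates $(\lambda  \ottmv{y}  \ottsym{:}  \ottsym{!}  \ottnt{A}  \ottsym{.}  \ottkw{let} \,  !  \ottmv{x}   \ottsym{=}  \ottmv{y} \, \ottkw{in} \, \ottnt{M}) \, \ottsym{(}   !  \ottnt{N}   \ottsym{)} $, which $\rbetalimp$-reduces to $\ottkw{let} \,  !  \ottmv{x}   \ottsym{=}   !  \ottnt{N}  \, \ottkw{in} \, \ottnt{M}$ and then $\rbetabang$-reduces to $ \ottnt{M}  [  \ottmv{x}  :=  \ottnt{N}  ] $ — a two-step $ \leadsto^+ $, with the freshness of $y$ ensuring no capture. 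This is exactly the content of the derivable-full-function-space lemma, so once that lemma is in hand the argument is immediate; the only care needed is checking that the homomorphic clauses of $\ftrans{-}$ through $\overline{@}$ still let a reduction buried inside $M$ or $N$ surface as a reduction of $\ftrans{-}$, which follows because $\overline{@}$ places its arguments in evaluation-neutral positions of a fixed context. A secondary, purely bookkeeping point is the three-zone context discipline: one must consistently check in part~(1) that the $\Sigma$-zone stays empty throughout (it does, because $\lambdabox$ has no linear zone and $\ftrans{-}$ never introduces linear assumptions), so the side-conditions of $\rbangi$, $\rbangboxi$, and the $\overline{@}$-rule are always satisfied.
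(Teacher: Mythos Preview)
Your proposal is correct and follows exactly the approach the paper takes: induction on the typing derivation for part~(1) and on the reduction derivation for part~(2), using the derivable full-function-space lemma to handle the $\sarrow$ cases. Your write-up simply spells out the case analysis that the paper leaves implicit (the paper's proof is a one-line ``by induction''), and the auxiliary substitution lemma you invoke for $\ftrans{-}$ is the natural ingredient the paper tacitly relies on as well.
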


\begin{proof}
  By induction on the derivation of $\Delta  \ottsym{;}  \Gamma \, \vdash \, \ottnt{M}  \ottsym{:}  \ottnt{A}$ and $\ottnt{M}  \leadsto  \ottnt{M'}$ in $\lambdabox$, respectively.
\end{proof}

From the logical point of view, Theorem~\ref{thm:embedding}.1 can be seen as another S4-version of Girard translation (in the style of natural deduction) that corresponds to Theorem~\ref{thm:modal_girard_translation}; and Theorem~\ref{thm:embedding}.2 gives a justification that the S4-version of Girard translation is correct with respect to the level of proofs, i.e., it preserves proof-normalizations as well as provability.

\section{Axiomatization of modal linear logic}
\label{sec:axiomatization}

\begin{rulefigure}{fig:clbangbox}{Definition of $\clvbangbox$.}
    \hspace{-1em}
  \begin{tabular}{c}
    {
    \hspace{-1.2em}
    \begin{minipage}{0.98\hsize}
    \paragraph*{Syntactic category}
      \vspace{-1em}
      \begin{alignat*}{3}
        &\text{Types}~~&A, B, C &::= p ~|~ \ottnt{A}  \multimap  \ottnt{B} ~|~ \ottsym{!}  \ottnt{A} ~|~  \bangbox \ottnt{A} \\
        &\text{Terms}~~&M, N, L &::= x ~|~ c ~|~  !  \ottnt{M}  ~|~  \bangbox  \ottnt{M} 
      \end{alignat*}
    \end{minipage}
    }
  \end{tabular}

  \vspace{0.5em}
  \rule{\textwidth}{0.4pt}

  \vspace{0.2em}

  \hspace{-1em}
  \begin{tabular}{c}
    {
    \hspace{-1.2em}
    \begin{minipage}{0.98\hsize}
      \paragraph*{Typing rule}
        \vspace{-0.7em}
        \begin{center}
        \begin{tabular}{c}
          \begin{minipage}{0.30\hsize}
            \begin{prooftree}
                \AXC{($c$ is a combinator)}
              \RightLabel{$ \hcomb $}
              \UIC{$\Delta  \ottsym{;}  \Gamma  \ottsym{;}  \emptyset  \vdash  c : \ftypeof{c}$}
            \end{prooftree}
          \end{minipage}

          \begin{minipage}{0.60\hsize}
            \begin{prooftree}
                \AXC{$\Delta  \ottsym{;}  \Gamma  \ottsym{;}  \Sigma \, \vdash \, \ottnt{M}  \ottsym{:}  \ottnt{A}  \multimap  \ottnt{B}$}
                \AXC{$\Delta  \ottsym{;}  \Gamma  \ottsym{;}  \Sigma' \, \vdash \, \ottnt{N}  \ottsym{:}  \ottnt{A}$}
              \RightLabel{$ \hmp $}
              \BIC{$\Delta  \ottsym{;}  \Gamma  \ottsym{;}  \Sigma  \ottsym{,}  \Sigma' \, \vdash \,  \ottnt{M} \, \ottnt{N}   \ottsym{:}  \ottnt{B}$}
            \end{prooftree}
          \end{minipage}
        \end{tabular}

        \begin{tabular}{c}
          \begin{minipage}{0.30\hsize}
            \begin{prooftree}
                \AXC{$ $}
              \RightLabel{$ \rlax $}
              \UIC{$\Delta  \ottsym{;}  \Gamma  \ottsym{;}  \ottmv{x}  \ottsym{:}  \ottnt{A} \, \vdash \, \ottmv{x}  \ottsym{:}  \ottnt{A}$}
            \end{prooftree}
          \end{minipage}

          \begin{minipage}{0.30\hsize}
            \begin{prooftree}
                \AXC{$ $}
              \RightLabel{$ \rbax $}
              \UIC{$\Delta  \ottsym{;}  \Gamma  \ottsym{,}  \ottmv{x}  \ottsym{:}  \ottnt{A}  \ottsym{;}  \emptyset \, \vdash \, \ottmv{x}  \ottsym{:}  \ottnt{A}$}
            \end{prooftree}
          \end{minipage}

          \begin{minipage}{0.32\hsize}
            \begin{prooftree}
                \AXC{$ $}
              \RightLabel{$ \rbbax $}
              \UIC{$\Delta  \ottsym{,}  \ottmv{x}  \ottsym{:}  \ottnt{A}  \ottsym{;}  \Gamma  \ottsym{;}  \emptyset \, \vdash \, \ottmv{x}  \ottsym{:}  \ottnt{A}$}
            \end{prooftree}
          \end{minipage}
        \end{tabular}

        \begin{tabular}{c}
          \begin{minipage}{0.45\hsize}
            \begin{prooftree}
                \AXC{$\Delta  \ottsym{;}  \Gamma  \ottsym{;}  \emptyset \, \vdash \, \ottnt{M}  \ottsym{:}  \ottnt{A}$}
              \RightLabel{$ \hbnec $}
              \UIC{$\Delta  \ottsym{;}  \Gamma  \ottsym{;}  \emptyset \, \vdash \,  !  \ottnt{M}   \ottsym{:}  \ottsym{!}  \ottnt{A}$}
            \end{prooftree}
          \end{minipage}

          \begin{minipage}{0.45\hsize}
            \begin{prooftree}
                \AXC{$\Delta  \ottsym{;}  \emptyset  \ottsym{;}  \emptyset \, \vdash \, \ottnt{M}  \ottsym{:}  \ottnt{A}$}
              \RightLabel{$ \hbbnec $}
              \UIC{$\Delta  \ottsym{;}  \Gamma  \ottsym{;}  \emptyset \, \vdash \,  \bangbox  \ottnt{M}   \ottsym{:}   \bangbox \ottnt{A} $}
            \end{prooftree}
          \end{minipage}
        \end{tabular}
        \end{center}
    \end{minipage}
    }
  \end{tabular}

  \vspace{0.5em}
  \rule{\textwidth}{0.4pt}

  \vspace{0.2em}

  \hspace{-1em}
  \begin{tabular}{c|c}
  {
    \hspace{-1em}
    \begin{minipage}{0.55\hsize}
    \paragraph*{Combinator}
      \begin{itemize}
        \item $\vdash \,  \mathrm{I}   \ottsym{:}  \ottnt{A}  \multimap  \ottnt{A}$
        \item $\vdash \,  \mathrm{B}   \ottsym{:}   (  \ottnt{B}  \multimap  \ottnt{C}  )   \multimap   (  \ottnt{A}  \multimap  \ottnt{B}  )   \multimap  \ottnt{A}  \multimap  \ottnt{C}$
        \item $\vdash \,  \mathrm{C}   \ottsym{:}   (  \ottnt{A}  \multimap  \ottnt{B}  \multimap  \ottnt{C}  )   \multimap  \ottnt{B}  \multimap  \ottnt{A}  \multimap  \ottnt{C}$
        \item $\vdash \,  \mathrm{S}^{\sbangs}   \ottsym{:}   (   \sbangs  \ottnt{A}   \multimap  \ottnt{B}  \multimap  \ottnt{C}  )   \multimap   (   \sbangs  \ottnt{A}   \multimap  \ottnt{B}  )   \multimap   \sbangs  \ottnt{A}   \multimap  \ottnt{C}$
        \item $\vdash \,  \mathrm{K}^{\sbangs}   \ottsym{:}  \ottnt{A}  \multimap   \sbangs  \ottnt{B}   \multimap  \ottnt{A}$
        \item $\vdash \,  \mathrm{W}^{\sbangs}   \ottsym{:}   (   \sbangs  \ottnt{A}   \multimap   \sbangs  \ottnt{A}   \multimap  \ottnt{B}  )   \multimap   \sbangs  \ottnt{A}   \multimap  \ottnt{B}$
        \item $\vdash \,  \mathrm{T}^{\sbangs}   \ottsym{:}   \sbangs  \ottnt{A}   \multimap  \ottnt{A}$
        \item $\vdash \,  \mathrm{D}^{\sbangs}   \ottsym{:}   \sbangs   (  \ottnt{A}  \multimap  \ottnt{B}  )    \multimap   \sbangs  \ottnt{A}   \multimap   \sbangs  \ottnt{B} $
        \item $\vdash \,  \mathrm{4}^{\sbangs}   \ottsym{:}   \sbangs  \ottnt{A}   \multimap   \sbangs   \sbangs  \ottnt{A}  $
        \item $\vdash \,  \mathrm{E}   \ottsym{:}   \bangbox \ottnt{A}   \multimap  \ottsym{!}  \ottnt{A}$
      \end{itemize}
      where $\sbangs \in \{ !, \bangbox~ \}$
    \end{minipage}
  }
  &
  {
    \hspace{-0.8em}
    \begin{minipage}{0.40\hsize}
    \paragraph*{Reduction}
      \vspace{-1em}
      \begin{alignat*}{2}
        &  \mathrm{I}  \, \ottnt{M}                  & & \leadsto ~\ottnt{M}\\
        &    \mathrm{B}  \, \ottnt{M}  \, \ottnt{N}  \, \ottnt{L}              & & \leadsto ~ \ottnt{M} \, \ottsym{(}   \ottnt{N} \, \ottnt{L}   \ottsym{)} \\
        &    \mathrm{C}  \, \ottnt{M}  \, \ottnt{N}  \, \ottnt{L}              & & \leadsto ~  \ottnt{M} \, \ottnt{L}  \, \ottnt{N} \\
        &    \mathrm{S}^{\sbangs}  \, \ottnt{M}  \, \ottnt{N}  \, \ottsym{(}   \sbangs  \ottnt{L}   \ottsym{)}       & & \leadsto ~  \ottnt{M} \, \ottsym{(}   \sbangs  \ottnt{L}   \ottsym{)}  \, \ottsym{(}   \ottnt{N} \, \ottsym{(}   \sbangs  \ottnt{L}   \ottsym{)}   \ottsym{)} \\
        &   \mathrm{K}^{\sbangs}  \, \ottnt{M}  \, \ottsym{(}   \sbangs  \ottnt{N}   \ottsym{)}         & & \leadsto ~\ottnt{M}\\
        &   \mathrm{W}^{\sbangs}  \, \ottnt{M}  \, \ottsym{(}   \sbangs  \ottnt{N}   \ottsym{)}         & & \leadsto ~  \ottnt{M} \, \ottsym{(}   \sbangs  \ottnt{N}   \ottsym{)}  \, \ottsym{(}   \sbangs  \ottnt{N}   \ottsym{)} \\
        &  \mathrm{T}^{\sbangs}  \, \ottsym{(}   \sbangs  \ottnt{M}   \ottsym{)}           & & \leadsto ~\ottnt{M}\\
        &   \mathrm{D}^{\sbangs}  \, \ottsym{(}   \sbangs  \ottnt{M}   \ottsym{)}  \, \ottsym{(}   \sbangs  \ottnt{N}   \ottsym{)}  & & \leadsto ~ \sbangs  \ottsym{(}   \ottnt{M} \, \ottnt{N}   \ottsym{)} \\
        &  \mathrm{4}^{\sbangs}  \, \ottsym{(}   \sbangs  \ottnt{M}   \ottsym{)}           & & \leadsto ~ \sbangs   \sbangs  \ottnt{M}  \\
        &  \mathrm{E}  \,  \bangbox  \ottnt{M}                & & \leadsto ~ !  \ottnt{M} 
      \end{alignat*}
      \vspace{-2em}

      \noindent
      where $\sbangs \in \{ !, \bangbox~ \}$
    \end{minipage}
  }
  \end{tabular}

\end{rulefigure}

\noindent
We give an axiomatic characterization of the intuitionistic modal linear logic.
To do so, we define a typed combinatory logic, called $\clvbangbox$, which can be seen as a Hilbert-style deductive system of modal linear logic through the Curry--Howard correspondence.
In this section, we only aim to provide the equivalence between $\njbangbox$ and the Hilbert-style,
while $\clvbangbox$ satisfies several desirable properties, e.g., the subject reduction and the strong normalizability.

The definition of $\clvbangbox$ is given in Figure~\ref{fig:clbangbox}.
The set of types has the same structure as that in $\lambdabangbox$.
A term is either a \emph{variable}, a \emph{combinator}, a necessitated term by `$!$', or a necessitated term by `$\bangbox$~'.
The notions of \emph{(type) context} and \emph{(type) judgment} are defined similarly to those of $\lambdabangbox$.

Every combinator $c$ has its type as defined in the list in the figure, and is denoted by $\ftypeof{c}$.
Then, the typing rules are described as follows: $\hcomb$ and $\hmp$ are the standard rules,
which logically correspond to an axiom rule of the set of axiomata, and \emph{modus ponens}, respectively.
The others are defined by the same way as in $\lambdabangbox$.

The \emph{reduction} $ \leadsto $ of combinators is defined to be the least compatible relation on terms generated by the reduction rules listed in the figure.

\begin{remark}
$\clvbangbox$
can be seen as
an extension of \emph{linear combinatory algebra} of Abramsky et al.\,\cite{AHS:goi_and_lca} with the $\bangbox~$-modality,
or equivalently, a linear-logical reconstruction of Pfenning's modally-typed combinatory logic\,\cite{P:modal_combinatory_logic}.
The combinators $ \mathrm{T}^! $, $ \mathrm{D}^! $, $ \mathrm{4}^! $ represent the S4 axiomata for the $!$-modality,
and similarly, $ \mathrm{T}^{\smallbangbox} $, $ \mathrm{D}^{\smallbangbox} $, $ \mathrm{4}^{\smallbangbox}\, $ represent those for the $\bangbox~$-modality.
$ \mathrm{E} $ is the only one combinator to characterize the strength between the two modalities.
\end{remark}

As we defined $\njbangbox$ from $\lambdabangbox$, 
we can define the Hilbert-style deductive system (with open assumptions) for the intuitionistic modal linear logic 
via $\clvbangbox$.

\begin{definition}[Hilbert-style]
  A Hilbert-style deductive system for modal linear logic, called $\hjbangbox$,
  is defined to be one that is extracted from $\clvbangbox$ by erasing term annotations.
\end{definition}

\begin{fact}[Curry--Howard correspondnece]
  \label{fact:Hilbert_curry-howard}
  There is a one-to-one correspondence between $\hjbangbox$ and $\clvbangbox$,
  which preserves provability/typability and proof-normalizability/reducibility.
\end{fact}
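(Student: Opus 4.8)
The plan is to prove this exactly as the earlier Curry--Howard correspondence between $\njbangbox$ and $\lambdabangbox$, since by definition $\hjbangbox$ is obtained from $\clvbangbox$ by erasing term annotations. First I would make the erasure map $|\cdot|$ explicit: it sends a $\clvbangbox$ typing derivation of $\Delta ; \Gamma ; \Sigma \vdash M : A$ to the $\hjbangbox$ proof of $\Delta ; \Gamma ; \Sigma \vdash A$ obtained by deleting ``$M :$'' from every judgment. The core observation is that this map is a bijection on derivations (up to $\alpha$-renaming of variables): each $\clvbangbox$ typing rule is in one-to-one correspondence with an $\hjbangbox$ inference rule, and in each case the term occurring in the conclusion is uniquely determined by the rule together with the terms supplied in the premises --- $\hcomb$ records the combinator named by the axiom schema, $\hmp$ forms an application, $\rbangi$ and $\rbangboxi$ apply the corresponding modal constructor, and the three variable rules $\rlax$, $\rbax$, $\rbbax$ supply the used variable. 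Hence from any $\hjbangbox$ proof one reconstructs, bottom-up, a unique $\clvbangbox$ term, so $|\cdot|$ is a bijection on derivations.

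Preservation of provability/typability is then immediate: $\Delta ; \Gamma ; \Sigma \vdash A$ holds in $\hjbangbox$ iff it is $|\mathcal{D}|$ for some $\clvbangbox$ derivation $\mathcal{D}$ of $\Delta ; \Gamma ; \Sigma \vdash M : A$. For proof-normalizability/reducibility I would transport the combinator reduction $\leadsto$ of $\clvbangbox$ across $|\cdot|$: a redex in a $\clvbangbox$ term is a combinator $c$ applied to enough arguments via $\hmp$, so the matching pattern in an $\hjbangbox$ proof is an instance of the axiom for $c$ followed by the appropriate modus-ponens steps, and its contraction rearranges the sub-proofs exactly according to the right-hand sides of the reduction rules in Figure~\ref{fig:clbangbox}. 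Because $|\cdot|$ is defined purely on rule structure, it sends single $\clvbangbox$ reduction steps to single $\hjbangbox$ proof-reduction steps and back, so reduction sequences --- and hence strong normalization, and the property of being a normal form --- correspond on both sides.

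The delicate point is arguing that $|\cdot|$ is genuinely injective: one must be precise that an $\hjbangbox$ proof retains enough bookkeeping (the name of the hypothesis used at a variable rule, and the identity of the axiom schema invoked at $\hcomb$) to pin down the $\clvbangbox$ term, which is where the ``up to $\alpha$-equivalence'' convention and the convention that axiom schemata are labelled by their combinators do the work. Everything else is a routine transcription of the argument already used for $\njbangbox$ and $\lambdabangbox$.
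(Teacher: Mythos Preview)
The paper does not prove this statement at all: it is labeled a \emph{Fact} and no argument is supplied, exactly as with the earlier Curry--Howard correspondence between $\njbangbox$ and $\lambdabangbox$. Since $\hjbangbox$ is \emph{defined} to be $\clvbangbox$ with term annotations erased, the authors evidently regard the correspondence as immediate from the definition and do not spell anything out.

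Your proposal is therefore not so much a different route as an explicit unfolding of what the paper takes for granted. The argument you give is the standard one and is correct; in particular you are right to flag that injectivity of the erasure map depends on the logical side retaining enough information (which axiom schema is invoked at $\hcomb$, and which hypothesis is discharged at a variable rule) to reconstruct the term, and that this is only determined up to $\alpha$-renaming. That is the one point where a careless formulation could go wrong, and you handle it appropriately.
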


The deduction theorem of $\hjbangbox$ can be obtained as a consequence of the so-called
\emph{bracket abstraction} of $\clvbangbox$ through Fact~\ref{fact:Hilbert_curry-howard},
which allows us to show the equivalence between $\hjbangbox$ and $\njbangbox$.
Therefore, the modal linear logic is indeed axiomatized by $\hjbangbox$. 

\begin{theorem}[Deduction theorem]\ 
  \label{thm:deduction_theorem}
  \begin{enumerate}
    \item If $\Delta  \ottsym{;}  \Gamma  \ottsym{;}  \Sigma  \ottsym{,}  \ottmv{x}  \ottsym{:}  \ottnt{A} \, \vdash \, \ottnt{M}  \ottsym{:}  \ottnt{B}$, then $\Delta  \ottsym{;}  \Gamma  \ottsym{;}  \Sigma \, \vdash \, \ottsym{(}   \lambda_{*}  \ottmv{x} .  \ottnt{M}   \ottsym{)}  \ottsym{:}   (  \ottnt{A}  \multimap  \ottnt{B}  ) $;
    \item If $\Delta  \ottsym{;}  \Gamma  \ottsym{,}  \ottmv{x}  \ottsym{:}  \ottnt{A}  \ottsym{;}  \Sigma \, \vdash \, \ottnt{M}  \ottsym{:}  \ottnt{B}$, then $\Delta  \ottsym{;}  \Gamma  \ottsym{;}  \Sigma \, \vdash \, \ottsym{(}   \lambda^{!}_{*}  \ottmv{x} .  \ottnt{M}   \ottsym{)}  \ottsym{:}   (  \ottsym{!}  \ottnt{A}  \multimap  \ottnt{B}  ) $;
    \item If $\Delta  \ottsym{,}  \ottmv{x}  \ottsym{:}  \ottnt{A}  \ottsym{;}  \Gamma  \ottsym{;}  \Sigma \, \vdash \, \ottnt{M}  \ottsym{:}  \ottnt{B}$, then $\Delta  \ottsym{;}  \Gamma  \ottsym{;}  \Sigma \, \vdash \, \ottsym{(}   \lambda^{\smallbangbox}_{*}  \ottmv{x} .  \ottnt{M}   \ottsym{)}  \ottsym{:}   (   \bangbox \ottnt{A}   \multimap  \ottnt{B}  ) $.
  \end{enumerate}
  where $\ottsym{(}   \lambda_{*}  \ottmv{x} .  \ottnt{M}   \ottsym{)}$, $\ottsym{(}   \lambda^{!}_{*}  \ottmv{x} .  \ottnt{M}   \ottsym{)}$, $\ottsym{(}   \lambda^{\smallbangbox}_{*}  \ottmv{x} .  \ottnt{M}   \ottsym{)}$
  are bracket abstraction operations that take a variable $x$ and a $\clvbangbox$-term $M$ 
  and returns a $\clvbangbox$-term,
  and the definitions are given in the appendix.
\end{theorem}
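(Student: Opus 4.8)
The plan is to define the three bracket-abstraction operators $\lambda_* x. M$, $\lambda^{!}_* x. M$, and $\lambda^{\smallbangbox}_* x. M$ by simultaneous recursion on the structure of $M$, and then to prove each of the three implications by induction on the typing derivation of $M$, verifying in every case that the resulting $\clvbangbox$-term has the claimed type. A preliminary observation organizes the case analysis: by inspection of the necessitation rules $\hbnec$ and $\hbbnec$, a variable $x$ in the linear zone can occur in $M$ only outside every $!$- and $\bangbox$-box, and then, by linearity, exactly once; a variable $x$ in the $\Gamma$-zone can occur under $!$-boxes but never under a $\bangbox$-box; a variable $x$ in the $\Delta$-zone may occur arbitrarily deep under either modality. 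This is precisely what forces the three operators to be built from different combinators.

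First I would record the defining clauses. For $\lambda_* x. M$ these are the usual $\mathrm{BCI}$ clauses: $\lambda_* x. x \defeq \mathrm{I}$; $\lambda_* x. ( M \, N) \defeq \mathrm{C} \, (\lambda_* x. M) \, N$ if $x \in \fv{M}$; and $\lambda_* x. ( M \, N) \defeq \mathrm{B} \, M \, (\lambda_* x. N)$ if $x \in \fv{N}$ --- no box clause is needed, since a linear variable never occurs under a modality. For $\lambda^{!}_* x. M$ I use the $!$-indexed combinators: $\lambda^{!}_* x. x \defeq \mathrm{T}^{!}$; $\lambda^{!}_* x. y \defeq \mathrm{K}^{!} \, y$ when $y$ is a variable other than $x$ or a combinator; $\lambda^{!}_* x. ( M \, N) \defeq \mathrm{S}^{!} \, (\lambda^{!}_* x. M) \, (\lambda^{!}_* x. N)$, which absorbs weakening and contraction on $x$ uniformly; $\lambda^{!}_* x. ( \bangbox N) \defeq \mathrm{K}^{!} \, ( \bangbox N)$, which is legitimate since $x \notin \fv{N}$ in this case; and the one subtle clause
\[
  \lambda^{!}_* x. ( ! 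N) \defeq \mathrm{B} \, (\mathrm{D}^{!} \, ( ! (\lambda^{!}_* x. N))) \, \mathrm{4}^{!}.
\]
Here $\lambda^{!}_* x. N$ is, by the induction hypothesis, typed with an empty linear zone --- inherited from the premise of $\hbnec$ --- and hence may be necessitated to give $! (\lambda^{!}_* x. N) : ! (! A \multimap B)$, after which $\mathrm{D}^{!}$ and $\mathrm{4}^{!}$ assemble exactly $! A \multimap ! B$. The operator $\lambda^{\smallbangbox}_* x. M$ is analogous, using the $\bangbox$-indexed combinators in the variable and application clauses, with $\lambda^{\smallbangbox}_* x. ( \bangbox N) \defeq \mathrm{B} \, (\mathrm{D}^{\smallbangbox} \, ( \bangbox (\lambda^{\smallbangbox}_* x. N))) \, \mathrm{4}^{\smallbangbox}$ --- where $\lambda^{\smallbangbox}_* x. N$ has empty $\Gamma$- and linear zones and so may be $\bangbox$-necessitated --- and with the cross-modal clause
\[
  \lambda^{\smallbangbox}_* x. ( ! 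N) \defeq \mathrm{B} \, (\mathrm{D}^{!} \, ( ! (\lambda^{\smallbangbox}_* x. N))) \, (\mathrm{B} \, \mathrm{E} \, \mathrm{4}^{\smallbangbox}),
\]
where $\mathrm{E} : \bangbox C \multimap ! C$ is exactly what converts the $\bangbox \bangbox A$ produced by $\mathrm{4}^{\smallbangbox}$ into the $! (\bangbox A)$ demanded by $\mathrm{D}^{!}$.

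With the definitions in place, the soundness proof is a routine induction on the derivation of $\Delta ; \Gamma ; \Sigma , x : A \vdash M : B$ --- together with the two variants having $x$ in the $\Gamma$- or the $\Delta$-zone --- in which the variable and combinator rules and $\hmp$ are discharged by the base and application clauses, and $\hbnec$, $\hbbnec$ by the box clauses, each time reusing the types of the combinators listed in Figure~\ref{fig:clbangbox}. The hard part will be two bookkeeping matters rather than anything conceptual: first, checking in the box clauses that the recursively produced abstraction still has the empty zones that $\hbnec$ and $\hbbnec$ demand; and second, getting the cross-modal clause for $\lambda^{\smallbangbox}_* x. ( ! N)$ right, since it is the single point where the strength combinator $\mathrm{E}$ is indispensable and where the occurrence analysis --- a $\Delta$-variable may sit arbitrarily deep under a $!$-box --- is genuinely used. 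If one additionally wants the computational counterparts $( \lambda_* x. M) \, N \leadsto^+ M [ x := N ]$, $( \lambda^{!}_* x. M) \, ( ! N) \leadsto^+ M [ x := N ]$, and $( \lambda^{\smallbangbox}_* x. M) \, ( \bangbox N) \leadsto^+ M [ x := N ]$, these follow by the same induction from the reduction rules of the combinators, but they are not required for the statement as given.
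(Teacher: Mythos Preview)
Your proposal is correct and follows the same approach as the paper: define the three bracket-abstraction operators by recursion on the term, then verify by induction on the typing derivation that the output is well-typed, the crucial clause being the cross-modal one for $\lambda^{\smallbangbox}_* x.(!N)$ where the strength combinator $\mathrm{E}$ enters.

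The one place your definitions diverge from the paper's is in the application case for $\lambda^{!}_*$ and $\lambda^{\smallbangbox}_*$: you use $\mathrm{S}^{\delta}$ uniformly (relying on the $\mathrm{K}^{\delta}$ base case to absorb vacuous occurrences), whereas the paper performs a three-way split on whether $x$ occurs in the function, the argument, or both, using $\mathrm{C}$, $\mathrm{B}$, or $\mathrm{S}^{\delta}$ respectively. Both choices are standard and both type-check; the paper's split yields shorter combinatory terms, while yours keeps the case analysis in the inductive proof shorter. Your box clauses for $!N$ and $\bangbox N$, and in particular the use of $\mathrm{B}\,\mathrm{E}\,\mathrm{4}^{\smallbangbox}$ to mediate between the two modalities, match the paper's definitions exactly.
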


\begin{proof}
  By induction on the derivation. The proof is just a type-checking of the result of the bracket abstraction operations.
\end{proof}

\begin{theorem}[Equivalence]
  $\Delta  \ottsym{;}  \Gamma  \ottsym{;}  \Sigma \, \vdash \, \ottnt{A}$ in $\hjbangbox$ if and only if $\Delta  \ottsym{;}  \Gamma  \ottsym{;}  \Sigma \, \vdash \, \ottnt{A}$ in $\njbangbox$.
\end{theorem}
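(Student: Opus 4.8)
The plan is to reduce the statement, via the two Curry--Howard correspondences (Fact~\ref{fact:Hilbert_curry-howard} for $\hjbangbox$/$\clvbangbox$, and the analogous fact relating $\njbangbox$ and $\lambdabangbox$), to a statement about the term-assignment systems: it suffices to show that $\Delta  \ottsym{;}  \Gamma  \ottsym{;}  \Sigma \, \vdash \, \ottnt{M}  \ottsym{:}  \ottnt{A}$ is derivable in $\clvbangbox$ for some $M$ if and only if $\Delta  \ottsym{;}  \Gamma  \ottsym{;}  \Sigma \, \vdash \, \ottnt{N}  \ottsym{:}  \ottnt{A}$ is derivable in $\lambdabangbox$ for some $N$. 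Both directions are proved by induction on the given derivation.

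For the ``only-if'' direction I would translate a $\clvbangbox$-derivation into a $\lambdabangbox$-derivation. The rules $\rlax$, $\rbax$, $\rbbax$ are literally shared between the two systems; $\hmp$ is an instance of $\rlimpe$; and $\hbnec$, $\hbbnec$ coincide with $\rbangi$, $\rbangboxi$. Hence the only real content is the axiom rule $\hcomb$, where for each combinator $c$ one exhibits a closed $\lambdabangbox$-term of type $\ftypeof{c}$: the linear combinators $ \mathrm{I} ,  \mathrm{B} ,  \mathrm{C} $ are the usual $\lambda$-terms, while $ \mathrm{K}^{\sbangs} ,  \mathrm{W}^{\sbangs} ,  \mathrm{T}^{\sbangs} ,  \mathrm{D}^{\sbangs} ,  \mathrm{4}^{\sbangs} $ are built from $\ottkw{let} \, \sbangs  \ottmv{x}   \ottsym{=}  \ottnt{M} \, \ottkw{in} \, \ottnt{N}$ together with $\rbangi$/$\rbangboxi$ (for instance $ \mathrm{K}^{\sbangs} $ is $\lambda  \ottmv{x}  \ottsym{:}  \ottnt{A}  \ottsym{.}  \lambda  \ottmv{y}  \ottsym{:}   \sbangs  \ottnt{B}   \ottsym{.}  \ottkw{let} \, \sbangs  \ottmv{w}   \ottsym{=}  \ottmv{y} \, \ottkw{in} \, \ottmv{x}$, which typechecks precisely because $w$ enters the weakenable $\Gamma$- or $\Delta$-zone), and $ \mathrm{E} $ is $\lambda  \ottmv{x}  \ottsym{:}   \bangbox \ottnt{A}   \ottsym{.}  \ottkw{let} \,  \bangbox  \ottmv{z}   \ottsym{=}  \ottmv{x} \, \ottkw{in} \,  !  \ottmv{z} $. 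Verifying that each of these terms has the stated type is a routine application of the typing rules, mildly using that weakening on the $\Delta$- and $\Gamma$-zones is admissible (which is in fact built into the conclusions of the variable and necessitation rules).

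For the ``if'' direction I would translate a $\lambdabangbox$-derivation into a $\clvbangbox$-derivation by bracket abstraction. Again $\rlax$, $\rbax$, $\rbbax$, $\rlimpe$, $\rbangi$, $\rbangboxi$ carry over verbatim. The implication introduction $\rlimpi$ is discharged by Theorem~\ref{thm:deduction_theorem}.1: from $\Delta  \ottsym{;}  \Gamma  \ottsym{;}  \Sigma  \ottsym{,}  \ottmv{x}  \ottsym{:}  \ottnt{A} \, \vdash \, \ottnt{M}  \ottsym{:}  \ottnt{B}$ in $\clvbangbox$ (supplied by the induction hypothesis) one obtains $\Delta  \ottsym{;}  \Gamma  \ottsym{;}  \Sigma \, \vdash \, \ottsym{(}   \lambda_{*}  \ottmv{x} .  \ottnt{M}   \ottsym{)}  \ottsym{:}   (  \ottnt{A}  \multimap  \ottnt{B}  ) $. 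The elimination rules $\rbange$ and $\rbangboxe$ are handled uniformly: given $\Delta  \ottsym{;}  \Gamma  \ottsym{;}  \Sigma \, \vdash \, \ottnt{M}  \ottsym{:}  \ottsym{!}  \ottnt{A}$ and $\Delta  \ottsym{;}  \Gamma  \ottsym{,}  \ottmv{x}  \ottsym{:}  \ottnt{A}  \ottsym{;}  \Sigma' \, \vdash \, \ottnt{N}  \ottsym{:}  \ottnt{B}$, Theorem~\ref{thm:deduction_theorem}.2 yields $\Delta  \ottsym{;}  \Gamma  \ottsym{;}  \Sigma' \, \vdash \, \ottsym{(}   \lambda^{!}_{*}  \ottmv{x} .  \ottnt{N}   \ottsym{)}  \ottsym{:}   (  \ottsym{!}  \ottnt{A}  \multimap  \ottnt{B}  ) $, and one application of $\hmp$ gives $\Delta  \ottsym{;}  \Gamma  \ottsym{;}  \Sigma  \ottsym{,}  \Sigma' \, \vdash \,  \ottsym{(}   \lambda^{!}_{*}  \ottmv{x} .  \ottnt{N}   \ottsym{)} \, \ottnt{M}   \ottsym{:}  \ottnt{B}$; the case of $\rbangboxe$ is identical, using Theorem~\ref{thm:deduction_theorem}.3 and $ \lambda^{\smallbangbox}_{*}  \ottmv{x} . $. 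In each case the three context zones line up exactly as the rules demand.

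The only delicate point I anticipate is bookkeeping around the three-zone contexts: checking in the ``if'' direction that each bracket abstraction $ \lambda_{*}  \ottmv{x} . $, $ \lambda^{!}_{*}  \ottmv{x} . $, $ \lambda^{\smallbangbox}_{*}  \ottmv{x} . $ removes exactly the intended variable from exactly the intended zone (linear $\Sigma$, ``$!$''-zone $\Gamma$, or ``$\bangbox$~''-zone $\Delta$) while leaving the others untouched, and in the ``only-if'' direction that the closed combinator terms typecheck under the same discipline. Both points are already underwritten — the former by Theorem~\ref{thm:deduction_theorem}, the latter by the permissiveness of the variable and necessitation rules — so no genuinely new difficulty arises, and the equivalence follows by a straightforward rule-by-rule translation in each direction.
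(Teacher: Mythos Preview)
Your proposal is correct and follows essentially the same approach as the paper: a straightforward induction in each direction, with the if-part (from $\njbangbox$ to $\hjbangbox$) hinging on the deduction theorem (Theorem~\ref{thm:deduction_theorem}) accessed via the Curry--Howard correspondence (Fact~\ref{fact:Hilbert_curry-howard}). The paper's own proof is a two-line sketch; you have faithfully unpacked what that sketch means, including the rule-by-rule translation and the use of $\lambda_{*}$, $\lambda^{!}_{*}$, $\lambda^{\smallbangbox}_{*}$ to simulate $\rlimpi$, $\rbange$, $\rbangboxe$ respectively.
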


\begin{proof}
  By straightforward induction. 
  We use Theorem~\ref{thm:deduction_theorem} and Fact~\ref{fact:Hilbert_curry-howard} to show the if-part.
\end{proof}

\begin{corollary}
  $\imellbangbox$, $\njbangbox$, and $\hjbangbox$ are equivalent with respect to provability.
\end{corollary}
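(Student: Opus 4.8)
The plan is to read off the corollary directly from the two \emph{Equivalence} theorems already proved: Theorem~\ref{thm:equivalence}, which relates $\njbangbox$ to $\imellbangbox$, and the Equivalence theorem immediately preceding this corollary, which relates $\hjbangbox$ to $\njbangbox$. First I would pin down the intended meaning of ``equivalent with respect to provability'': for a formula $A$, being \emph{provable} means that $\emptyset  \ottsym{;}  \emptyset  \ottsym{;}  \emptyset \, \vdash \, \ottnt{A}$ is derivable in $\njbangbox$ and in $\hjbangbox$, and that $\vdash \, \ottnt{A}$ (with empty context) is derivable in $\imellbangbox$. Under this reading the statement is the conjunction of three biconditionals among these three notions, of which it suffices to establish two, since the third follows by transitivity.

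Then I would simply instantiate the two Equivalence theorems at the empty contexts $\Delta  \ottsym{=}  \Gamma  \ottsym{=}  \Sigma  \ottsym{=}  \emptyset$. The $\hjbangbox$/$\njbangbox$ equivalence gives at once that $\emptyset  \ottsym{;}  \emptyset  \ottsym{;}  \emptyset \, \vdash \, \ottnt{A}$ in $\hjbangbox$ iff $\emptyset  \ottsym{;}  \emptyset  \ottsym{;}  \emptyset \, \vdash \, \ottnt{A}$ in $\njbangbox$. For Theorem~\ref{thm:equivalence}, observe that the translated context $ \bangbox \emptyset   \ottsym{,}  \ottsym{!}  \emptyset  \ottsym{,}  \emptyset$ collapses to the empty context, so it reads $\emptyset  \ottsym{;}  \emptyset  \ottsym{;}  \emptyset \, \vdash \, \ottnt{A}$ in $\njbangbox$ iff $\vdash \, \ottnt{A}$ in $\imellbangbox$. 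Composing the two biconditionals yields the remaining one, and the corollary follows. No new induction or lemma is needed.

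If one prefers the sharper reading in which the three systems are equivalent on all (suitably translated) judgments, the same two theorems already deliver it: $\hjbangbox$ and $\njbangbox$ use the identical three-context judgment form and derive exactly the same judgments, while Theorem~\ref{thm:equivalence} transports a three-context judgment $\Delta  \ottsym{;}  \Gamma  \ottsym{;}  \Sigma \, \vdash \, \ottnt{A}$ faithfully, in both directions, to the $\imellbangbox$-judgment $ \bangbox \Delta   \ottsym{,}  \ottsym{!}  \Gamma  \ottsym{,}  \Sigma \, \vdash \, \ottnt{A}$. The only point requiring a little care -- and the ``hard part'', such as it is -- is that an arbitrary $\imellbangbox$-context need not have the shape $ \bangbox \Delta   \ottsym{,}  \ottsym{!}  \Gamma  \ottsym{,}  \Sigma$, so on the $\imellbangbox$ side the correspondence is naturally stated for judgments of that form; restricting to the empty context, where this shape issue vanishes, is exactly what makes the closed-theorem statement of the corollary immediate.
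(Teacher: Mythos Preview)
Your proposal is correct and matches the paper's approach: the corollary is stated in the paper without proof, precisely because it is immediate from the two Equivalence theorems you invoke, combined by transitivity. One minor remark: your caveat that an arbitrary $\imellbangbox$-context ``need not have the shape $\bangbox\Delta,\,!\Gamma,\,\Sigma$'' is unnecessary, since any context $\Gamma'$ trivially has this shape by taking $\Delta=\Gamma=\emptyset$ and $\Sigma=\Gamma'$; so the general-judgment reading goes through without restriction as well.
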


\section{Geometry of Interaction Machine}
\label{sec:goim}

In this section, we show a dynamic semantics, called \emph{context semantics}, for the modal linear logic
in the style of geometry of interaction machine\,\cite{M:goim, M:goim_popl}.
As in the usual linear logic, we first define a notion of \emph{proof net} and then
define the machine as a token-passing system over those proof nets.
Thanks to the simplicity of our logic, the definitions are mostly straightforward extension of those for classical $\mell$\,($\cmell$). 

\subsection{Sequent calculus for classical modal linear logic}

We define a sequent calculus of classical modal linear logic, called $\cmellbangbox$.
The reason why we define it in the classical setting is for ease of defining the proof nets in the latter part.

\begin{rulefigure}{fig:cmellbangbox}{Definition of $\cmellbangbox$.}
    \hspace{-1em}
  \begin{tabular}{c|c}
    {
    \hspace{-1em}
    \begin{minipage}{0.7\hsize}
      \paragraph*{Syntactic category}
        \vspace{-1em}
        \begin{align*}
          \text{Formulae}~~A, B, C &::= p ~|~  \ottmv{p} ^\bot 
            ~|~  \ottnt{A}  \stensor  \ottnt{B}  ~|~  \ottnt{A}  \spar  \ottnt{B} 
            ~|~ \ottsym{!}  \ottnt{A} ~|~ ?  \ottnt{A}
            ~|~  \bangbox \ottnt{A}  ~|~  \whynotdia \ottnt{A} 
        \end{align*}
    \end{minipage}
    }
    &
    {
    \hspace{-1em}
    \begin{minipage}{0.20\hsize}
      \paragraph*{Inference rule}
        \vspace{-0.5em}
        \begin{prooftree}
          \def\ScoreOverhang{2pt}
            \AXC{$ $}
          \RightLabel{$ \rax $}
          \UIC{$\vdash \,  \ottnt{A} ^\bot   \ottsym{,}  \ottnt{A}$}
        \end{prooftree}
    \end{minipage}
    }
  \end{tabular}

  \hspace{-0.75em}\rule{0.78\textwidth}{0.4pt}

  \hspace{-1em}
  \begin{tabular}{c}
    {
    \hspace{-1em}
    \begin{minipage}{\hsize}
        \vspace{-0.2em}
        \begin{center}
        \begin{tabular}{c}
          \begin{minipage}{0.25\hsize}
            \begin{prooftree}
              \def\ScoreOverhang{2pt}
              \def\defaultHypSeparation{\hskip 0.5em}
                \AXC{$\vdash \, \Gamma  \ottsym{,}  \ottnt{A}$}
                \AXC{$\vdash \,  \ottnt{A} ^\bot   \ottsym{,}  \Gamma'$}
              \RightLabel{$ \rcut $}
              \BIC{$\vdash \, \Gamma  \ottsym{,}  \Gamma'$}
            \end{prooftree}
          \end{minipage}

          \begin{minipage}{0.22\hsize}
            \begin{prooftree}
              \def\ScoreOverhang{2pt}
              \def\defaultHypSeparation{\hskip 0.5em}
                \AXC{$\vdash \, \Gamma  \ottsym{,}  \ottnt{A}$}
                \AXC{$\vdash \, \Gamma'  \ottsym{,}  \ottnt{B}$}
              \RightLabel{$ \rtensor $}
              \BIC{$\vdash \, \Gamma  \ottsym{,}  \Gamma'  \ottsym{,}   \ottnt{A}  \stensor  \ottnt{B} $}
            \end{prooftree}
          \end{minipage}

          \begin{minipage}{0.17\hsize}
            \begin{prooftree}
              \def\ScoreOverhang{2pt}
                \AXC{$\vdash \, \Gamma  \ottsym{,}  \ottnt{A}  \ottsym{,}  \ottnt{B}$}
              \RightLabel{$ \rpar $}
              \UIC{$\vdash \, \Gamma  \ottsym{,}   \ottnt{A}  \spar  \ottnt{B} $}
            \end{prooftree}
          \end{minipage}

          \begin{minipage}{0.16\hsize}
            \begin{prooftree}
              \def\ScoreOverhang{2pt}
                \AXC{$\vdash \,  \whynotdia \Delta   \ottsym{,}  ?  \Gamma  \ottsym{,}  \ottnt{A}$}
              \RightLabel{$ \rbang $}
              \UIC{$\vdash \,  \whynotdia \Delta   \ottsym{,}  ?  \Gamma  \ottsym{,}  \ottsym{!}  \ottnt{A}$}
            \end{prooftree}
          \end{minipage}

          \begin{minipage}{0.12\hsize}
            \begin{prooftree}
              \def\ScoreOverhang{2pt}
                \AXC{$\vdash \, \Gamma  \ottsym{,}  \ottnt{A}$}
              \RightLabel{$ \rwhynot $}
              \UIC{$\vdash \, \Gamma  \ottsym{,}  ?  \ottnt{A}$}
            \end{prooftree}
          \end{minipage}
        \end{tabular}

        \begin{tabular}{c}
          \begin{minipage}{0.15\hsize}
            \begin{prooftree}
              \def\ScoreOverhang{2pt}
                \AXC{$\vdash \,  \whynotdia \Delta   \ottsym{,}  \ottnt{A}$}
              \RightLabel{$ \rbangbox $}
              \UIC{$\vdash \,  \whynotdia \Delta   \ottsym{,}   \bangbox \ottnt{A} $}
            \end{prooftree}
          \end{minipage}

          \begin{minipage}{0.13\hsize}
            \begin{prooftree}
              \def\ScoreOverhang{2pt}
                \AXC{$\vdash \, \Gamma  \ottsym{,}  \ottnt{A}$}
              \RightLabel{$ \rwhynotdia $}
              \UIC{$\vdash \, \Gamma  \ottsym{,}   \whynotdia \ottnt{A} $}
            \end{prooftree}
          \end{minipage}

          \begin{minipage}{0.15\hsize}
            \begin{prooftree}
              \def\ScoreOverhang{2pt}
                \AXC{$\vdash \, \Gamma$}
              \RightLabel{$ \rqweak $}
              \UIC{$\vdash \, \Gamma  \ottsym{,}  ?  \ottnt{A}$}
            \end{prooftree}
          \end{minipage}

          \begin{minipage}{0.17\hsize}
            \begin{prooftree}
              \def\ScoreOverhang{2pt}
                \AXC{$\vdash \, \Gamma  \ottsym{,}  ?  \ottnt{A}  \ottsym{,}  ?  \ottnt{A}$}
              \RightLabel{$ \rqcont $}
              \UIC{$\vdash \, \Gamma  \ottsym{,}  ?  \ottnt{A}$}
            \end{prooftree}
          \end{minipage}

          \begin{minipage}{0.15\hsize}
            \begin{prooftree}
              \def\ScoreOverhang{2pt}
                \AXC{$\vdash \, \Gamma$}
              \RightLabel{$ \rmweak $}
              \UIC{$\vdash \, \Gamma  \ottsym{,}   \whynotdia \ottnt{A} $}
            \end{prooftree}
          \end{minipage}

          \begin{minipage}{0.17\hsize}
            \begin{prooftree}
              \def\ScoreOverhang{2pt}
                \AXC{$\vdash \, \Gamma  \ottsym{,}   \whynotdia \ottnt{A}   \ottsym{,}   \whynotdia \ottnt{A} $}
              \RightLabel{$ \rmcont $}
              \UIC{$\vdash \, \Gamma  \ottsym{,}   \whynotdia \ottnt{A} $}
            \end{prooftree}
          \end{minipage}
        \end{tabular}
        \end{center}
    \end{minipage}
    }
  \end{tabular}

\end{rulefigure}

Figure~\ref{fig:cmellbangbox} shows the definition of $\cmellbangbox$.
The set of formulae are defined as an extension of $\cmell$-formulae with the two modalities `$\bangbox$\,' and `$\whynotdia$\,'.
A \emph{dual formula} of $A$, written $ \ottnt{A} ^\bot $, is defined by the standard dual formulae in $\cmell$
along with $  (   \bangbox \ottnt{A}   )  ^\bot  \defeq  \whynotdia  (   \ottnt{A} ^\bot   )  $ and $  (   \whynotdia \ottnt{A}   )  ^\bot  \defeq  \bangbox  (   \ottnt{A} ^\bot   )  $.
Here, the $\whynotdia~$-modality is the dual of the $\bangbox~$-modality by definition, and it can be seen as an integration of the $?$-modality and the $\Diamond$-modality.
The \emph{linear implication} $\ottnt{A}  \multimap  \ottnt{B}$ is defined as $  \ottnt{A} ^\bot   \spar  \ottnt{B} $ as usual.
The inference rules are defined as a simple extension of $\imellbangbox$ to the classical setting in the style of ``one-sided'' sequent.

Then, the cut-elimination theorem for $\cmellbangbox$ can be shown similarly to the case of $\imellbangbox$,
and we can see that there exists a trivial embedding from $\imellbangbox$ to $\cmellbangbox$.

\begin{theorem}[Cut-elimination]
  The rule $\rcut$ in $\cmellbangbox$ is admissible.
\end{theorem}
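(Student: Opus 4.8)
The plan is to run exactly the same argument that establishes cut-elimination for $\imellbangbox$ (itself modeled on Lincoln et al.\,\cite{L+:decision_problems}), transported to the one-sided classical sequents of $\cmellbangbox$. First I would introduce two generalised (``multi-'') cut rules, $\rbcut$ and $\rbbcut$, which cut a formula $!A$ (resp.\ $\bangbox A$) produced by $\rbang$ (resp.\ $\rbangbox$) against $n \geq 1$ occurrences of its dual $?(A^\bot)$ (resp.\ $\whynotdia(A^\bot)$), with the side condition that the premise carrying those dual formulae contains no further occurrence of the relevant dual modality. This lets the contraction rules $\rqcont$ and $\rmcont$ be absorbed as the cut is pushed upward, exactly as in the $\imellbangbox$ case. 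I would then prove the admissibility of $\rcut$, $\rbcut$, and $\rbbcut$ simultaneously, by induction on the lexicographic pair $\fbracket{\delta, h}$ of the degree $\delta$ of the given derivation and its height $h$.

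The cases split as usual. Axiom cuts are removed outright. The principal logical reductions are: $\rtensor$ against $\rpar$; for the $!$/$?$ level, a $\rbang$-introduced $!A$ against a dereliction $\rwhynot$, a weakening $\rqweak$, a contraction $\rqcont$ (handled via $\rbcut$), or a $\rbang$ in which the cut formula sits in the $?$-context of the promotion; and for the $\bangbox$/$\whynotdia$ level, a $\rbangbox$-introduced $\bangbox A$ against $\rwhynotdia$, $\rmweak$, $\rmcont$ (handled via $\rbbcut$), or a promotion carrying the cut formula in its $\whynotdia$-context. The remaining cases are commutative cuts, where the cut formula is not principal in the left premise and the cut is permuted upward. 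In every exponential commutation the cut formula, when it occurs in the premise of a $\rbang$ or $\rbangbox$, is replaced by the context of the other premise's promotion, which has the form $\whynotdia\Delta, ?\Gamma$ (for $\rbang$) or $\whynotdia\Delta$ (for $\rbangbox$); this is again a context of the shape required by the side condition, so the promotion below still applies, and the induction measure strictly decreases in each subcall. Setting $n = 1$ throughout recovers ordinary $\rcut$.

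The point worth checking — and the reason the theorem is not merely a restatement — is the coexistence of the two exponential levels: one must verify that commuting a cut on a $\whynotdia$-formula through a $\rbang$ (whose context mixes $\whynotdia\Delta$ with $?\Gamma$) or through a $\rbangbox$ (whose context is purely $\whynotdia\Delta$) never breaks either side condition, and likewise for a cut on a $?$-formula, which can appear in the $?\Gamma$-part of a $\rbang$ but never inside a $\rbangbox$. As noted above, the substitutions performed always yield contexts of the admissible form, so no genuinely new obstacle arises; in particular, the ``strength'' of $\bangbox$ over $!$ adds no inference rule to $\cmellbangbox$ and hence contributes no extra cut case. Thus the argument closes exactly as for $\imellbangbox$, and I expect the only real care to be bookkeeping in the $\rbang$/$\rbangbox$ commutations rather than any conceptual difficulty.
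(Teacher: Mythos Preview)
Your proposal is correct and matches the paper's approach: the paper does not spell out a proof for $\cmellbangbox$ but simply states that it ``can be shown similarly to the case of $\imellbangbox$'', i.e., via the Lincoln et al.\ strategy with the auxiliary multicuts $\rbcut$, $\rbbcut$ and simultaneous induction on $\fbracket{\delta,h}$, which is exactly what you describe. One small point of phrasing: the side condition on the multicuts in the paper is that the residual context contains no further occurrence of the \emph{cut formula} itself (so that all copies are gathered in $(!A)^n$ resp.\ $(\bangbox A)^n$), not that it contains no formula with that modality at all; your wording ``no further occurrence of the relevant dual modality'' is slightly ambiguous on this, but the intended meaning is clearly the same.
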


\begin{theorem}[Embedding]
  If $\Gamma \, \vdash \, \ottnt{A}$ in $\imellbangbox$, then $\vdash \,  \Gamma ^\bot   \ottsym{,}  \ottnt{A}$ in $\cmellbangbox$.
\end{theorem}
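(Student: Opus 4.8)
The plan is to prove the statement by induction on the derivation of $\Gamma \vdash A$ in $\imellbangbox$, under the translation that sends a two-sided judgment $B_1, \dots, B_n \vdash A$ to the one-sided sequent $\vdash B_1^\bot, \dots, B_n^\bot, A$. Writing $\Gamma^\bot$ for the multiset $\{\, B^\bot \mid B \in \Gamma \,\}$, the goal is precisely to show that $\vdash \Gamma^\bot, A$ is derivable in $\cmellbangbox$. Before starting the induction I would record the two standard facts about negation used throughout: $(-)^\bot$ is an involution, so $(B^\bot)^\bot = B$; and it commutes with the connectives as expected, in particular $(A \multimap B)^\bot = A \otimes B^\bot$ (because $A \multimap B$ abbreviates $A^\bot \parr B$ in $\cmellbangbox$), $({!}A)^\bot = {?}A^\bot$, and $(\bangbox A)^\bot = \whynotdia A^\bot$. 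Since contexts are multisets on both sides, exchange is implicit and I will not mention it again.

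The base case is $\rax$: the judgment $A \vdash A$ maps to $\vdash A^\bot, A$, which is exactly the axiom of $\cmellbangbox$. For $\rcut$, the induction hypotheses give $\vdash \Gamma^\bot, A$ and $\vdash A^\bot, \Gamma'^\bot, B$ (using $(A^\bot)^\bot = A$ on the left occurrence of the cut formula), and the $\rcut$ rule of $\cmellbangbox$ yields $\vdash \Gamma^\bot, \Gamma'^\bot, B$. For $\rlimpr$, a single application of the $\parr$-rule turns $\vdash \Gamma^\bot, A^\bot, B$ into $\vdash \Gamma^\bot, A^\bot \parr B$, i.e.\ $\vdash \Gamma^\bot, A \multimap B$. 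For $\rlimpl$, I would apply the $\otimes$-rule to the induction hypotheses $\vdash \Gamma^\bot, A$ and $\vdash \Gamma'^\bot, B^\bot, C$, obtaining $\vdash \Gamma^\bot, \Gamma'^\bot, C, A \otimes B^\bot$, which is the translation of the conclusion since $A \otimes B^\bot = (A \multimap B)^\bot$.

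Each remaining rule is simulated by the evidently matching rule of $\cmellbangbox$. The left rules $\rbangl$ and $\rbangboxl$ become the derelictions $\rwhynot$ and $\rwhynotdia$, sending $\vdash \Gamma^\bot, A^\bot, B$ to $\vdash \Gamma^\bot, {?}A^\bot, B$ and $\vdash \Gamma^\bot, \whynotdia A^\bot, B$; the weakenings $\rbweak, \rbbweak$ become $\rqweak, \rmweak$; the contractions $\rbcont, \rbbcont$ become $\rqcont, \rmcont$. The two promotion rules are the only ones worth spelling out: for $\rbangr$, the premise $\bangbox \Delta, {!}\Gamma \vdash A$ translates to $\vdash \whynotdia \Delta^\bot, {?}\Gamma^\bot, A$, which is exactly of the shape demanded by the premise of $\rbang$ in $\cmellbangbox$, so $\rbang$ gives $\vdash \whynotdia \Delta^\bot, {?}\Gamma^\bot, {!}A$; and $\rbangboxr$ is simulated the same way by $\rbangbox$, turning $\vdash \whynotdia \Delta^\bot, A$ into $\vdash \whynotdia \Delta^\bot, \bangbox A$.

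The step that needs care, and the only place the plan could conceivably fail, is checking that the side-conditions on $\rbang$ and $\rbangbox$ in $\cmellbangbox$ --- that apart from the principal formula the context be built only from $\rwhynot$-formulae and $\whynotdia$-formulae, respectively only from $\whynotdia$-formulae --- hold after translation. But this is immediate: under $(-)^\bot$ those conditions are exactly the duals of the side-conditions already imposed on $\rbangr$ and $\rbangboxr$ in $\imellbangbox$, namely that the left context have the form $\bangbox \Delta, {!}\Gamma$, respectively $\bangbox \Delta$. Hence no genuine obstacle arises and the induction goes through routinely.
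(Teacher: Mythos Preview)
Your proposal is correct and is exactly the standard rule-by-rule translation one would expect; the paper itself states this theorem without proof (it calls the embedding ``trivial''), so your argument is precisely the routine induction that the authors omit.
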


\subsection{Proof-nets formalization}
First, we define \emph{proof structures} for $\cmellbangbox$.
The \emph{proof nets} are then defined to be those proof structures satisfying a condition called \emph{correctness criterion}.
Intuitively, a proof net corresponds to an (equivalence class of) proof in $\cmellbangbox$.
\begin{definition}
  A \emph{node} is one of the graph-theoretic node shown in Figure~\ref{fig:nodesAndBox}
  equipped with $\cmellbangbox$ types on the edges.
  They are all directed from top to bottom: for example, the $\mathsf{\parr}$ node has two
  incoming edges and one outgoing edge.
  A $\rbang$-node (resp.\ $\bangbox$~-node) has one outgoing edge typed by $\rbang A$
  (resp. $\bangbox A$)
  and arbitrarily many (possibly zero) outgoing edges typed by $\rwhynot A_i$ and $\rwhynotdia B_i$ (resp.\ $\rwhynotdia A_i$).

  \begin{figure}[hbtp]
    \centering
    \includegraphics[scale=.7]{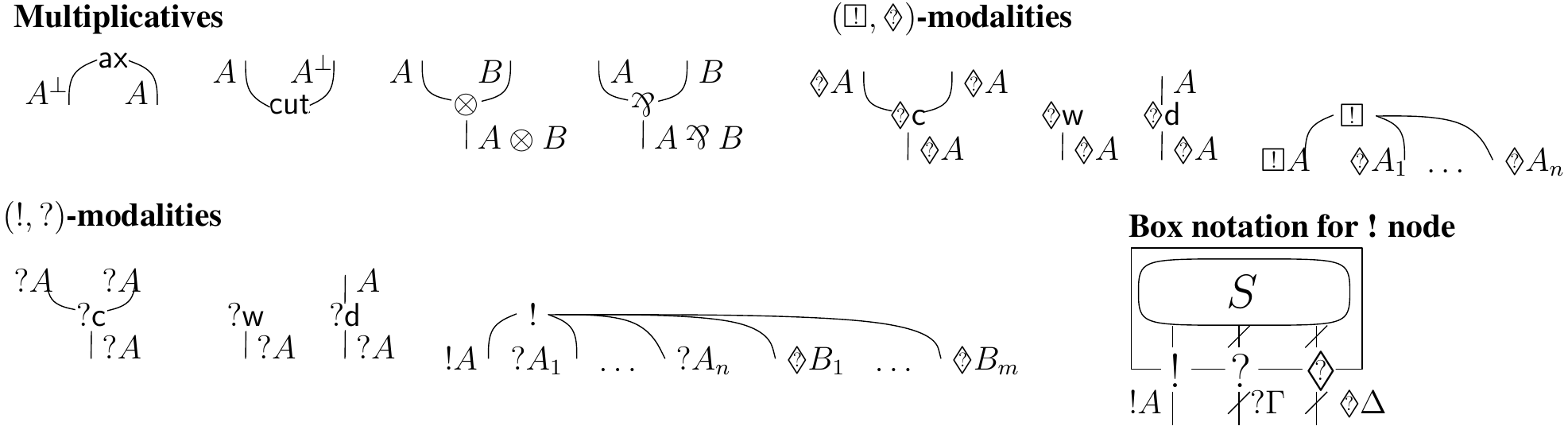}
    \caption{Nodes of proof net and box notation.}
    \label{fig:nodesAndBox}
  \end{figure}

  A \emph{proof structure} is a finite directed graph that satisfies the following conditions:
  \begin{itemize}
  \item each edge is with a type that matches the types specified by the nodes (in Figure~\ref{fig:nodesAndBox}) it is connected to;
  \item some edges may not be connected to any node (called \emph{dangling edges}).
    Those dangling edges and also the types on those edges are called the
    \emph{conclusions of the structure};
  \item the graph is associated with a total map from all the $\rbang$-nodes and
    $\bangbox$~-nodes in it to proof structures called the \emph{contents} of the $\rbang$/$\bangbox$~-nodes.
    The map satisfies that the types of the conclusions of a $\rbang$-node (resp.\ $\bangbox$~-node)
    coincide with the conclusions of its content.
  \end{itemize}
\end{definition}

\begin{remark}
  Formally, a $\rbang$-node (resp.\ a $\bangbox$~-node) and its content are distinctive
  objects and they are not connected as a directed graph.
  Though, it is convenient to depict them as if the $\rbang$-node (resp.\ $\bangbox$~-node)
  represents a ``box'' filled with its content, as shown at the bottom-right of Figure~\ref{fig:nodesAndBox}.
  We also depict multiple edges by an edge with a diagonal line.
  In what follows, we adopt this ``box'' notation and multiple edges notation without explicit note.
\end{remark}

\newcommand{\net}{S}

\begin{definition}\label{def:switchingPath}
  Given a proof structure $\net$, a \emph{switching path} is an undirected path on $\net$
  (meaning that the path is allowed to traverse an edge forward or backward)
  satisfying that on each $\parr$ node, $?c$ node, and $\whynotdia c$ node,
  the path uses at most one of the premises, and that the path uses any edge at most once.
\end{definition}

\begin{definition}\label{def:modalMellCorrectCrit}
  The \emph{correctness criterion} is the following condition:
  given a proof structure $\net$, switching paths of $\net$ and all contents of $!$-nodes, $\bangbox~$-nodes in $\net$ are all acyclic and connected.
  A proof structure satisfying the correctness criterion is called a \emph{proof net}.
\end{definition}

As a counterpart of cut-elimination process in $\cmellbangbox$, the notion of \emph{reduction} is defined for proof structures (and hence for proof nets): this intuition is made precise by Lemma~\ref{lem:simulation_seqnet} where $\fnettrans{-}$ is the translation from $\cmellbangbox$ to proof nets, 
whose definition is omitted here since it is defined analogously to that of $\cmell$ and $\cmell$ proof net. The lemmata below are naturally obtained by extending the case for $\cmell$ since the $\bangbox~$-modality has mostly the same logical structure as the $\rbang$-modality.

\begin{definition}
  \emph{Reductions} of proof structures are local graph reductions defined by the set of rules
  depicted in Figure~\ref{fig:reduction_rules}.
\end{definition}

\begin{figure}[hbtp]
  \centering
  \includegraphics[scale=.684]{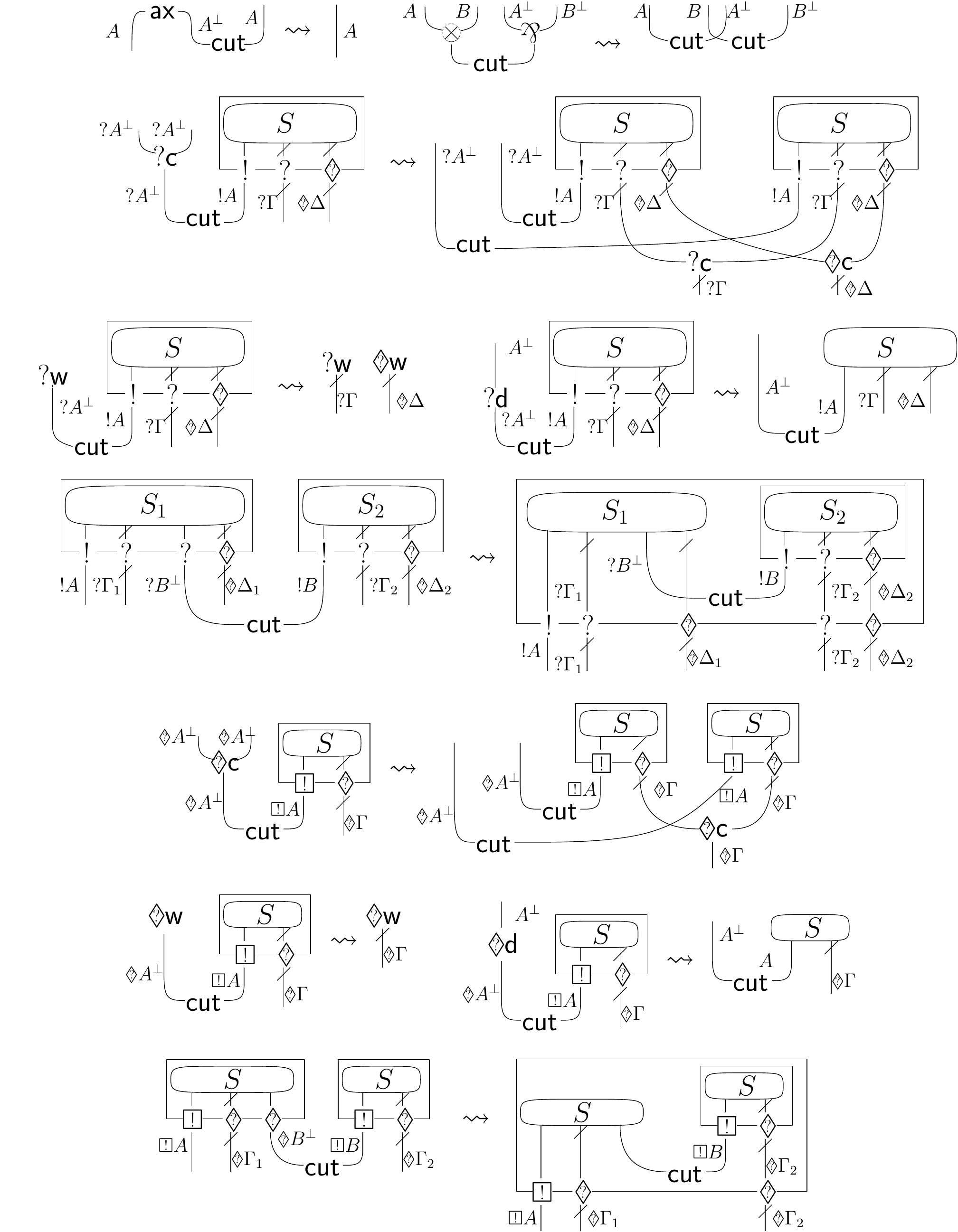}
  \caption{Reduction rules.}
  \label{fig:reduction_rules}
\end{figure}

\begin{lemma} \label{lem:net_preservation}
  Let $\net \to \net'$ be a reduction between proof structures.
  If $\net$ is a proof net (i.e., satisfies the correctness criterion),
  so is $\net'$.
\end{lemma}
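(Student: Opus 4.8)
The plan is to argue by induction on the nesting depth of boxes in $\net$, with an inner case analysis on which rule of Figure~\ref{fig:reduction_rules} fires in the step $\net \to \net'$. Reductions may occur inside the content of a $!$-node or $\bangbox$~-node as well; if the redex lies inside such a content $\net_B$, then $\net_B \to \net_B'$ is again a reduction, the induction hypothesis gives that $\net_B'$ satisfies the correctness criterion, and since a reduction preserves the conclusions of $\net_B$ while leaving the rest of $\net$ untouched, $\net'$ is a proof net. So it suffices to treat a top-level redex. Moreover, as observed just before the lemma, a $\bangbox$~-node differs from a $!$-node only in that its auxiliary doors are confined to $\whynotdia$-formulae whereas those of a $!$-node may also be $?$-formulae; consequently every reduction rule involving the $\bangbox$~-modality (or its dual $\whynotdia$) is a verbatim copy of the corresponding rule for $!$ (or $?$), and its correctness-preservation argument is obtained by transcription. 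What remains is therefore the standard $\cmell$ verification.

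For the multiplicative rules (axiom/cut and $\stensor$/$\spar$) I would use the usual local argument: from any switching of $\net'$ one recovers a switching of $\net$ whose induced undirected graph is obtained from the former by a surgery — grafting a finite tree, or subdividing an edge — that preserves both acyclicity and connectedness in either direction, so correctness of $\net$ immediately yields correctness of $\net'$; box contents are unaffected in these cases. For weakening against a box, the box and its content are erased and the former auxiliary doors reappear as weakenings, which neither closes a switching cycle nor disconnects the graph. For dereliction, the box is opened and its content — a proof net by hypothesis — is spliced along a single cut; one relates switchings of $\net'$ to pairs consisting of a switching of $\net$ and a switching of the content, exactly as in the multiplicative cases, using acyclicity and connectedness of the content. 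For contraction, the box and its content are duplicated; the two new contents are isomorphic copies of the old one, hence still correct, and any would-be switching cycle or disconnection of $\net'$ descends, by identifying the two copies, to one of $\net$.

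The remaining case is the commutative ``box-enters-box'' reduction, in which a $!$- or $\bangbox$~-box $B_1$ whose principal door is cut against an auxiliary door of a box $B_2$ is absorbed into the content of $B_2$. The top-level switching argument here is as in the dereliction and contraction cases, but one must additionally re-establish the recursive clause of the criterion for the enlarged content of $B_2$: that content is obtained from the old content of $B_2$ — itself a proof net — by precisely the dereliction-style splicing analysed above, now performed on the cut (internal to $B_2$) between a former auxiliary door of $B_2$ and the content of $B_1$; so its correctness is inherited. The type constraints on auxiliary doors cause no trouble, because $\whynotdia$-formulae are admissible auxiliary doors of both $!$-nodes and $\bangbox$~-nodes, so every nesting configuration that can arise is a legal proof structure.

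I expect this last case, together with the contraction case, to be where the real work lies: they are the only points at which the recursive part of the correctness criterion must be re-verified inside a box whose border or content has been changed, whereas everything else is a mechanical replay of the $\cmell$ argument. The one genuinely new piece of bookkeeping is keeping track of which auxiliary-door types ($?A_i$ versus $\whynotdia B_i$) may decorate the modified box — hence which $B_1$-into-$B_2$ nestings the two modalities permit — but this affects only the combinatorial side conditions, not the acyclicity/connectedness reasoning, which is wholly insensitive to the distinction between $!$ and $\bangbox$~.
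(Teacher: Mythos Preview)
Your proposal is correct and is essentially the same approach the paper takes: the paper gives no explicit proof of this lemma, only remarking just before it that ``the lemmata below are naturally obtained by extending the case for $\cmell$ since the $\bangbox$~-modality has mostly the same logical structure as the $!$-modality.'' Your case analysis is precisely that extension spelled out in detail, and your observation that the $\bangbox$/$\whynotdia$ reduction rules are verbatim copies of the $!$/$?$ ones (modulo the auxiliary-door type constraints, which do not affect the switching-graph reasoning) is exactly the point the paper is relying on.
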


\begin{lemma} \label{lem:simulation_seqnet}
  Let $\Pi$ be a proof of $\vdash \,  \whynotdia  \Delta ^\bot    \ottsym{,}  ?   \Gamma ^\bot   \ottsym{,}   \Sigma ^\bot   \ottsym{,}  \ottnt{A}$ and
  suppose that $\Pi$ reduces to another proof $\Pi'$.
  Then there is a sequence of reductions $\fnettrans{\Pi} \to^* \fnettrans{\Pi'}$ between the proof nets.
\end{lemma}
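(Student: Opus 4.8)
The plan is to argue by case analysis on the cut-elimination step $\Pi \to \Pi'$ in $\cmellbangbox$, exhibiting in each case a (possibly empty) sequence of the graph reductions of Figure~\ref{fig:reduction_rules} taking $\fnettrans{\Pi}$ to $\fnettrans{\Pi'}$; the case of an iterated sequence of cut-reductions then follows because $\to^*$ is transitive. Since $\fnettrans{-}$ is defined by structural recursion on proofs and the graph reductions are local rewrites, it suffices to treat the redex at the eliminated cut: the part of $\Pi$ untouched by the step is translated identically on both sides and carries any reduction of the translated redex along by locality. The particular shape $\vdash \whynotdia \Delta^\bot, ?\Gamma^\bot, \Sigma^\bot, A$ of the conclusion plays no role here; it is recorded only because it is the shape produced by the embedding used in the sequel. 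Throughout we rely on the fact, part of the basic theory of $\fnettrans{-}$, that the translation of a $\cmellbangbox$ proof is a proof net, and on Lemma~\ref{lem:net_preservation} to keep intermediate structures correct.

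First I would dispatch the routine cases. A cut against an axiom translates to an axiom link plugged into a cut link and is removed by the axiom-cut rewrite, matching the sequent-level step. A principal multiplicative cut, between $\rtensor$ and $\rpar$, translates to a $\otimes$-node facing a $\parr$-node across a cut, and one application of the $\otimes/\parr$ rewrite yields exactly the two cuts of $\Pi'$. The principal exponential cuts against a $!$-box — the box facing a dereliction ($\rwhynot$), a weakening ($\rqweak$), or a contraction ($\rqcont$) on the dual $?$-formula — are matched respectively by the dereliction, weakening, and box-duplication rewrites of Figure~\ref{fig:reduction_rules}; and since the $\bangbox$~-modality carries the same box structure as $!$, the principal cuts against a $\bangbox$~-box facing $\rwhynotdia$, $\rmweak$, or $\rmcont$ are matched by the analogous rewrites. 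For the commutative conversions that merely transpose two independent rule applications — including conversions with the structural rules and with $\rcut$ itself — the translation identifies $\Pi$ and $\Pi'$, so $\fnettrans{\Pi} = \fnettrans{\Pi'}$ and no reduction is needed.

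The main obstacle is the box-commutation steps, where the eliminated cut is on an auxiliary ($?$- or $\whynotdia$-typed) conclusion of a box and the sequent-level step pushes the cut into that box's premise. Here I would use the box-box rewrite of Figure~\ref{fig:reduction_rules}, which pulls one box inside another, and check that the resulting proof structure is precisely the translation of the sequent proof $\Pi'$ in which the cut now lives inside the box content. The delicate point specific to $\cmellbangbox$ is the interplay between the two kinds of boxes: a $\whynotdia$-conclusion is admissible as an auxiliary door of both a $!$-box and a $\bangbox$~-box, whereas a $?$-conclusion is admissible only as an auxiliary door of a $!$-box, as dictated by the rules $\rbang$ and $\rbangbox$ and mirrored in the node shapes of Figure~\ref{fig:nodesAndBox}. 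I would therefore enumerate the four box-into-box combinations, verify in each that the auxiliary-door types agree on the two sides of the rewrite, and confirm that exactly one box-box step realizes the sequent-level commutation, with correctness of the intermediate structure guaranteed by Lemma~\ref{lem:net_preservation}. Together with the routine cases this gives $\fnettrans{\Pi} \to^* \fnettrans{\Pi'}$ for every cut-elimination step, and hence for $\Pi \to \Pi'$ as stated.
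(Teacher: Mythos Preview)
Your proposal is correct and matches the paper's approach: the paper does not spell out a proof of this lemma but simply remarks that it is ``naturally obtained by extending the case for $\cmell$ since the $\bangbox$~-modality has mostly the same logical structure as the $\rbang$-modality,'' which is precisely the case analysis you carry out. Your enumeration is in fact more detailed than anything the paper provides; the only small remark is that of your ``four box-into-box combinations'' only three actually arise (a $!$-box cannot be absorbed through a $?$-door of a $\bangbox$~-box, since $\bangbox$~-boxes have no $?$-doors), but you already note this constraint yourself.
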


\subsection{Computational interpretation}

\begin{definition}
  A \emph{context} is a triple $(\mcont, \bcont, \ncont)$ where $\mcont, \bcont, \ncont$ are generated by the following grammar:

  \vspace{0.5em}
  {\small
    $
    \mcont ::= \varepsilon ~|~ l.\mcont ~|~ r.\mcont
    \quad \bcont ::= \varepsilon ~|~ L.\bcont ~|~ R.\bcont ~|~ \fbracket{\bcont,\bcont} ~|~ \star
    \quad \ncont ::= \varepsilon ~|~ L'.\,\ncont ~|~ R'.\,\ncont ~|~ \fbracket{\ncont,\ncont} ~|~ \star
    $
  }
\end{definition}

The intuition of a context is an intermediate state while ``evaluating'' the proof net
(and, by translating into a proof net, a term in $\lambdabangbox$).
The geometry of interaction machine calculates the semantic value of a net by traversing
the net from a conclusion to another;
to traverse the net in a ``right way'' (more precisely, in a way invariant under net reduction),
the context accumulates the information about the path that is already passed.
Then, \emph{how} the net is traversed is defined by the notion of \emph{path} over a proof net as we define below.

\begin{definition}
  The \emph{extended dynamic algebra} $\Lambda^{\Box *}$ is a single-sorted $\Sigma$ algebra
  that contains $0,1,p,q,r,r',s,s',t,t',d,d' \colon \Sigma$ as constants,
  has an associative operator $\cdot\colon\Sigma\times\Sigma\to\Sigma$
  and operators $(-)^*\colon\Sigma\to\Sigma$, ${!}\colon\Sigma\to\Sigma$, ${\bangbox{}}\colon\Sigma\to\Sigma$,
  equipped with a formal sum $+$,
  and satisfies the equations below.
  Hereafter,
  we write $xy$ for $x\cdot y$ where $x$ and $y$ are metavariables over $\Sigma$.
  {\small
    \begin{alignat*}{10}
      \qquad0^* = !0 = 0 &\quad 1^* = !1 = 1 &\quad 0x = x0 = 0 &\quad 1x = x1 = x\\
      \qquad!(x)^* = !(x^*) &\quad (xy)^* = y^*x^* &\quad (x^*)^* = x &\quad !(x)!(y) = !(xy) \\
      \qquad\bangbox(x)\bangbox(y) = \bangbox(xy) &\quad p^*p = q^*q = 1 &\quad q^*p = p^*q = 0 &\quad r^*r = s^*s = 1\\
      \qquad s^*r = r^*s = 0 &\quad d^*d= 1 &\quad t^*t = 1 &\quad p'^*p' = q'^*q' = 1\\
      \qquad q'^*p' = p'^*q' = 0 &\quad r'^*r' = s'^*s' = 1 &\quad s'^*r' = r'^*s' = 0 &\quad d'^*d'= 1 \\
      \qquad t'^*t' = 1  &\quad !(x)r = r!(x) &\quad !(x)s = s!(x) &\quad !(x)t = t!!(x)\\
      \qquad !(x)d = dx &\quad \bangbox(x)r' = r'\bangbox(x) &\quad \bangbox(x)s' = s'\bangbox(x) &\quad \bangbox(x)t' = t'\bangbox~\bangbox(x)\\
      \qquad\bangbox(x)d' = d'x  &\quad x+y = y+x &\quad x+0 = x&\quad (x+y)z = xz+yz\\
      \qquad z(x+y) = zx+zy &\quad (x+y)^* = x^*+y^* &\quad !(x+y) = !x+!y &\quad \bangbox{(x+y)} = \bangbox{x}+\bangbox{y}
    \end{alignat*}
  }
\end{definition}

\begin{remark}
  The equations in the definition above are mostly the same as the standard dynamic algebra
  $\Lambda^*$~\cite{M:goim, M:goim_popl} except those equations concerning the symbols with $'$ and the operator $\bangbox{}$, and their structures are analogous to those for $!$ operator.
  This again reflects the fact that the logical structure of rules for $\bangbox{}$ is analogous to that of $!$.
\end{remark}

\begin{definition}
  A \emph{label} is an element of $\Lambda^{\Box *}$ that is associated to
  edges of proof structures as in Figure~\ref{fig:labels}.
  Let $\net$ be a proof structure and $T_\net$ be the set of edge traversals in the structure.
  $\net$ is associated with a function $w\colon T_\net \to \Lambda^{\Box *}$
  defined by $w(e) = l$ (resp.\ $l^*$) if $e$ is a forward (resp.\ backward) traversal
  of an edge $e$ and $l$ is the label of the edge;
  $w(e_1e_2) = w(e_1)w(e_2)$.
  \begin{figure}[hbtp]
    \centering
    \includegraphics[scale=.7]{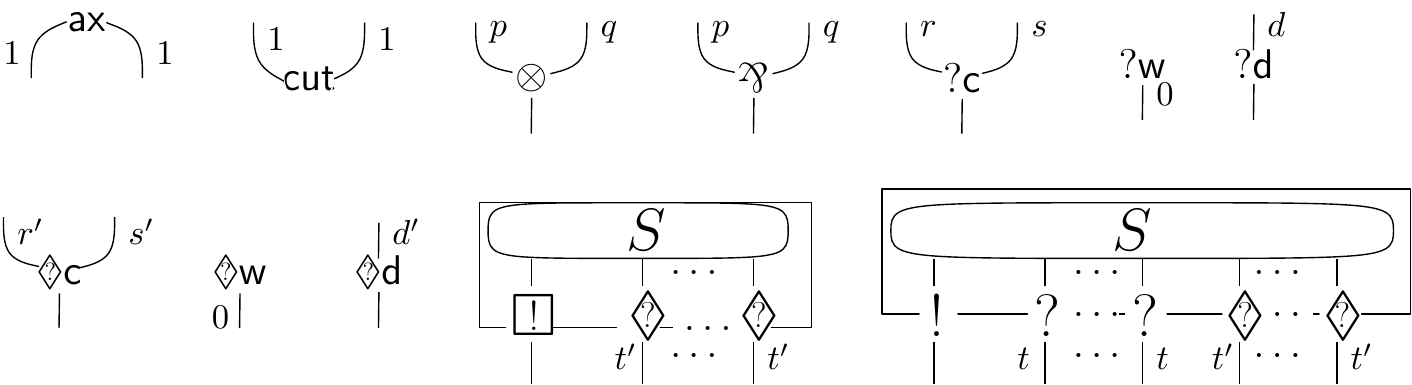}
    \caption{Labels on edges.}
    \label{fig:labels}
  \end{figure}
\end{definition}

\begin{definition}
  A \emph{walk} over a proof structure $\net$ is an element of $\Lambda^{\Box *}$
  that is obtained by concatenating labels along a graph-theoretic path over $\net$
  such that the graph-theoretic path does not traverse an edge forward (resp.\ backward)
  immediately after the same edge backward (resp.\ forward);
  and does not traverse a premise of one of $\otimes, \parr, c$ node and another premise of
  the same node immediately after that.
  A \emph{path} is a walk that is not proved to be equal to $0$.
  A path is called \emph{maximal} if it starts and finishes at a conclusion.
\end{definition}

The intuition of the notion of path is that a path is a ``correct way'' of traversing
a proof net, in the sense that any path is preserved before and after a reduction.
All the other walks that are not paths will be broken, which is represented by the constant $0$ of $\Lambda^{\Box *}$.
Then, we obtain a \emph{context semantics} from paths in the following way.

\begin{definition}
  Given a monomial path $a$, its action $\fint{a} : \Sigma \rightharpoonup \Sigma$ on contexts is defined as follows.
  We define $\fint{1}$ as the identity mapping on contexts.
  There is no definition of $\fint{0}$.
  The $\fint{f^*}$ is the inverse translation, i.e., $\fint{f}^{-1}$.
  The transformer of the composition of $a$ and $b$ is defined as
  $\fint{ab}(m) \defeq \fint{a}(\fint{b}(m))$.
  For the other labels, the interpretation are defined as follows where exponential morphisms $!$ and $\bangbox\,$ are defined by the meta-level pattern matchings:
    \begin{alignat*}{4}
      \fint{p}(\mcont, \bcont, \ncont) &\defeq (l . \mcont, \bcont, \ncont)
      &\quad \fint{q}(\mcont, \bcont, \ncont) &\defeq (r . \mcont, \bcont, \ncont)\\
      \fint{r}(\mcont, \bcont, \ncont) &\defeq (\mcont, L . \bcont, \ncont)
      &\quad \fint{s}(\mcont, \bcont, \ncont) &\defeq (\mcont, R . \bcont, \ncont)\\
      \fint{t}(\mcont, \fbracket{\bcont_1, \fbracket{\bcont_2, \bcont_3}}, \ncont)
      &\defeq (\mcont, \fbracket{\fbracket{\bcont_1, \bcont_2}, \bcont_3}, \ncont)
      &\quad \fint{d}(\mcont, \bcont, \ncont) &\defeq (\mcont, \star . \bcont, \ncont)\\
      \fint{r'}(\mcont, \bcont, \ncont) &\defeq (\mcont, \bcont, L' . \ncont)
      &\quad \fint{s'}(\mcont, \bcont, \ncont) &\defeq (\mcont, \bcont, R' . \ncont)\\
      \fint{t'}(\mcont, \bcont, \fbracket{\ncont_1, \fbracket{\ncont_2, \ncont_3}})
      &\defeq (\mcont, \bcont, \fbracket{\fbracket{\ncont_1, \ncont_2}, \ncont_3})
      &\quad \fint{d'}(\mcont, \bcont, \ncont) &\defeq (\mcont, \bcont, \star . \ncont)\\
    \end{alignat*}
    \vspace{-1.3cm} 
    \begin{align*}
      \fint{!(f)}(\mcont, \fbracket{\bcont_1, \bcont_2}, \ncont)
      &\defeq \textbf{let}~(\mcont', \bcont_2', \ncont') = \fint{f}(\mcont, \bcont_2, \ncont)~\textbf{in}~(\mcont', \fbracket{\bcont_1, \bcont_2'}, \ncont') \\
      \fint{\bangbox(f)}(\mcont, \bcont, \fbracket{\ncont_1, \ncont_2})
      &\defeq \textbf{let}~(\mcont', \bcont', \ncont_2') = \fint{f}(\mcont, \bcont, \ncont_2)~\textbf{in}~(\mcont', \bcont', \fbracket{\ncont_1, \ncont_2'})
    \end{align*}
    Given a path $a$, its action $\fint{a} : \Sigma \rightharpoonup \mset(\Sigma)$ is 
    defined by the rules above (regarding the codomain as a multiset) and
    $\fint{a+b}(m) = (\fint{a}(m)) \uplus (\fint{b}(m))$ where $\uplus$ is the multiset sum.
\end{definition}

\begin{remark}
  In Mackie's work~\cite{M:goim}, the multiset in the codomain is not used since the main interest of his
  work is on terms of a base type: in that setting any proof net corresponding to a term
  has an execution formula that is monomial.
  In general, this style of context semantics
  is slightly degenerated compared to Girard's original version and its successors
  because the information of ``current position'' is dropped from the definition of contexts.
\end{remark}

\begin{definition}
  \label{def:context_semantics} Let $\net$ be a closed proof net and $\chi$ be the set of maximal paths between conclusions of $\net$.
  The \emph{execution formula} is defined by $\mathcal{EX}(\net) = \Sigma_{\phi \in \chi}\phi$ where the RHS is the sum of all paths in $\chi$.
  The \emph{context semantics} of $\net$ is defined to be $\fint{\mathcal{EX}(\net)}\colon\Sigma \rightharpoonup \mset(\Sigma)$.
\end{definition}

\begin{definition}
  Let $M$ be a closed well-typed term in $\lambdabangbox$.
  The \emph{context semantics} of $M$ is defined to be $\fint{ \flambdanettrans{M} }$, where
  $\flambdanettrans{-}$ is a straightforward translation from $\lambdabangbox$-terms to proof nets, defined by
  constructing proof nets from $\lambdabangbox$-derivations
  as in Figure~\ref{fig:bangboxnettranslation}.
\end{definition}

\begin{figure}[hbtp]
  \centering
  \includegraphics[scale=.70]{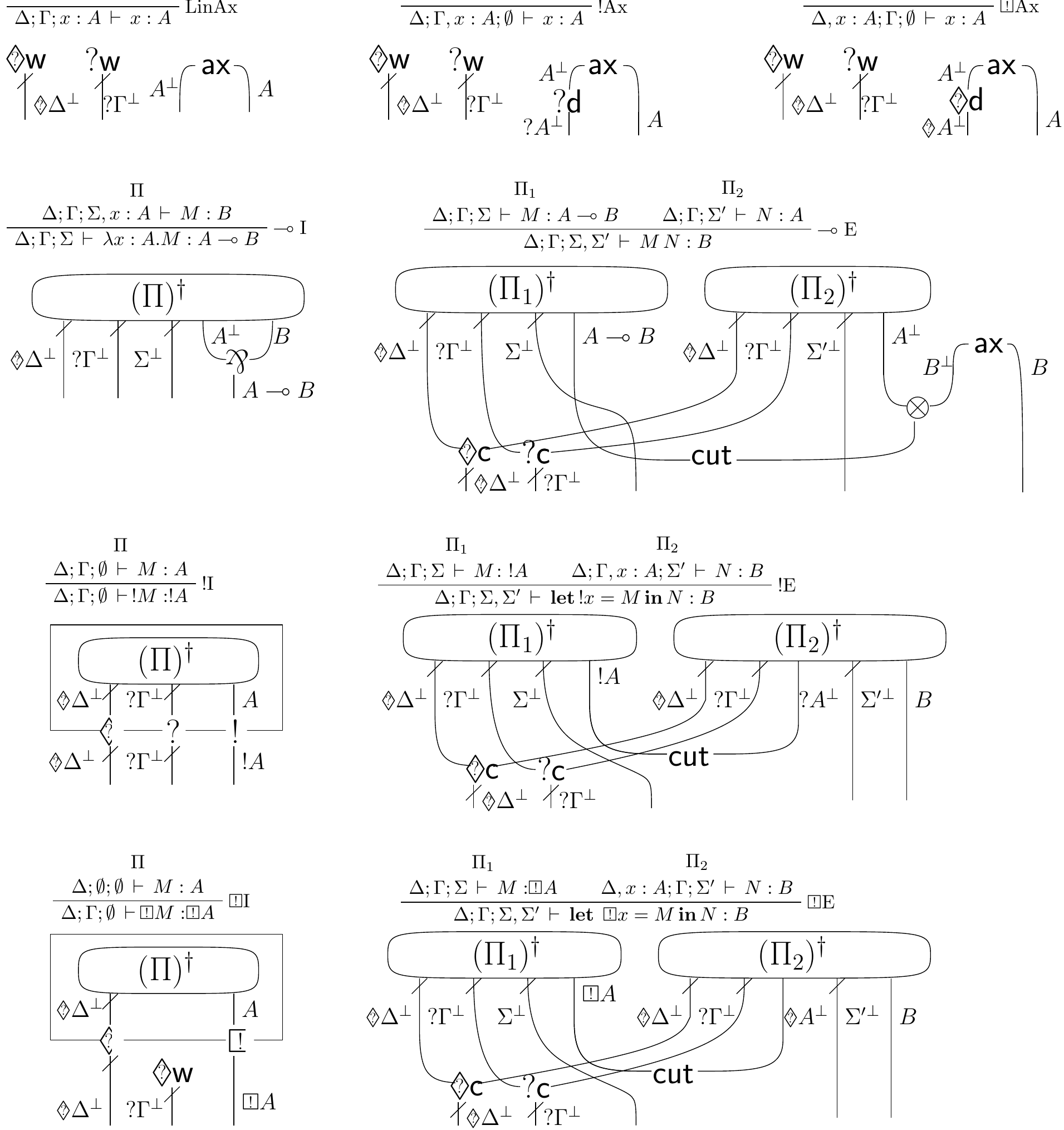}
  \caption{Translation from $\lambdabangbox$ to $\cmellbangbox$ proof nets.}
  \label{fig:bangboxnettranslation}
\end{figure}

\begin{lemma}
  Let $\net$ be a closed proof net and $\net'$ be its normal form.
  Then $\fint{\net} = \fint{\net'}$. 
\end{lemma}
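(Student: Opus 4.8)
The plan is to reduce the claim to invariance of the execution formula under one reduction step and then verify that step rule by rule against the equations of $\Lambda^{\Box *}$. Since $\net'$ is reached from $\net$ by finitely many of the local reductions of Figure~\ref{fig:reduction_rules}, and since each such step sends a proof net to a proof net (Lemma~\ref{lem:net_preservation}), so that $\mathcal{EX}$ is defined at every stage, it suffices to show that whenever $\net \to \net''$ is a single reduction step one has $\mathcal{EX}(\net) = \mathcal{EX}(\net'')$ as elements of $\Lambda^{\Box *}$. Applying $\fint{\cdot}$ — which factors through the equational theory of $\Lambda^{\Box *}$, this being exactly why the constants of the algebra are chosen as they are (e.g.\ $\fint{ab} = \fint{a}\circ\fint{b}$, $\fint{f^{*}} = \fint{f}^{-1}$, $\fint{p^{*}p}=\fint{1}$, $\fint{q^{*}p}=\fint{0}$) — then gives $\fint{\net}=\fint{\net''}$ (writing $\fint{\net}$ for $\fint{\mathcal{EX}(\net)}$ as in the statement), and chaining along the reduction sequence from $\net$ to its normal form yields $\fint{\net} = \fint{\net'}$.

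For one reduction step I would localise to the bounded subgraph occupied by the redex. A maximal walk of $\net$ either never enters the redex, in which case it is literally a maximal walk of $\net''$ with the same label, or it enters and leaves the redex some number of times; in the latter case one exhibits a bijection between the \emph{residuals} of such walks in $\net$ and the maximal walks through the contractum in $\net''$, and checks that corresponding walks carry equal labels. This check is the rule-by-rule computation familiar from Mackie's GoIM for $\cmell$~\cite{M:goim}: (a) for the axiom/cut reduction the affected segment collapses with label $1\cdot 1 = 1$; (b) for the multiplicative $\otimes$--$\parr$ reduction the matching crossings contribute $p^{*}p = q^{*}q = 1$ and the mismatched ones $p^{*}q = q^{*}p = 0$, which matches the two smaller cuts of the contractum, and the walk constraints forbidding an in-then-out traversal of two premises of the same $\otimes$, $\parr$ or contraction node prevent counting a spurious $0$-walk on either side; (c) for dereliction, contraction, box--box commutation and weakening on a $\rbang$-box the identities $!(x)d = dx$, $r^{*}r = s^{*}s = 1$, $!(x)r = r!(x)$, $!(x)s = s!(x)$, $!(x)!(y) = !(xy)$ and $!(x)t = t!!(x)$ do the bookkeeping, with weakening handled by the usual observation that walks through the erased box evaluate to $0$; and (d) for the corresponding reductions involving a $\bangbox$~-box the same calculations go through after replacing $!,d,r,s,t$ by $\bangbox{},d',r',s',t'$, using $\bangbox(x)d' = d'x$, $r'^{*}r' = s'^{*}s' = 1$, $\bangbox(x)r' = r'\bangbox(x)$, $\bangbox(x)s' = s'\bangbox(x)$, $\bangbox(x)\bangbox(y) = \bangbox(xy)$ and $\bangbox(x)t' = t'\bangbox{}\bangbox(x)$ — this is precisely the content of the Remark following the definition of $\Lambda^{\Box *}$. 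The only genuinely mixed configuration, the proof-net shadow of the combinator $\mathrm{E}$, is a $\bangbox$~-box whose principal port is cut against an auxiliary $\whynotdia$-port of a $\rbang$-box; there the $\bangbox$~-box simply migrates inside the $\rbang$-box (its $\whynotdia$ auxiliary ports becoming new $\whynotdia$ auxiliary ports of the $\rbang$-box, which is allowed), so this is a nested-box case settled by combining the $\rbang$- and $\bangbox{}$-monoidality and lift equations above, requiring no new algebraic identity — essentially because the node grammar forbids a bare $?$-port on a $\bangbox$~-box, so the two kinds of boxes never meet at their principal ports.

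Summing the per-rule equalities over all maximal walks gives $\mathcal{EX}(\net) = \mathcal{EX}(\net'')$, and iterating gives $\fint{\net} = \fint{\net'}$. I expect the main obstacle to be making the residual bijection for the box reductions precise: one must track how a maximal walk that passes through a box several times — entering via different auxiliary ports, bouncing between the copies produced by a contraction, or diving through a box that is being pushed inside another — is reorganised by the reduction, and then confirm that this reorganisation is exactly absorbed by the $\Sigma$-algebra equations. The multiplicative and $\rbang$-fragments of this are routine once Mackie's argument is imported; the new work is purely checking that the primed equations of $\Lambda^{\Box *}$ suffice for the $\bangbox$~-box rules and that the $\mathrm{E}$/commutation configuration collapses to the already-treated nested-box case.
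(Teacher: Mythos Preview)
Your argument is sound in outline and follows the classical algebraic GoI route, but the paper proceeds differently. You aim to show $\mathcal{EX}(\net) = \mathcal{EX}(\net'')$ as elements of $\Lambda^{\Box *}$ for each reduction step and then push the equality through $\fint{-}$. The paper instead argues invariance of $\fint{-}$ directly, without ever comparing execution formulas in the algebra: it isolates two structural lemmas on the context stacks --- that after following a partial path from a conclusion the heights of $\bcont$ and $\ncont$ equal the $!$- and $\bangbox$-box nesting depths of the current node, and that a path confined to the interior of a box only modifies the top of the relevant stack, leaving the caller's context untouched --- and uses these to check, rule by rule, that the composite context transformers before and after a reduction agree on every admissible input.

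What your approach buys is a stronger intermediate statement ($\mathcal{EX}$ itself is invariant) and a clean factorisation: once invariance holds in $\Lambda^{\Box *}$, any model of the algebra inherits it. The cost is the extra obligation, which you assert but do not discharge, that $\fint{-}$ is well-defined modulo the full equational theory of $\Lambda^{\Box *}$; this is routine but not free (one must check, e.g., that $\fint{!(x)d}$ and $\fint{dx}$ really coincide as partial maps on contexts, and likewise for the primed equations). The paper's route sidesteps that check by never leaving the semantic domain: the two stack lemmas are exactly what make the box-reduction cases go through when computed on contexts rather than on algebra words. In the end both proofs perform the same rule-by-rule analysis; they differ only in whether that analysis lives in $\Lambda^{\Box *}$ or in $\Sigma \rightharpoonup \mset(\Sigma)$.
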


The lemma is proved through two auxiliary lemmata below.

\begin{lemma}
  Let $\phi$ be a path from a conclusion of a closed net $\net$ ending at a node $a$.
  Let $(\mcont', \bcont', \ncont') = \fint{\phi}(\mcont, \varepsilon, \varepsilon)$.
  The height of $\bcont'$ (resp.\ $\ncont'$) matches with the
  number of exponential (resp.\ necessitation) boxes containing the node $a$.
\end{lemma}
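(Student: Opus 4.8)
The plan is to prove this by induction on the length of the path $\phi$, i.e.\ on the number of edge-traversals composing it, tracking in parallel how a single traversal changes the two stacks and how it changes the set of boxes containing the current node. By the \emph{height} of a $\bcont$- or $\ncont$-expression I mean the total number of prefix symbols it carries ($L,R,\star$ for $\bcont$; $L',R',\star$ for $\ncont$), summed over its whole tree shape. Two elementary observations make this measure the right one: it is left unchanged by the rebracketings $\fint{t},\fint{t'}$, and the box morphisms $\fint{!(f)},\fint{\bangbox(f)}$ change it by exactly the amount $\fint{f}$ changes the height of the subtree they are fed. These are precisely the parts of the machinery that do \emph{not} correspond to crossing a box door, so they must be height-transparent for the statement to hold.

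In the base case $\phi$ has length $0$, so $a$ is incident to a conclusion of the closed net $\net$ and hence lies in no box; $\fint{\phi}$ is the identity, its output is $(\mcont,\varepsilon,\varepsilon)$, and both relevant heights are $0$, as required. For the inductive step, I would split $\phi$ into a strictly shorter path $\phi_0$ from the conclusion to a node $b$ followed by one further edge-traversal $e$ reaching $a$ (a prefix of a path is again a path, so the induction hypothesis applies to $\phi_0$). The context produced by $\phi$ is the transformer of the label $\ell$ of $e$ applied to the context produced by $\phi_0$, so it suffices to check, reading off Figure~\ref{fig:labels}, how $\ell$ relates $b$ to $a$: (i) if $\ell$ is $1$ or a multiplicative label $p,q$ (or a $(-)^*$ form of these), then $e$ is incident only to axiom, cut, $\otimes$, $\parr$, weakening or contraction nodes — none of which is a box door — so $a$ lies in exactly the same boxes as $b$, and $\fint{\ell}$ leaves both heights fixed; (ii) if $\ell$ is $d$, $r$ or $s$, then $e$ crosses a door of a $\rbang$-node from outside to inside, so $a$ lies in exactly one more $\rbang$-box than $b$, while $\fint{\ell}$ raises $\mathrm{height}(\bcont)$ by $1$ and fixes $\mathrm{height}(\ncont)$; the $(-)^*$ forms $d^*,r^*,s^*$ are the corresponding outward crossings and decrement accordingly; $t,t^*$ relate two nestings at equal $\rbang$-depth and fix both heights; (iii) the labels $d',r',s',t'$ and their $(-)^*$ forms are handled exactly as in (ii), with $\bangbox$-boxes and the stack $\ncont$ in the roles of $\rbang$-boxes and $\bcont$, which is legitimate because the fragment of $\Lambda^{\Box *}$ governing $\bangbox~$ is a verbatim copy of the fragment governing $\rbang$. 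Composing the appropriate clause with the induction hypothesis on $\phi_0$ then yields the statement for $\phi$.

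The main obstacle is clause (ii)/(iii): one has to read off from Figure~\ref{fig:labels} precisely which edge each of $d,r,s$ (and $d',r',s'$) decorates, so that ``$a$ lies in one more box than $b$'' is literally correct — in particular accounting correctly for the binary tree of auxiliary doors (this is why $r,s$, and not only $d$, contribute to the height) and for boxes nested inside boxes (absorbed by $t,t'$ and by the $!(-),\bangbox(-)$ morphisms when $\phi$ passes through a box content). A secondary point worth recording is that $\fint{\phi}(\mcont,\varepsilon,\varepsilon)$ is in fact defined at every prefix of $\phi$ — no $(-)^*$ label is ever applied to a stack lacking the symbol it must pop — because a path cannot leave a box it has not yet entered; once this is granted, the induction needs no separate treatment of paths that cross a given door several times, since each inward/outward pair of crossings cancels on both sides of the asserted equality.
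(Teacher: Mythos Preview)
Your argument is essentially the paper's own, which dispatches the lemma in a single line: ``By spectating the rules of actions above: the height of stacks only changes at doors of a box.'' Your induction on the length of $\phi$ with a label-by-label case analysis is exactly how one makes that sentence precise, and the paper adds nothing further.
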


\begin{proof}
  By spectating the rules of actions above:
  the height of stacks only changes at doors of a box.
\end{proof}

\begin{lemma}
  Let $\phi$ be a path inside a box of a closed net $\net$.
  $\fint{\phi}(\mcont, \sigma . \bcont, \tau . \ncont)$
  is in the form $(\mcont', \sigma' . \bcont, \tau . \ncont)$.
\end{lemma}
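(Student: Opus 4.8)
The plan is to prove the statement by induction on the structure of the walk $\phi$, leaning on the preceding lemma (the one matching stack heights with the number of enclosing boxes). The slogan is that a walk confined to the content of a box $b$ never consults or rewrites any piece of the context strictly below the stack layer that $b$ itself opens, and on the necessitation stack it does not even touch the topmost layer. I read ``a box'' here as an \emph{exponential} ($!$-)box, which is the case needed for the box-reduction steps in the proof of the main lemma $\fint{\net}=\fint{\net'}$; a symmetric statement for necessitation boxes holds by the same argument, and in fact the two must be proved simultaneously, since a walk inside a $!$-box may descend into a $\bangbox~$-sub-box and conversely. Fix such a $b$, and let $k_{\bcont}$ and $k_{\ncont}$ be the numbers of exponential, resp.\ necessitation, boxes containing $b$. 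By the preceding lemma, whenever the traversal of $\phi$ sits on an edge immediately inside a door of $b$ the exponential stack has height $k_{\bcont}+1$ and its top is precisely the layer of $b$, while the necessitation stack has height $k_{\ncont}$ and its top is a layer opened by some box strictly containing $b$; so the displayed $\sigma$ and $\tau$ are exactly these two entries.

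For the induction I would first isolate the parts of $\phi$ that dip below $b$'s principal level: decompose $\phi$ into an alternating sequence of maximal sub-walks that either (a) remain at the principal level of $b$, entering no immediate sub-box, or (b) remain inside a single immediate sub-box $b'$ of $b$, meeting a door of $b'$ only at their two endpoints. For a segment of type (b) the lifted action is the image under $\fint{!(-)}$ (if $b'$ is a $!$-box) or $\fint{\bangbox(-)}$ (if $b'$ is a $\bangbox~$-box) of the action of the corresponding walk inside $\net_{b'}$; the defining clause for $\fint{!(f)}$ recurses on the second component of the top pair of the exponential stack and hands back the first component verbatim (dually for $\fint{\bangbox(f)}$), so the outer wrapper -- which contains all of $\sigma.\bcont$ -- is untouched, and applying the induction hypothesis to $f$ viewed as a walk inside $b'$ makes the recursive call leave the other stack pointwise fixed as well. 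For a segment of type (a) only the multiplicative labels and the door labels of the immediate sub-boxes of $b$ can occur, and a direct inspection of $\fint{p},\fint{q},\fint{r},\fint{s},\fint{t},\fint{d}$ together with their primed and starred variants shows that each of them acts only within the fragment of the context lying above $b$'s layer. Composing the segment-wise effects and peeling off the $\fbracket{-,-}$ wrappers then gives the asserted form $(\mcont',\sigma'.\bcont,\tau.\ncont)$.

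The genuinely delicate part, I expect, is the structural bookkeeping that licenses the (a)/(b) decomposition and keeps the stacks in exactly the pairing shape demanded by the $\fint{!(-)}$ and $\fint{\bangbox(-)}$ clauses. One has to verify that whenever $\phi$ leaves $b$'s principal level it does so by crossing the door of an immediate sub-box and re-enters through a door of that same sub-box -- using that proof nets are properly nested and that a walk, by its definition, never traverses an edge backward immediately after traversing it forward (or vice versa) -- and that at the instant a sub-box is entered the relevant stack indeed carries a top pair, which is again supplied by the preceding height lemma. The mutual induction between the $!$-case and the $\bangbox~$-case is well-founded on the length of the walk, and once this part is in place the rest is the routine equational reasoning in $\Lambda^{\Box *}$ indicated above.
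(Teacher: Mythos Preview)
Your proposal is sound and substantially more elaborate than the paper's own proof, which reads in its entirety: ``Again, by spectating the rules of actions.'' The paper supplies no inductive structure, no decomposition into principal-level versus sub-box segments, no mutual treatment of $!$- and $\bangbox\,$-boxes, and does not explicitly invoke the preceding height lemma; it simply appeals to direct inspection of the action clauses, exactly as it did for the lemma before. What you have written is precisely the kind of argument one would produce to make that one-liner rigorous: your (a)/(b) decomposition together with the observation that a type-(b) segment is the $\fint{!(-)}$ or $\fint{\bangbox(-)}$ lift of a shorter inner walk is the standard way to organise the ``inspection'', and your use of the height lemma to guarantee the stacks carry the required pair shape at each sub-box door is the bookkeeping the paper omits. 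Your reading of ``a box'' as an exponential box, with a symmetric companion statement for necessitation boxes proved simultaneously, is a sensible disambiguation of the asymmetric form of the conclusion; the paper does not make this explicit either.
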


\begin{proof}
  Again, by spectating the rules of actions.
\end{proof}

\begin{theorem}
  If a closed term $M$ in $\lambdabangbox$ is typable and
  $\ottnt{M}  \leadsto  \ottnt{M'}$,
  then $\fint{\flambdanettrans{M}} = \fint{\flambdanettrans{M'}}$.
\end{theorem}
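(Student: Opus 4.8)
The plan is to reduce the claim to the already-established invariance of the execution formula under proof-net normalization, and to carry it out in three steps. First, a \emph{simulation lemma}: if $\ottnt{M}  \leadsto  \ottnt{M'}$ in $\lambdabangbox$, then $\flambdanettrans{M} \to^* \flambdanettrans{M'}$ as proof nets (the term-level analogue of Lemma~\ref{lem:simulation_seqnet}). Second, confluence together with strong normalization of proof-net reduction on the nets in the image of $\flambdanettrans{-}$, so that $\flambdanettrans{M}$ and $\flambdanettrans{M'}$ share a common normal form $\net_0$. Third, an appeal to the preceding lemma, which states that $\fint{\net} = \fint{\net'}$ whenever $\net$ is a closed proof net and $\net'$ its normal form. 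Since $M$ is closed and well-typed, $\flambdanettrans{M}$ is a closed proof net, and by subject reduction so is $\flambdanettrans{M'}$; hence $\fint{\flambdanettrans{M}} = \fint{\net_0} = \fint{\flambdanettrans{M'}}$, which is the claim.

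For the simulation lemma I would induct on the derivation of $\ottnt{M}  \leadsto  \ottnt{M'}$, i.e.\ on the evaluation context around the contracted redex, with base cases $\rbetalimp$, $\rbetabang$, $\rbetabangbox$. Reading off the translation of Figure~\ref{fig:bangboxnettranslation}: a $\rbetalimp$-redex is sent to a $\stensor/\spar$-cut over the sub-nets of the function and the argument, which the multiplicative reduction rule rewrites to $\flambdanettrans{M[x := N]}$; a $\rbetabang$-redex is sent to a $\rbang$-box cut against a $\rbang$-node, which the exponential rules (dereliction/contraction/weakening against promotion, together with the book-keeping that re-stacks auxiliary doors) rewrite to the translation of the contractum; and the $\rbetabangbox$-case is entirely parallel, using the $\bangbox$~-box and the $\bangbox$~-reduction rules, because a $\bangbox$~-node has the same graph-theoretic shape as a $\rbang$-node. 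The inductive (compatible-closure) cases use that $\flambdanettrans{-}$ is compositional: reduction inside a $\lambda$-body, inside an argument, or inside $ !  \ottnt{M} $ / $ \bangbox  \ottnt{M} $ is translated to reduction inside the corresponding sub-net or box content. Lemma~\ref{lem:net_preservation} guarantees that all these reductions stay within proof nets.

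The second step is routine. Strong normalization of proof-net reduction on the image of $\flambdanettrans{-}$ follows from strong normalization of $\lambdabangbox$ via the simulation (an infinite net reduction projects onto, at worst, finitely much box book-keeping interleaved with $\leadsto$-steps, so it forces an infinite $\leadsto$-sequence); local confluence is the standard critical-pair check for $\mell$ proof nets, extended mechanically to the $\bangbox$~-rules since these mirror the $\rbang$-rules; Newman's lemma then yields confluence, hence a unique normal form, which by the simulation is shared by $\flambdanettrans{M}$ and $\flambdanettrans{M'}$.

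I expect the simulation lemma to be the only real work, and within it the delicate point is matching $\flambdanettrans{M'}$ with the reduct of $\flambdanettrans{M}$ on the nose: as usual for proof nets these coincide only up to the structural identifications built into the box notation (associativity/commutativity of auxiliary-door bundling, and the re-bracketings corresponding to $\fint{t}$ and $\fint{t'}$). I would absorb these into proof-net equality, or else insert the finitely many extra graph-rewriting steps explicitly; either way the $\bangbox$~-cases require no new idea over the $\rbang$-cases, for precisely the structural reason repeatedly emphasized in the paper.
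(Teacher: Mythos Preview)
Your overall strategy---reduce the claim to invariance of $\fint{-}$ under proof-net reduction via a term-to-net simulation lemma---is correct and is exactly the implicit argument the paper intends (the theorem is stated without proof, immediately after the key lemma and its two auxiliary lemmata).

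Your second step, however, is an unnecessary detour, and the justification you give for it does not work. The two auxiliary lemmata (stack height matches box depth; a path inside a box leaves the outer stack components untouched) are precisely what one uses to show that \emph{each single reduction rule} of Figure~\ref{fig:reduction_rules} preserves the action of every maximal path, hence preserves $\fint{-}$. So the real content of the preceding lemma is step-by-step invariance, merely packaged as ``net versus its normal form.'' Once your simulation gives $\flambdanettrans{M} \to^* \flambdanettrans{M'}$, iterate step-by-step invariance along that finite chain and you are done: no confluence, no strong normalization of nets, no common normal form is needed.

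This matters because your argument for strong normalization of proof nets from strong normalization of $\lambdabangbox$ is in the wrong direction. Simulation says a term step yields a sequence of net steps; it does \emph{not} say that an arbitrary net reduction starting from $\flambdanettrans{M}$ can be reflected back into term reduction. After one net step the result need not lie in the image of $\flambdanettrans{-}$ at all, so there is nothing to ``project onto $\leadsto$-steps interleaved with book-keeping.'' One can prove SN of $\cmellbangbox$ nets independently, but with the step-by-step reading of the lemma you simply do not need it here.
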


\begin{remark}
This notion of context semantics inherently captures the ``dynamics'' of computation, and indeed
Mackie exploited\,\cite{M:goim, M:goim_popl} the character to implement a compiler, in the level of machine code, for PCF.
In this paper we do not cover such a concrete compiler,
but the definition of $\fint{-}$ can be seen as
``context transformers'' of virtual machine that
is mathematically rigorous enough to model the computation of $\lambdabangbox$ (and hence of $\lambdabox$).
\end{remark}

\section{Related work}
\label{sec:related_work}

\subsection{Linear-logical reconstruction of modal logic}
The work on translations from modal logic to linear logic goes back to Martini and Masini\,\cite{MM:modal_view}.
They proposed a translation from classical S4 ($\csf$) to full propositional linear logic
by means of the Grisin--Ono translation.
However, their work only discusses provability.

The most similar work to ours is a ``linear analysis'' of $\csf$ by Schellinx\,\cite{S:linear_approach}, in which Girard translation from $\csf$ with respect to proofs is proposed.
He uses a bi-colored linear logic, a subsystem of multicolored linear logic by Danos et al.\,\cite{DJS:structure_of_exponentials},
called $\bill$, for a target calculus of the translation.
It has two pairs of exponentials
$\fbracket{\bangzero, \whynotzero}$ and $\fbracket{\bangone, \whynotone}$,
called \emph{subexponentials} following the terminology by Nigam and Miller\,\cite{NM:subexponentials}, with the following rules:

\begin{tabular}{c}
  \begin{minipage}{0.45\hsize}
    \begin{prooftree}
        \AXC{$ \bangone \Gamma, \bangzero \Gamma' \vdash A, \whynotone \Delta, \whynotzero \Delta' $}
      \RightLabel{ $\bangzero\mathrm{R}$ } 
      \UIC{$ \bangone \Gamma, \bangzero \Gamma' \vdash \bangzero A, \whynotone \Delta, \whynotzero \Delta' $}
    \end{prooftree}
  \end{minipage}

  \begin{minipage}{0.45\hsize}
    \begin{prooftree}
        \AXC{$ \bangone \Gamma \vdash A, \whynotone \Delta  $}
      \RightLabel{ $\bangone\mathrm{R}$ } 
      \UIC{$ \bangone \Gamma \vdash \bangone A, \whynotone \Delta  $}
    \end{prooftree}
  \end{minipage}
\end{tabular}

\noindent
These rules have, while they are defined in the classical setting, 
essentially the same structure to what we defined as $\rbangr$ and $\rbangboxr$ for $\imellbangbox$, respectively.

To mention the difference between the results of Schellinx and ours,
his work has investigated only in terms of proof theory. Neither a typed $\lambda$-calculus nor a Geometry of Interaction interpretation was given.
However, even so, he already gave a reduction-preserving Girard translation for the sequent calculi of $\csf$ and $\bill$, 
and his \emph{linear decoration}\,(cf. \cite{S:linear_approach, DJS:structure_of_exponentials}) allows us to obtain the cut-elimination theorem for $\csf$ as a corollary of that of $\bill$.
Thus, it should be interesting to investigate a relationship between his work and ours.

Furthermore, there also exist two uniform logical frameworks
that can encode various logics including $\isf$ and $\csf$.
One is the work by Nigam et al.\,\cite{N+:extended_framework} which based on Nigam and Miller's linear-logical framework with subexponentials
and on the notion of focusing by Andreoli.
The other work is \emph{adjoint logic} by Pruiksma et al.\,\cite{P+:adjoint_logic} which based on, again, subexponentials, and
the so-called LNL model for intuitionistic linear logic by Benton.
While our present work is still far from the two works, it seems fruitful to take our discussion into their frameworks 
to give linear-logical computational interpretations for various logics.

\subsection{Computation of modal logic and its relation to metaprogramming}
Computational interpretations of modal logic have been considered not only for intuitionistic S4
but also for various logics,
including the modal logics $\mathrm{K}$, $\mathrm{T}$, $\mathrm{K4}$, and $\mathrm{GL}$,
and a few constructive temporal logics\,(cf. the survey by Kavvos in \cite{K:dual-context_lics}).
This field of modal logics is known to be connected
to ``metaprogramming''
in the theory of programming languages and has been substantially studied. 
One of the studies is \emph{(multi-)staged computation}\,(cf.\,\cite{TS:metaml}), 
which is a programming paradigm that supports Lisp-like \emph{quasi-quote}, \emph{unquote}, and \emph{eval}.
The work of $\lambdabox$ by Davies and Pfenning\,\cite{DP:modal_analysis} is actually one of logical investigations of it.

Furthermore, the multi-stage programming is not a mere theory but has ``real'' implementations such as MetaML\,\cite{TS:metaml} and MetaOCaml\,(cf. a survey in \cite{C+:inference_for_classifiers})
in the style of functional programming languages. Some core calculi of these implementations are formalized as type systems\,(e.g.\,\cite{T:environment_classifiers, C+:inference_for_classifiers})
and investigated from the logical point of view (e.g.\,\cite{TI:logical_foundation}).

\section{Conclusion}
\label{sec:conclusion}
We have presented a linear-logical reconstruction of the intuitionistic modal logic S4,
by establishing the modal linear logic with the $\bangbox~$-modality and the S4-version of Girard translation from $\isf$.
The translation from $\isf$ to the modal linear logic is shown to be correct with respect to the level of proofs,
through the Curry--Howard correspondence.

While the proof-level Girard translation for modal logic is already proposed by Schellinx,
our typed $\lambda$-calculus $\lambdabangbox$ and its Geometry of Interaction Machine (GoIM) are novel.
Also, the significance of our formalization
is its simplicity.
All we need to establish the linear-logical reconstruction of modal logic is the $\bangbox~$-modality,
an integration of $!$-modality and $\Box$-modality,
that gives the structure of modal logic into linear logic.
Thanks to the simplicity, our $\lambda$-calculus and the GoIM can be obtained as simple extensions of existing works.

As a further direction, we plan to enrich our framework to cover other modal logics such as $\mathrm{K}$, $\mathrm{T}$, and $\mathrm{K4}$,
following the work of contextual modal calculi by Kavvos\,\cite{K:dual-context_lics}.
Moreover, reinvestigating of the modal-logical foundation for multi-stage programming by Tsukada and Igarashi\,\cite{TI:logical_foundation}
via our methods and extending Mackie's GoIM for PCF\,\cite{M:goim_popl} to the modal-logical setting
seem to be interesting from the viewpoint of programming languages.

Lastly,  we have also left a semantical study for modal linear logic with respect to the validity.
At the present stage, we think that we could give a sound-and-complete characterization of modal linear logic by an integration of Kripke semantics of modal logic and phase semantics of linear logic,
but details will be studied in a future paper.

\bibliography{ref}

\appendix
\section{Appendix}

\subsection{Cut-elimination theorem for the intuitionistic modal linear logic}

In this section, we give a complete proof of the cut-elimination theorem of $\imellbangbox$.

\begin{theorem}[Cut-elimination]
  The rule $\rcut$ in $\imellbangbox$ is admissible, i.e.,
  if $\Gamma \, \vdash \, \ottnt{A}$ is derivable, then there is a derivation of the same judgment without any applications of $\rcut$.
\end{theorem}

\begin{proof}
  As we mentioned in the body,
  we will show the following rules are admissible.

  \begin{tabular}{c}
    \begin{minipage}{0.45\hsize}
      \begin{prooftree}
          \AXC{$\Gamma \, \vdash \, \ottsym{!}  \ottnt{A}$}
          \AXC{$\Gamma'  \ottsym{,}   ( \ottsym{!}  \ottnt{A} )^{n}  \, \vdash \, \ottnt{B}$}
        \RightLabel{$ \rbcut $}
        \BIC{$\Gamma  \ottsym{,}  \Gamma' \, \vdash \, \ottnt{B}$}
      \end{prooftree}
    \end{minipage}

    \begin{minipage}{0.45\hsize}
      \begin{prooftree}
          \AXC{$\Gamma \, \vdash \,  \bangbox \ottnt{A} $}
          \AXC{$\Gamma'  \ottsym{,}   (  \bangbox \ottnt{A}  )^{n}  \, \vdash \, \ottnt{B}$}
        \RightLabel{$ \rbbcut $}
        \BIC{$\Gamma  \ottsym{,}  \Gamma' \, \vdash \, \ottnt{B}$}
      \end{prooftree}
    \end{minipage}
  \end{tabular}

  \noindent
  The admissibility of $\rcut$, $\rbcut$, $\rbbcut$ are shown by
  simultaneous induction on the derivation of $\Gamma \, \vdash \, \ottnt{A}$
  with the lexicographic complexity $\fbracket{\delta, h}$,
  where $\delta$ is the degree of the assumed derivation and $h$ is its height.
  Therefore, it is enough to show that for every application of cuts, one of the following hold: (1) it can be reduced to a cut with a smaller cut-degree; (2) it can be reduced to a cut with a smaller height; (3) it can be eliminated immediately.

  In what follows, we will explain the admissibility of each cut rule separately although the actual proofs are done simultaneously.
  \begin{itemize}
    \item The admissibility of $\rcut$.
      We show that every application of the rule $\rcut$ whose cut-degree is maximal is eliminable.
      Thus, consider an application of $\rcut$ in the derivation:
      \begin{prooftree}
            \AXC{$ \Pi_0 $}
          \noLine
          \UIC{$\Gamma \, \vdash \, \ottnt{A}$}
            \AXC{$ \Pi_1 $}
          \noLine
          \UIC{$\Gamma'  \ottsym{,}  \ottnt{A} \, \vdash \, \ottnt{B}$}
        \RightLabel{$ \rcut $}
        \BIC{$\Gamma  \ottsym{,}  \Gamma' \, \vdash \, \ottnt{B}$}
      \end{prooftree}
      such that its cut-degree is maximal and its height is minimal (comparing to the other applications whose cut-degree is maximal).
      The proof proceeds by case analysis on $\Pi_0$.
      \begin{itemize}
        \item $\Pi_0$ ends with $\rcut$.
          In this case the derivation is as follows:
          \begin{prooftree}
                  \AXC{$ \vdots $}
                \noLine
                \UIC{$\Gamma_{{\mathrm{0}}} \, \vdash \, \ottnt{C}$}
                  \AXC{$ \vdots $}
                \noLine
                \UIC{$\Gamma_{{\mathrm{1}}}  \ottsym{,}  \ottnt{C} \, \vdash \, \ottnt{A}$}
              \RightLabel{$ \rcut $}
              \BIC{$\Gamma_{{\mathrm{0}}}  \ottsym{,}  \Gamma_{{\mathrm{1}}} \, \vdash \, \ottnt{A}$}
                \AXC{$ \Pi_1 $}
              \noLine
              \UIC{$\Gamma'  \ottsym{,}  \ottnt{A} \, \vdash \, \ottnt{B}$}
            \RightLabel{$ \rcut $}
            \BIC{$\Gamma_{{\mathrm{0}}}  \ottsym{,}  \Gamma_{{\mathrm{1}}}  \ottsym{,}  \Gamma' \, \vdash \, \ottnt{B}$}
          \end{prooftree}
          Since the bottom application of $\rcut$ was chosen to have the maximal cut-degree and the minimum height,
          the cut-degree of the above is less than that of the bottom.
          Therefore, the derivation can be translated to the following:
          \begin{prooftree}
                \AXC{$ \vdots $}
              \noLine
              \UIC{$\Gamma_{{\mathrm{0}}} \, \vdash \, \ottnt{C}$}
                \AXC{$ \vdots $}
                \noLine
                \UIC{$\Gamma_{{\mathrm{1}}}  \ottsym{,}  \ottnt{C} \, \vdash \, \ottnt{A}$}
                  \AXC{$ \Pi_1 $}
                \noLine
                \UIC{$\Gamma'  \ottsym{,}  \ottnt{A} \, \vdash \, \ottnt{B}$}
              \doubleLine
              \RightLabel{ I.H. }
              \BIC{$\Gamma_{{\mathrm{1}}}  \ottsym{,}  \ottnt{C}  \ottsym{,}  \Gamma' \, \vdash \, \ottnt{B}$}
            \doubleLine
            \RightLabel{ I.H. }
            \BIC{$\Gamma_{{\mathrm{0}}}  \ottsym{,}  \Gamma_{{\mathrm{1}}}  \ottsym{,}  \Gamma' \, \vdash \, \ottnt{B}$}
          \end{prooftree}
        \item $\Pi_0$ ends with $\rlimpr$.
          In this case, the derivation is as follows:
          \begin{prooftree}
                  \AXC{$ \vdots $}
                \noLine
                \UIC{$\Gamma  \ottsym{,}  \ottnt{A_{{\mathrm{0}}}} \, \vdash \, \ottnt{A_{{\mathrm{1}}}}$}
              \RightLabel{$ \rlimpr $}
              \UIC{$\Gamma \, \vdash \, \ottnt{A_{{\mathrm{0}}}}  \multimap  \ottnt{A_{{\mathrm{1}}}}$}
                \AXC{$ \Pi_1 $}
              \noLine
              \UIC{$\Gamma'  \ottsym{,}  \ottnt{A_{{\mathrm{0}}}}  \multimap  \ottnt{A_{{\mathrm{1}}}} \, \vdash \, \ottnt{B}$}
            \RightLabel{$ \rcut $}
            \BIC{$\Gamma  \ottsym{,}  \Gamma' \, \vdash \, \ottnt{B}$}
          \end{prooftree}
          for some $A_0$ and $A_1$ such that $A \equiv \ottnt{A_{{\mathrm{0}}}}  \multimap  \ottnt{A_{{\mathrm{1}}}}$.
          If the last step in $\Pi_1$ is $\rax$, then the result is obtained as $\Pi_0$.
          If the last step in $\Pi_1$ is $\rlimpl$, the derivation is as follows:
          \begin{prooftree}
                  \AXC{$ \vdots $}
                \noLine
                \UIC{$\Gamma  \ottsym{,}  \ottnt{A_{{\mathrm{0}}}} \, \vdash \, \ottnt{A_{{\mathrm{1}}}}$}
              \RightLabel{$ \rlimpr $}
              \UIC{$\Gamma \, \vdash \, \ottnt{A_{{\mathrm{0}}}}  \multimap  \ottnt{A_{{\mathrm{1}}}}$}
                  \AXC{$ \vdots $}
                \noLine
                \UIC{$\Gamma' \, \vdash \, \ottnt{A_{{\mathrm{0}}}}$}
                  \AXC{$ \vdots $}
                \noLine
                \UIC{$\Gamma''  \ottsym{,}  \ottnt{A_{{\mathrm{1}}}} \, \vdash \, \ottnt{B}$}
              \RightLabel{$ \rlimpl $}
              \BIC{$\Gamma'  \ottsym{,}  \Gamma''  \ottsym{,}  \ottnt{A_{{\mathrm{0}}}}  \multimap  \ottnt{A_{{\mathrm{1}}}} \, \vdash \, \ottnt{B}$}
            \RightLabel{$ \rcut $}
            \BIC{$\Gamma  \ottsym{,}  \Gamma'  \ottsym{,}  \Gamma'' \, \vdash \, \ottnt{B}$}
          \end{prooftree}
          which is translated to the following:
          \begin{prooftree}
                  \AXC{$ \vdots $}
                \noLine
                \UIC{$\Gamma' \, \vdash \, \ottnt{A_{{\mathrm{0}}}}$}
                  \AXC{$ \vdots $}
                \noLine
                \UIC{$\Gamma  \ottsym{,}  \ottnt{A_{{\mathrm{0}}}} \, \vdash \, \ottnt{A_{{\mathrm{1}}}}$}
              \doubleLine
              \RightLabel{ I.H. }
              \BIC{$\Gamma  \ottsym{,}  \Gamma' \, \vdash \, \ottnt{A_{{\mathrm{1}}}}$}
                \AXC{$ \vdots $}
              \noLine
              \UIC{$\Gamma''  \ottsym{,}  \ottnt{A_{{\mathrm{1}}}} \, \vdash \, \ottnt{B}$}
            \doubleLine
            \RightLabel{ I.H. }
            \BIC{$\Gamma  \ottsym{,}  \Gamma'  \ottsym{,}  \Gamma'' \, \vdash \, \ottnt{B}$}
          \end{prooftree}
          since the cut-degrees of $A_0$ and $A_1$ are less than that of $A$.
          The other cases can be shown by simple commutative conversions.
        \item $\Pi_0$ ends with $\rbangr$.
          This case is dealt as a special case of the case $\rbangr$ in $\rbcut$.
        \item $\Pi_0$ ends with $\rbangboxr$.
          This case is dealt as a special case of the case $\rbangboxr$ in $\rbbcut$.
        \item $\Pi_0$ ends with the other rules. Easy.
      \end{itemize}
    \item The admissibility of $\rbcut$.
      As in the case of $\rcut$, consider an application of $\rbcut$:
      \begin{prooftree}
            \AXC{$ \Pi_0 $}
          \noLine
          \UIC{$\Gamma \, \vdash \, \ottsym{!}  \ottnt{A}$}
            \AXC{$ \Pi_1 $}
          \noLine
          \UIC{$\Gamma'  \ottsym{,}   ( \ottsym{!}  \ottnt{A} )^{n}  \, \vdash \, \ottnt{B}$}
        \RightLabel{$ \rbcut $}
        \BIC{$\Gamma  \ottsym{,}  \Gamma' \, \vdash \, \ottnt{B}$}
      \end{prooftree}
      such that its cut-degree is maximal and its height is minimal.
      By case analysis on $\Pi_0$.
      \begin{itemize}
        \item $\Pi_0$ ends with $\rax$.
          In this case the cut-elimination is done as follows:
    
          \begin{tabular}{c}
            \begin{minipage}{0.45\hsize}
            \begin{prooftree}
                  \AXC{$ $}
                \UIC{$\ottsym{!}  \ottnt{A} \, \vdash \, \ottsym{!}  \ottnt{A}$}
                  \AXC{$ \Pi_1 $}
                \noLine
                \UIC{$\Gamma'  \ottsym{,}   ( \ottsym{!}  \ottnt{A} )^{n}  \, \vdash \, \ottnt{B}$}
              \RightLabel{$ \rbcut $}
              \BIC{$\ottsym{!}  \ottnt{A}  \ottsym{,}  \Gamma' \, \vdash \, \ottnt{B}$}
            \end{prooftree}
            \end{minipage}

            \begin{minipage}{0.09\hsize}
              $\overset{\text{Cut elim.}}{\Longrightarrow}$
            \end{minipage}

            \begin{minipage}{0.30\hsize}
            \begin{prooftree}
                  \AXC{$ \Pi_1 $}
                \noLine
                \UIC{$\Gamma'  \ottsym{,}   ( \ottsym{!}  \ottnt{A} )^{n}  \, \vdash \, \ottnt{B}$}
              \doubleLine
              \RightLabel{$ \rbcont $}
              \UIC{$\Gamma'  \ottsym{,}  \ottsym{!}  \ottnt{A} \, \vdash \, \ottnt{B}$}
            \end{prooftree}
            \end{minipage}
          \end{tabular}
        \item $\Pi_0$ ends with $\rbangr$.
          In this case the derivation is as follows:
          \begin{prooftree}
                  \AXC{$ \vdots $}
                \noLine
                \UIC{$ \bangbox \Gamma_{{\mathrm{0}}}   \ottsym{,}  \ottsym{!}  \Gamma_{{\mathrm{1}}} \, \vdash \, \ottnt{A}$}
              \RightLabel{$ \rbangr $}
              \UIC{$ \bangbox \Gamma_{{\mathrm{0}}}   \ottsym{,}  \ottsym{!}  \Gamma_{{\mathrm{1}}} \, \vdash \, \ottsym{!}  \ottnt{A}$}
                \AXC{$ \Pi_1 $}
              \noLine
              \UIC{$\Gamma'  \ottsym{,}   ( \ottsym{!}  \ottnt{A} )^{n}  \, \vdash \, \ottnt{B}$}
            \RightLabel{$ \rbcut $}
            \BIC{$ \bangbox \Gamma_{{\mathrm{0}}}   \ottsym{,}  \ottsym{!}  \Gamma_{{\mathrm{1}}}  \ottsym{,}  \Gamma' \, \vdash \, \ottnt{B}$} 
          \end{prooftree}
          Due to the side-condition of $\rbangr$, we have to do case analysis on $\Pi_1$ further as follows.
          \begin{itemize}
            \item $\Pi_1$ ends with $\rcut$.
              In this case the derivation is as follows:
              \begin{prooftree}
                    \AXC{$ \Pi_0 $}
                  \noLine
                  \UIC{$ \bangbox \Gamma_{{\mathrm{0}}}   \ottsym{,}  \ottsym{!}  \Gamma_{{\mathrm{1}}} \, \vdash \, \ottsym{!}  \ottnt{A}$}
                      \AXC{$ \vdots $}
                    \noLine
                    \UIC{$\Gamma'  \ottsym{,}   ( \ottsym{!}  \ottnt{A} )^{k}  \, \vdash \, \ottnt{C}$}
                      \AXC{$ \vdots $}
                    \noLine
                    \UIC{$\Gamma''  \ottsym{,}   ( \ottsym{!}  \ottnt{A} )^{l}   \ottsym{,}  \ottnt{C} \, \vdash \, \ottnt{B}$}
                  \RightLabel{$ \rcut $}
                  \BIC{$\Gamma'  \ottsym{,}  \Gamma''  \ottsym{,}   ( \ottsym{!}  \ottnt{A} )^{n}  \, \vdash \, \ottnt{B}$}
                \RightLabel{$ \rbcut $}
                \BIC{$ \bangbox \Gamma_{{\mathrm{0}}}   \ottsym{,}  \ottsym{!}  \Gamma_{{\mathrm{1}}}  \ottsym{,}  \Gamma'  \ottsym{,}  \Gamma'' \, \vdash \, \ottnt{B}$}
              \end{prooftree}
              where $n = k + l$.
              We only deal with the case of $k > 0$ and $l > 0$, and the other cases are easy.
              Then, the derivation can be translated to the following:
              \begin{prooftree}
                \def\defaultHypSeparation{\hskip .1in}
                        \AXC{$ \Pi_0 $}
                      \noLine
                      \UIC{$ \bangbox \Gamma_{{\mathrm{0}}}   \ottsym{,}  \ottsym{!}  \Gamma_{{\mathrm{1}}} \, \vdash \, \ottsym{!}  \ottnt{A}$}
                        \AXC{$ \vdots $}
                      \noLine
                      \UIC{$\Gamma'  \ottsym{,}   ( \ottsym{!}  \ottnt{A} )^{k}  \, \vdash \, \ottnt{C}$}
                    \doubleLine
                    \RightLabel{ I.H. }
                    \BIC{$ \bangbox \Gamma_{{\mathrm{0}}}   \ottsym{,}  \ottsym{!}  \Gamma_{{\mathrm{1}}}  \ottsym{,}  \Gamma' \, \vdash \, \ottnt{C}$}
                        \AXC{$ \Pi_0 $}
                      \noLine
                      \UIC{$ \bangbox \Gamma_{{\mathrm{0}}}   \ottsym{,}  \ottsym{!}  \Gamma_{{\mathrm{1}}} \, \vdash \, \ottsym{!}  \ottnt{A}$}
                        \AXC{$ \vdots $}
                      \noLine
                      \UIC{$\Gamma''  \ottsym{,}   ( \ottsym{!}  \ottnt{A} )^{l}   \ottsym{,}  \ottnt{C} \, \vdash \, \ottnt{B}$}
                    \doubleLine
                    \RightLabel{ I.H. }
                    \BIC{$ \bangbox \Gamma_{{\mathrm{0}}}   \ottsym{,}  \ottsym{!}  \Gamma_{{\mathrm{1}}}  \ottsym{,}  \Gamma''  \ottsym{,}  \ottnt{C} \, \vdash \, \ottnt{B}$}
                  \doubleLine
                  \RightLabel{ I.H. }
                  \BIC{$ (  \bangbox \Gamma_{{\mathrm{0}}}  )^2   \ottsym{,}   ( \ottsym{!}  \Gamma_{{\mathrm{1}}} )^2   \ottsym{,}  \Gamma'  \ottsym{,}  \Gamma'' \, \vdash \, \ottnt{B}$}
                \doubleLine
                \RightLabel{$ \rbcont, \rbbcont $}
                \UIC{$ \bangbox \Gamma_{{\mathrm{0}}}   \ottsym{,}  \ottsym{!}  \Gamma_{{\mathrm{1}}}  \ottsym{,}  \Gamma'  \ottsym{,}  \Gamma'' \, \vdash \, \ottnt{B}$}
              \end{prooftree}
              since the cut-degree of $\rbcut$ is less than that of $\rcut$ from the assumption.
            \item $\Pi_1$ ends with $\rbangl$.
              If the formula introduced by $\rbangl$ is not the cut-formula, then it is easy.
              For the other case, the derivation is as follows:
              \begin{prooftree}
                      \AXC{$ \vdots $}
                    \noLine
                    \UIC{$ \bangbox \Gamma_{{\mathrm{0}}}   \ottsym{,}  \ottsym{!}  \Gamma_{{\mathrm{1}}} \, \vdash \, \ottnt{A}$}
                  \RightLabel{$ \rbangr $}
                  \UIC{$ \bangbox \Gamma_{{\mathrm{0}}}   \ottsym{,}  \ottsym{!}  \Gamma_{{\mathrm{1}}} \, \vdash \, \ottsym{!}  \ottnt{A}$}
                      \AXC{$ \vdots $}
                    \noLine
                    \UIC{$\Gamma'  \ottsym{,}   ( \ottsym{!}  \ottnt{A} )^{n-1}   \ottsym{,}  \ottnt{A} \, \vdash \, \ottnt{B}$}
                  \RightLabel{$ \rbangl $}
                  \UIC{$\Gamma'  \ottsym{,}   ( \ottsym{!}  \ottnt{A} )^{n}  \, \vdash \, \ottnt{B}$}
                \RightLabel{$ \rbcut $}
                \BIC{$ \bangbox \Gamma_{{\mathrm{0}}}   \ottsym{,}  \ottsym{!}  \Gamma_{{\mathrm{1}}}  \ottsym{,}  \Gamma' \, \vdash \, \ottnt{B}$}
              \end{prooftree}
              which is translated to the following:
              \begin{prooftree}
                      \AXC{$ \vdots $}
                    \noLine
                    \UIC{$ \bangbox \Gamma_{{\mathrm{0}}}   \ottsym{,}  \ottsym{!}  \Gamma_{{\mathrm{1}}} \, \vdash \, \ottnt{A}$}
                        \AXC{$ \Pi_0 $}
                      \noLine
                      \UIC{$ \bangbox \Gamma_{{\mathrm{0}}}   \ottsym{,}  \ottsym{!}  \Gamma_{{\mathrm{1}}} \, \vdash \, \ottsym{!}  \ottnt{A}$}
                        \AXC{$ \vdots $}
                      \noLine
                      \UIC{$\Gamma'  \ottsym{,}   ( \ottsym{!}  \ottnt{A} )^{n-1}   \ottsym{,}  \ottnt{A} \, \vdash \, \ottnt{B}$}
                    \doubleLine
                    \RightLabel{ I.H. }
                    \BIC{$ \bangbox \Gamma_{{\mathrm{0}}}   \ottsym{,}  \ottsym{!}  \Gamma_{{\mathrm{1}}}  \ottsym{,}  \Gamma'  \ottsym{,}  \ottnt{A} \, \vdash \, \ottnt{B}$}
                  \doubleLine
                  \RightLabel{ I.H. }
                  \BIC{$ (  \bangbox \Gamma_{{\mathrm{0}}}  )^2   \ottsym{,}   ( \ottsym{!}  \Gamma_{{\mathrm{1}}} )^2   \ottsym{,}  \Gamma' \, \vdash \, \ottnt{B}$}
                \doubleLine
                \RightLabel{$ \rbcont, \rbbcont $}
                \UIC{$ \bangbox \Gamma_{{\mathrm{0}}}   \ottsym{,}  \ottsym{!}  \Gamma_{{\mathrm{1}}}  \ottsym{,}  \Gamma' \, \vdash \, \ottnt{B}$}
              \end{prooftree}
            \item $\Pi_1$ ends with $\rbcont$.
              If the formula introduced by $\rbcont$ is not the cut-formula, then it is easy.
              For the other case, the cut-elimination is done as follows: 

              \begin{tabular}{c}
                \begin{minipage}{0.38\hsize}
                  \begin{prooftree}
                    \def\ScoreOverhang{2pt}
                    \def\defaultHypSeparation{\hskip 0.05in}
                        \AXC{$ \Pi_0 $}
                      \noLine
                      \UIC{$ \bangbox \Gamma_{{\mathrm{0}}}   \ottsym{,}  \ottsym{!}  \Gamma_{{\mathrm{1}}} \, \vdash \, \ottsym{!}  \ottnt{A}$}
                          \AXC{$ \vdots $}
                        \noLine
                        \UIC{$\Gamma'  \ottsym{,}   ( \ottsym{!}  \ottnt{A} )^{n+1}  \, \vdash \, \ottnt{B}$}
                      \RightLabel{$ \rbcont $}
                      \UIC{$\Gamma'  \ottsym{,}   ( \ottsym{!}  \ottnt{A} )^{n}  \, \vdash \, \ottnt{B}$}
                    \RightLabel{$ \rbcut $}
                    \BIC{$ \bangbox \Gamma_{{\mathrm{0}}}   \ottsym{,}  \ottsym{!}  \Gamma_{{\mathrm{1}}}  \ottsym{,}  \Gamma' \, \vdash \, \ottnt{B}$}
                  \end{prooftree}
                \end{minipage}

                \begin{minipage}{0.08\hsize}
                  $\overset{\text{Cut elim.}}{\Longrightarrow}$
                \end{minipage}

                \begin{minipage}{0.38\hsize}
                \begin{prooftree}
                  \def\ScoreOverhang{2pt}
                  \def\defaultHypSeparation{\hskip 0.05in}
                      \AXC{$ \Pi_0 $}
                    \noLine
                    \UIC{$ \bangbox \Gamma_{{\mathrm{0}}}   \ottsym{,}  \ottsym{!}  \Gamma_{{\mathrm{1}}} \, \vdash \, \ottsym{!}  \ottnt{A}$}
                      \AXC{$ \vdots $}
                    \noLine
                    \UIC{$\Gamma'  \ottsym{,}   ( \ottsym{!}  \ottnt{A} )^{n+1}  \, \vdash \, \ottnt{B}$}
                  \RightLabel{ I.H. }
                  \doubleLine
                  \BIC{$ \bangbox \Gamma_{{\mathrm{0}}}   \ottsym{,}  \ottsym{!}  \Gamma_{{\mathrm{1}}}  \ottsym{,}  \Gamma' \, \vdash \, \ottnt{B}$}
                \end{prooftree}
                \end{minipage}
              \end{tabular}

              Note that the whole proof has not been proceeding by induction on $n$, and hence the number of occurrences of $!A$ does not matter in this case.
            \item $\Pi_1$ ends with the other rules. Easy.
          \end{itemize}
        \item $\Pi_0$ ends with the other rules. Easy.
      \end{itemize}
    \item The admissibility of $\rbbcut$. Similar to the case of $\rbcut$. \qedhere
  \end{itemize}
\end{proof}

\subsection{Strong normalizability of the typed $\lambda$-calculus for modal linear logic}

We complete the proof of the strong normalization theorem for $\lambdabangbox$.
As we mentioned, this is done by an embedding to 
a typed $\lambda$-calculus
for the $(!, \slimp)$-fragment of dual intuitionistic linear logic,
studied by Ohta and Hasegawa\,\cite{OH:terminating_linear_lambda}, and shown to be strongly normalizing.

\begin{rulefigure}{fig:lambdabanglimp}{Definition of $\lambdabanglimp$ (some syntax are changed to fit the present paper's notation).}
    \hspace{-1em}
  \begin{tabular}{c}
    {
    \hspace{-1.2em}
    \begin{minipage}{0.98\hsize}
      \paragraph*{Syntactic category}
        \vspace{-1em}
        \begin{alignat*}{3}
          &\text{Types}~~&A, B, C &::= p ~|~ \ottnt{A}  \multimap  \ottnt{B} ~|~ \ottsym{!}  \ottnt{A}\\
          &\text{Terms}~~&M, N, L &::= x ~|~ \lambda  \ottmv{x}  \ottsym{:}  \ottnt{A}  \ottsym{.}  \ottnt{M} ~|~  \ottnt{M} \, \ottnt{N} ~|~  !  \ottnt{M}  ~|~ \ottkw{let} \,  !  \ottmv{x}   \ottsym{=}  \ottnt{M} \, \ottkw{in} \, \ottnt{N}
        \end{alignat*} 
    \end{minipage}
    }
  \end{tabular}

  \vspace{0.5em}
  
  \hspace{-1em}
  \begin{tabular}{c}
    {
    \hspace{-1.2em}
    \begin{minipage}{0.98\hsize}
      \paragraph*{Typing rule}
        \begin{center}
        \vspace{-1em}
        \begin{tabular}{c}
          \begin{minipage}{0.45\hsize}
            \begin{prooftree}
                \AXC{$ $}
              \RightLabel{$ \rlax $}
              \UIC{$\Gamma  \ottsym{;}  \ottmv{x}  \ottsym{:}  \ottnt{A} \, \vdash \, \ottmv{x}  \ottsym{:}  \ottnt{A}$}
            \end{prooftree}
          \end{minipage}
      
          \begin{minipage}{0.45\hsize}
            \begin{prooftree}
                \AXC{$ $}
              \RightLabel{$ \rbax $}
              \UIC{$\Gamma  \ottsym{,}  \ottmv{x}  \ottsym{:}  \ottnt{A}  \ottsym{;}  \emptyset \, \vdash \, \ottmv{x}  \ottsym{:}  \ottnt{A}$}
            \end{prooftree}
          \end{minipage}
        \end{tabular}
      
        \begin{tabular}{c}
          \begin{minipage}{0.38\hsize}
            \begin{prooftree}
                \AXC{$\Gamma  \ottsym{;}  \Sigma  \ottsym{,}  \ottmv{x}  \ottsym{:}  \ottnt{A} \, \vdash \, \ottnt{M}  \ottsym{:}  \ottnt{B}$}
              \RightLabel{$ \rlimpi $}
              \UIC{$\Gamma  \ottsym{;}  \Sigma \, \vdash \, \lambda  \ottmv{x}  \ottsym{:}  \ottnt{A}  \ottsym{.}  \ottnt{M}  \ottsym{:}  \ottnt{A}  \multimap  \ottnt{B}$}
            \end{prooftree}
          \end{minipage}
      
          \begin{minipage}{0.53\hsize}
            \begin{prooftree}
                \AXC{$\Gamma  \ottsym{;}  \Sigma \, \vdash \, \ottnt{M}  \ottsym{:}  \ottnt{A}  \multimap  \ottnt{B}$}
                \AXC{$\Gamma  \ottsym{;}  \Sigma' \, \vdash \, \ottnt{N}  \ottsym{:}  \ottnt{A}$}
              \RightLabel{$ \rlimpe $}
              \BIC{$\Gamma  \ottsym{;}  \Sigma  \ottsym{,}  \Sigma' \, \vdash \,  \ottnt{M} \, \ottnt{N}   \ottsym{:}  \ottnt{B}$}
            \end{prooftree}
          \end{minipage}
        \end{tabular}
      
        \begin{tabular}{c}
          \begin{minipage}{0.37\hsize}
            \begin{prooftree}
                \AXC{$\Gamma  \ottsym{;}  \emptyset \, \vdash \, \ottnt{M}  \ottsym{:}  \ottnt{A}$}
              \RightLabel{$ \rbangi $}
              \UIC{$\Gamma  \ottsym{;}  \emptyset \, \vdash \,  !  \ottnt{M}   \ottsym{:}  \ottsym{!}  \ottnt{A}$}
            \end{prooftree}
          \end{minipage}
      
          \begin{minipage}{0.53\hsize}
            \begin{prooftree}
                \AXC{$\Gamma  \ottsym{;}  \Sigma \, \vdash \, \ottnt{M}  \ottsym{:}  \ottsym{!}  \ottnt{A}$}
                \AXC{$\Gamma  \ottsym{,}  \ottmv{x}  \ottsym{:}  \ottnt{A}  \ottsym{;}  \Sigma' \, \vdash \, \ottnt{N}  \ottsym{:}  \ottnt{B}$}
              \RightLabel{$ \rbange $}
              \BIC{$\Gamma  \ottsym{;}  \Sigma  \ottsym{,}  \Sigma' \, \vdash \, \ottkw{let} \,  !  \ottmv{x}   \ottsym{=}  \ottnt{M} \, \ottkw{in} \, \ottnt{N}  \ottsym{:}  \ottnt{B}$}
            \end{prooftree}
          \end{minipage}
        \end{tabular}
        \end{center}
    \end{minipage}
    }
  \end{tabular}

  \vspace{0.5em}
  
  \hspace{-1em}
  \begin{tabular}{c}
    {
    \hspace{-1.2em}
    \begin{minipage}{0.98\hsize}
      \paragraph*{Reduction rule}
        \vspace{-1em}
        \begin{alignat*}{3}
           \ottsym{(}  \lambda  \ottmv{x}  \ottsym{:}  \ottnt{A}  \ottsym{.}  \ottnt{M}  \ottsym{)} \, \ottnt{N}                 & \leadsto   \ottnt{M}  [  \ottmv{x}  :=  \ottnt{N}  ]  &&\\
           \lambda  \ottmv{x}  \ottsym{:}  \ottnt{A}  \ottsym{.}  \ottnt{M} \, \ottmv{x}                  & \leadsto  \ottnt{M}       &&\\
          \ottkw{let} \,  !  \ottmv{x}   \ottsym{=}   !  \ottnt{M}  \, \ottkw{in} \, \ottnt{N} & \leadsto   \ottnt{N}  [  \ottmv{x}  :=  \ottnt{M}  ]  &&\\
          \ottkw{let} \,  !  \ottmv{x}   \ottsym{=}  \ottnt{M} \, \ottkw{in} \,  C [   !  \ottmv{x}   ]   & \leadsto   C [  \ottnt{M}  ]     &&\\
           \ottsym{(}  \ottkw{let} \,  !  \ottmv{x}   \ottsym{=}  \ottnt{M} \, \ottkw{in} \, \ottnt{N}  \ottsym{)} \, \ottnt{L}   & \leadsto   \ottkw{let} \,  !  \ottmv{x}   \ottsym{=}  \ottnt{M} \, \ottkw{in} \, \ottnt{N} \, \ottnt{L}  &&\\
          \ottkw{let} \,  !  \ottmv{y}   \ottsym{=}  \ottsym{(}  \ottkw{let} \,  !  \ottmv{x}   \ottsym{=}  \ottnt{M} \, \ottkw{in} \, \ottnt{N}  \ottsym{)} \, \ottkw{in} \, \ottnt{L} & \leadsto  \ottkw{let} \,  !  \ottmv{x}   \ottsym{=}  \ottnt{M} \, \ottkw{in} \, \ottkw{let} \,  !  \ottmv{y}   \ottsym{=}  \ottnt{N} \, \ottkw{in} \, \ottnt{L} &&\\
          \lambda  \ottmv{y}  \ottsym{:}  \ottnt{A}  \ottsym{.}  \ottsym{(}  \ottkw{let} \,  !  \ottmv{x}   \ottsym{=}  \ottnt{M} \, \ottkw{in} \, \ottnt{N}  \ottsym{)} & \leadsto  \ottkw{let} \,  !  \ottmv{x}   \ottsym{=}  \ottnt{M} \, \ottkw{in} \, \lambda  \ottmv{y}  \ottsym{:}  \ottnt{A}  \ottsym{.}  \ottnt{N}
            &\hspace{0.3em}\text{(if $y \not\in \fv{M}$)}&\\
           \ottnt{L} \, \ottsym{(}  \ottkw{let} \,  !  \ottmv{x}   \ottsym{=}  \ottnt{M} \, \ottkw{in} \, \ottnt{N}  \ottsym{)}  & \leadsto   \ottkw{let} \,  !  \ottmv{x}   \ottsym{=}  \ottnt{M} \, \ottkw{in} \, \ottnt{L} \, \ottnt{N}  &&\\
          \lambda  \ottmv{z}  \ottsym{:}  \ottnt{A}  \ottsym{.}  \ottsym{(}  \ottkw{let} \,  !  \ottmv{y}   \ottsym{=}  \ottnt{M} \, \ottkw{in} \, \ottkw{let} \,  !  \ottmv{x}   \ottsym{=}  \ottnt{L} \, \ottkw{in} \, \ottnt{N}  \ottsym{)} & \leadsto  \ottkw{let} \,  !  \ottmv{x}   \ottsym{=}  \ottnt{L} \, \ottkw{in} \, \lambda  \ottmv{z}  \ottsym{:}  \ottnt{A}  \ottsym{.}  \ottsym{(}  \ottkw{let} \,  !  \ottmv{y}   \ottsym{=}  \ottnt{M} \, \ottkw{in} \, \ottnt{N}  \ottsym{)}
            &\hspace{0.3em}\text{(if $y \not\in \fv{L}$)}&
        \end{alignat*}
        where $C[-]$ is a linear context defined by the following grammar:
        \begin{align*}
          C ::= [-] ~|~ \lambda  \ottmv{x}  \ottsym{:}  \ottnt{A}  \ottsym{.}   C  ~|~   C  \, \ottnt{M}  ~|~  \ottnt{M} \,  C   ~|~ \ottkw{let} \,  !  \ottmv{x}   \ottsym{=}   C  \, \ottkw{in} \, \ottnt{M} ~|~ \ottkw{let} \,  !  \ottmv{x}   \ottsym{=}  \ottnt{M} \, \ottkw{in} \,  C \\
        \end{align*}
    \end{minipage}
    }
  \end{tabular}

\end{rulefigure}

The calculus of Ohta and Hasegawa, named 
$\lambdabanglimp$ here, is given in Figure~\ref{fig:lambdabanglimp}.
The syntax and the typing rules can be read in the same way as (the ($!$, $\slimp$)-fragment of) $\lambdabangbox$.
There are somewhat many reduction rules in contrast to those of $\lambdabangbox$, but these are due to the purpose of Ohta and Hasegawa to consider $\eta$-rules and commutative conversions.
The different sets of reduction rules do not cause any problems to prove the strong normalizability of $\lambdabangbox$.

\begin{rulefigure}{fig:embedding}{Definition of the embeddings $ \fembed{ \ottnt{A} } $, $ \fembed{ \Gamma } $, and $ \fembed{ \ottnt{M} } $.}
    \begin{center}
  \begin{tabular}{c|c}
    {
      \hspace{-1em}
      \begin{minipage}{0.40\hsize}
        \vspace{-0.5em}
        \underline{$ \fembed{ \ottnt{A} } $}
        \vspace{-1.5em}
        \begin{align*}
           \fembed{ \ottmv{p} }       &\defeq \ottmv{p}\\
           \fembed{ \ottnt{A}  \multimap  \ottnt{B} }  &\defeq  \fembed{ \ottnt{A} }   \multimap   \fembed{ \ottnt{B} } \\
           \fembed{ \ottsym{!}  \ottnt{A} }      &\defeq \ottsym{!}   \fembed{ \ottnt{A} } \\
           \fembed{  \bangbox \ottnt{A}  }    &\defeq \ottsym{!}   \fembed{ \ottnt{A} } 
        \end{align*}

        \vspace{-1.5em}
        \rule{\textwidth}{0.4pt}

        \underline{$ \fembed{ \Gamma } $}

        \vspace{-1.5em}
        \begin{align*}
           \fembed{ \Gamma }  &\defeq \{ (x :  \fembed{ \ottnt{A} } ) ~|~ (x : A) \in \Gamma \}
        \end{align*}
      \end{minipage}
    }
    &
    {
      \begin{minipage}{0.52\hsize}
        \vspace{0.2em}
        \underline{$ \fembed{ \ottnt{M} } $}
        \vspace{-2.5em}
        \begin{align*}
           \fembed{ \ottmv{x} }                       &\defeq \ottmv{x}\\
           \fembed{ \lambda  \ottmv{x}  \ottsym{:}  \ottnt{A}  \ottsym{.}  \ottnt{M} }                  &\defeq \lambda  \ottmv{x}  \ottsym{:}   \fembed{ \ottnt{A} }   \ottsym{.}   \fembed{ \ottnt{M} } \\
           \fembed{  \ottnt{M} \, \ottnt{N}  }                     &\defeq   \fembed{ \ottnt{M} }  \,  \fembed{ \ottnt{N} }  \\
           \fembed{  !  \ottnt{M}  }                  &\defeq  !   \fembed{ \ottnt{M} }  \\
           \fembed{ \ottkw{let} \,  !  \ottmv{x}   \ottsym{=}  \ottnt{M} \, \ottkw{in} \, \ottnt{N} }     &\defeq \ottkw{let} \,  !  \ottmv{x}   \ottsym{=}   \fembed{ \ottnt{M} }  \, \ottkw{in} \,  \fembed{ \ottnt{N} } \\
           \fembed{  \bangbox  \ottnt{M}  }               &\defeq  !   \fembed{ \ottnt{M} }  \\
           \fembed{ \ottkw{let} \,  \bangbox  \ottmv{x}   \ottsym{=}  \ottnt{M} \, \ottkw{in} \, \ottnt{N} }  &\defeq \ottkw{let} \,  !  \ottmv{x}   \ottsym{=}   \fembed{ \ottnt{M} }  \, \ottkw{in} \,  \fembed{ \ottnt{N} } 
        \end{align*}
      \end{minipage}
    }
  \end{tabular}
  \end{center}

\end{rulefigure}

\begin{definition}[Embedding]
  An \emph{embedding} from $\lambdabangbox$ to $\lambdabanglimp$
  is defined to be the triple of the translations $ \fembed{ \ottnt{A} } $, $ \fembed{ \Gamma } $, and $ \fembed{ \ottnt{M} } $
  given in Figure~\ref{fig:embedding}.
\end{definition}

\begin{lemma}[Preservation of typing and reduction]\ 
  \label{lem:preservation}
  \begin{enumerate}
    \item If $\Delta  \ottsym{;}  \Gamma  \ottsym{;}  \Sigma \, \vdash \, \ottnt{M}  \ottsym{:}  \ottnt{A}$ in $\lambdabangbox$, then $ \fembed{ \Delta  \ottsym{,}  \Gamma }   \ottsym{;}   \fembed{ \Sigma }  \, \vdash \,  \fembed{ \ottnt{M} }   \ottsym{:}   \fembed{ \ottnt{A} } $ in $\lambdabanglimp$.
    \item If $\ottnt{M}  \leadsto  \ottnt{N}$ in $\lambdabangbox$, then $ \fembed{ \ottnt{M} }   \leadsto   \fembed{ \ottnt{N} } $ in $\lambdabanglimp$.
  \end{enumerate}
\end{lemma}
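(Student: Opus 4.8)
The plan is to prove both parts by induction --- part~1 on the typing derivation and part~2 on the derivation of $\ottnt{M}  \leadsto  \ottnt{N}$ --- using the single structural fact that $\fembed{-}$ collapses the $\bangbox$-modality onto the $!$-modality and merges the two modal contexts $\Delta$ (for $\bangbox$) and $\Gamma$ (for $!$) of $\lambdabangbox$ into the single intuitionistic context of $\lambdabanglimp$. Throughout I would freely use the bookkeeping identities $ \fembed{ \Sigma  \ottsym{,}  \Sigma' }  =  \fembed{ \Sigma }   \ottsym{,}   \fembed{ \Sigma' } $ and $ \fembed{ \Delta  \ottsym{,}  \ottmv{x}  \ottsym{:}  \ottnt{A}  \ottsym{,}  \Gamma }  = ( \fembed{ \Delta  \ottsym{,}  \Gamma } )  \ottsym{,}  \ottmv{x}  \ottsym{:}   \fembed{ \ottnt{A} } $, which are immediate from the definition of $\fembed{-}$ on contexts, together with $ \fembed{  \bangbox \ottnt{A}  }  = \ottsym{!}   \fembed{ \ottnt{A} } $.

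For part~1 I would induct on the derivation of $\Delta  \ottsym{;}  \Gamma  \ottsym{;}  \Sigma \, \vdash \, \ottnt{M}  \ottsym{:}  \ottnt{A}$. The three variable rules degenerate: $\rlax$ maps to $\rlax$ of $\lambdabanglimp$, while both $\rbax$ and $\rbbax$ map to $\rbax$ of $\lambdabanglimp$, since the embedded modal variable simply sits inside $ \fembed{ \Delta  \ottsym{,}  \Gamma } $. The rules $\rlimpi$, $\rlimpe$, $\rbangi$, and $\rbange$ map directly to their namesakes in $\lambdabanglimp$ using the identities above. The only properly modal cases are $\rbangboxi$ and $\rbangboxe$. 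An instance of $\rbangboxe$ on a term of type $ \bangbox \ottnt{A} $ whose right premise moves $x{:}A$ into $\Delta$ maps to an instance of $\rbange$ of $\lambdabanglimp$, because $ \fembed{  \bangbox \ottnt{A}  }  = \ottsym{!}   \fembed{ \ottnt{A} } $ and $\fembed{-}$ sends the $\bangbox$-eliminator to the $!$-eliminator. An instance of $\rbangboxi$, whose premise $\Delta  \ottsym{;}  \emptyset  \ottsym{;}  \emptyset \, \vdash \, \ottnt{M}  \ottsym{:}  \ottnt{A}$ embeds to $ \fembed{ \Delta }   \ottsym{;}  \emptyset \, \vdash \,  \fembed{ \ottnt{M} }   \ottsym{:}   \fembed{ \ottnt{A} } $, is handled by first weakening the intuitionistic context by $ \fembed{ \Gamma } $ and then applying $\rbangi$. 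Thus the only side lemma needed is admissibility of weakening in the intuitionistic context of $\lambdabanglimp$, which is standard for calculi of dual intuitionistic linear logic.

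For part~2 I would first establish the substitution-commutation equation $ \fembed{  \ottnt{M}  [  \ottmv{x}  :=  \ottnt{N}  ]  }  =  \fembed{ \ottnt{M} }  [  \ottmv{x}  :=  \fembed{ \ottnt{N} }  ]$ by a routine induction on $M$, using that $\fembed{-}$ is a homomorphism on every term constructor (sending $ \bangbox  \ottnt{N} $ and the $\bangbox$-destructor to the corresponding $!$-terms). Then I would induct on the derivation of $\ottnt{M}  \leadsto  \ottnt{N}$. Among the base reductions, $\rbetalimp$ maps to the $\multimap$-reduction of $\lambdabanglimp$, $\rbetabang$ to its $!$-reduction, and $\rbetabangbox$ also to the $!$-reduction, because both $\bangbox$-occurrences in the redex become $!$; in each case the reduct is identified with $\fembed{-}$ of the original reduct by the commutation equation. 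For the compatible-closure (congruence) cases, $\fembed{-}$ commutes with every term constructor, hence with hole-filling into a term context, so $ \fembed{ \ottnt{M} }   \leadsto   \fembed{ \ottnt{N} } $ lifts under the context by compatibility of $ \leadsto $ in $\lambdabanglimp$. Although $\lambdabanglimp$ carries extra reductions for $\eta$ and commuting conversions, none of them are used, and each single $\lambdabangbox$-step is simulated by exactly one $\lambdabanglimp$-step, matching the stated $ \leadsto $ rather than $ \leadsto^+ $.

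The main obstacle is bookkeeping rather than mathematical depth: the genuine point is that the target's two variable rules ($\rlax$, $\rbax$) and single $!$-reduction must absorb three source variable rules and two source reductions, which works precisely because $\fembed{-}$ is designed to identify $\bangbox$ with $!$. The one auxiliary fact to prove on the side is admissibility of weakening for the intuitionistic context of $\lambdabanglimp$, which is used only in the $\rbangboxi$ case; everything else is a direct rule-by-rule and redex-by-redex check.
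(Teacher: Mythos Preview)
Your proposal is correct and follows exactly the approach the paper takes: the paper's proof reads in full ``By induction on $\Delta  \ottsym{;}  \Gamma  \ottsym{;}  \Sigma \, \vdash \, \ottnt{M}  \ottsym{:}  \ottnt{A}$ and $\ottnt{M}  \leadsto  \ottnt{N}$, respectively.'' Your case analysis (in particular the observation that $\rbangboxi$ is the only rule requiring an auxiliary weakening in the intuitionistic context of $\lambdabanglimp$, and that the three base reductions collapse to two) is an accurate unfolding of that one-line proof.
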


\begin{proof}
  By induction on $\Delta  \ottsym{;}  \Gamma  \ottsym{;}  \Sigma \, \vdash \, \ottnt{M}  \ottsym{:}  \ottnt{A}$ and $\ottnt{M}  \leadsto  \ottnt{N}$, respectively.
\end{proof}

\begin{theorem}[Strong normalization]
  In $\lambdabangbox$,
  there are no infinite reduction sequences starting from $M$ for all well-typed term $M$.
\end{theorem}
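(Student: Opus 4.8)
The plan is to derive the claim from the strong normalization of $\lambdabanglimp$, which was proved by Ohta and Hasegawa~\cite{OH:terminating_linear_lambda}, by using the embedding $\fembed{-}$ of Figure~\ref{fig:embedding} as a reduction-preserving translation. The key point, already built into the definition of $\fembed{-}$, is that the $\bangbox$~-modality collapses onto the $!$-modality: $\fembed{  \bangbox \ottnt{A}  } \defeq \ottsym{!}  \fembed{ \ottnt{A} }$, and correspondingly the term constructors $ \bangbox  \ottnt{M} $ and $\ottkw{let} \,  \bangbox  \ottmv{x}   \ottsym{=}  \ottnt{M} \, \ottkw{in} \, \ottnt{N}$ are sent to their $!$-counterparts. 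Hence each of the three reduction rules $\rbetalimp$, $\rbetabang$, $\rbetabangbox$ of $\lambdabangbox$ is simulated by a single reduction step in $\lambdabanglimp$, and typability is preserved; these two facts are exactly Lemma~\ref{lem:preservation}.

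The argument then runs as follows. Suppose, towards a contradiction, that some well-typed $\lambdabangbox$-term $M$ admits an infinite reduction sequence $M = M_0 \leadsto M_1 \leadsto M_2 \leadsto \cdots$. By subject reduction every $M_i$ is well-typed, so by Lemma~\ref{lem:preservation}.1 each $\fembed{ \ottnt{M_i} }$ is a well-typed term of $\lambdabanglimp$; and by Lemma~\ref{lem:preservation}.2 we have $\fembed{ \ottnt{M_i} }  \leadsto  \fembed{ \ottnt{M_{i+1}} }$ in $\lambdabanglimp$ for every $i$ (a genuine single-step reduction, since the relevant $\beta$-redexes are mapped to $\beta$-redexes of the same shape). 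Therefore $\fembed{ \ottnt{M_0} }  \leadsto  \fembed{ \ottnt{M_1} }  \leadsto  \cdots$ is an infinite reduction sequence in $\lambdabanglimp$ starting from a well-typed term, contradicting the strong normalization of $\lambdabanglimp$. Hence no such $M$ exists, which is the assertion of the theorem.

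What remains is to discharge Lemma~\ref{lem:preservation}, which proceeds by straightforward induction on typing derivations and on the generation of $ \leadsto $; the one point to check with care is that $\fembed{-}$ commutes with substitution, i.e.\ $\fembed{  \ottnt{M}  [  \ottmv{x}  :=  \ottnt{N}  ]  } = \fembed{ \ottnt{M} }  [  \ottmv{x}  :=  \fembed{ \ottnt{N} }  ]$, so that the images of the $\beta$-contractions line up. The only mild subtlety — and the closest thing to an obstacle — is that $\lambdabanglimp$ carries several extra reduction rules ($\eta$-rules and commuting conversions) beyond the $\beta$-rules actually used here; but this merely enlarges $ \leadsto $ on the target side, so the cited strong normalization of $\lambdabanglimp$ still bounds the reductions of interest, and the mismatch of rule sets causes no difficulty. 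No argument beyond assembling Lemma~\ref{lem:preservation}, subject reduction, and the result of Ohta and Hasegawa is needed.
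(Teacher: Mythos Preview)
Your argument is correct and follows essentially the same route as the paper: embed $\lambdabangbox$ into $\lambdabanglimp$ via $\fembed{-}$, invoke Lemma~\ref{lem:preservation} to preserve typability and one-step reduction, and derive a contradiction with the strong normalization of $\lambdabanglimp$ due to Ohta and Hasegawa. The only minor remark is that your appeal to subject reduction for the intermediate $M_i$ is not actually needed: once $\fembed{M_0}$ is well-typed (by Lemma~\ref{lem:preservation}.1 applied to the given well-typed $M_0$), strong normalization of $\lambdabanglimp$ already forbids any infinite reduction sequence from it, and Lemma~\ref{lem:preservation}.2 requires no typing hypothesis.
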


\begin{proof}
  Suppose that there exists an infinite reduction sequence starting from $M$ in $\lambdabangbox$.
  Then, the term $ \fembed{ \ottnt{M} } $ is well-typed in $\lambdabanglimp$ and yields an infinite reduction sequence in $\lambdabanglimp$
  by Lemma~\ref{lem:preservation}. However, this contradicts the strong normalizability of $\lambdabanglimp$.
\end{proof}

\subsection{Bracket abstraction algorithm}

We show the definition of bracket abstraction operators in this section.

\begin{rulefigure}{fig:bracket_abstraction}{Definitions of $\ottsym{(}   \lambda_{*}  \ottmv{x} .  \ottnt{M}   \ottsym{)}$, $\ottsym{(}   \lambda^{!}_{*}  \ottmv{x} .  \ottnt{M}   \ottsym{)}$, and $\ottsym{(}   \lambda^{\smallbangbox}_{*}  \ottmv{x} .  \ottnt{M}   \ottsym{)}$ for bracket abstraction.}
    \hspace{-1em}
  \begin{tabular}{c}
    {
    \begin{minipage}{0.97\hsize}
      \vspace{0.2em}
      \underline{$ \lambda_{*}  \ottmv{x} .  \ottnt{M} $}
        \vspace{-1em}
        \begin{alignat*}{3}
           \lambda_{*}  \ottmv{x} .  \ottmv{x}       &\defeq& &~ \mathrm{I}                   &&\\
           \lambda_{*}  \ottmv{x} .  \ottsym{(}   \ottnt{M} \, \ottnt{N}   \ottsym{)}   &\defeq& &~   \mathrm{C}  \, \ottsym{(}   \lambda_{*}  \ottmv{x} .  \ottnt{M}   \ottsym{)}  \, \ottnt{N}  &&~~~\text{if $x \in \mathrm{FV}(M)$ }\\
           \lambda_{*}  \ottmv{x} .  \ottsym{(}   \ottnt{M} \, \ottnt{N}   \ottsym{)}   &\defeq& &~   \mathrm{B}  \, \ottnt{M}  \, \ottsym{(}   \lambda_{*}  \ottmv{x} .  \ottnt{N}   \ottsym{)}  &&~~~\text{if $x \in \mathrm{FV}(N)$ }
        \end{alignat*}
    \end{minipage}
    }
  \end{tabular}

  \rule{\textwidth}{0.4pt}

  \hspace{-1em}
  \begin{tabular}{c|c}
    {
    \begin{minipage}{0.45\hsize}
      \vspace{-1.6em}
      \underline{$ \lambda^{!}_{*}  \ottmv{x} .  \ottnt{M} $}
        \vspace{-1em}
        \begin{alignat*}{3}
          \hspace{-1em} \lambda^{!}_{*}  \ottmv{x} .  \ottmv{x}       &\defeq& &~ \mathrm{T}^!                                  &&\\
          \hspace{-1em} \lambda^{!}_{*}  \ottmv{x} .  \ottnt{M}       &\defeq& &~  \mathrm{K}^!  \, \ottnt{M}                                &&~~~\text{if $(a)$}\\
          \hspace{-1em} \lambda^{!}_{*}  \ottmv{x} .  \ottsym{(}   \ottnt{M} \, \ottnt{N}   \ottsym{)}   &\defeq& &~   \mathrm{C}  \, \ottsym{(}   \lambda^{!}_{*}  \ottmv{x} .  \ottnt{M}   \ottsym{)}  \, \ottnt{N}                &&~~~\text{if $(b)$}\\
          \hspace{-1em} \lambda^{!}_{*}  \ottmv{x} .  \ottsym{(}   \ottnt{M} \, \ottnt{N}   \ottsym{)}   &\defeq& &~   \mathrm{B}  \, \ottnt{M}  \, \ottsym{(}   \lambda^{!}_{*}  \ottmv{x} .  \ottnt{N}   \ottsym{)}                &&~~~\text{if $(c)$}\\
          \hspace{-1em} \lambda^{!}_{*}  \ottmv{x} .  \ottsym{(}   \ottnt{M} \, \ottnt{N}   \ottsym{)}   &\defeq& &~   \mathrm{S}^!  \, \ottsym{(}   \lambda^{!}_{*}  \ottmv{x} .  \ottnt{M}   \ottsym{)}  \, \ottsym{(}   \lambda^{!}_{*}  \ottmv{x} .  \ottnt{N}   \ottsym{)}  &&~~~\text{if $(d)$}\\
          \hspace{-1em} \lambda^{!}_{*}  \ottmv{x} .  \ottsym{(}   !  \ottnt{M}   \ottsym{)}   &\defeq& &~   \mathrm{B}  \, \ottsym{(}    \mathrm{D}^!  \,  !  \ottsym{(}   \lambda^{!}_{*}  \ottmv{x} .  \ottnt{M}   \ottsym{)}    \ottsym{)}  \,  \mathrm{4}^!          &&
        \end{alignat*}
    \end{minipage}
    }
    &
    {
    \begin{minipage}{0.45\hsize}
      \vspace{0.1em}
      \underline{$ \lambda^{\smallbangbox}_{*}  \ottmv{x} .  \ottnt{M} $}
        \vspace{-1em}
        \begin{alignat*}{3}
          \hspace{-1em} \lambda^{\smallbangbox}_{*}  \ottmv{x} .  \ottmv{x}         &\defeq& &~ \mathrm{T}^{\smallbangbox}                                       &&\\
          \hspace{-1em} \lambda^{\smallbangbox}_{*}  \ottmv{x} .  \ottnt{M}         &\defeq& &~  \mathrm{K}^{\smallbangbox}  \, \ottnt{M}                                     &&\text{if $(a)$}\\
          \hspace{-1em} \lambda^{\smallbangbox}_{*}  \ottmv{x} .  \ottsym{(}   \ottnt{M} \, \ottnt{N}   \ottsym{)}     &\defeq& &~   \mathrm{C}  \, \ottsym{(}   \lambda^{\smallbangbox}_{*}  \ottmv{x} .  \ottnt{M}   \ottsym{)}  \, \ottnt{N}                     &&\text{if $(b)$}\\
          \hspace{-1em} \lambda^{\smallbangbox}_{*}  \ottmv{x} .  \ottsym{(}   \ottnt{M} \, \ottnt{N}   \ottsym{)}     &\defeq& &~   \mathrm{B}  \, \ottnt{M}  \, \ottsym{(}   \lambda^{\smallbangbox}_{*}  \ottmv{x} .  \ottnt{N}   \ottsym{)}                     &&\text{if $(c)$}\\
          \hspace{-1em} \lambda^{\smallbangbox}_{*}  \ottmv{x} .  \ottsym{(}   \ottnt{M} \, \ottnt{N}   \ottsym{)}     &\defeq& &~   \mathrm{S}^{\smallbangbox}  \, \ottsym{(}   \lambda^{\smallbangbox}_{*}  \ottmv{x} .  \ottnt{M}   \ottsym{)}  \, \ottsym{(}   \lambda^{\smallbangbox}_{*}  \ottmv{x} .  \ottnt{N}   \ottsym{)}     &&\text{if $(d)$}\\
          \hspace{-1em} \lambda^{\smallbangbox}_{*}  \ottmv{x} .  \ottsym{(}   !  \ottnt{M}   \ottsym{)}     &\defeq& &~   \mathrm{B}  \, \ottsym{(}    \mathrm{D}^!  \, \ottsym{(}   !  \ottsym{(}   \lambda^{\smallbangbox}_{*}  \ottmv{x} .  \ottnt{M}   \ottsym{)}   \ottsym{)}   \ottsym{)}  \, \ottsym{(}     \mathrm{B}  \,  \mathrm{E}   \,  \mathrm{4}^{\smallbangbox}\,    \ottsym{)}   &&\\
          \hspace{-1em} \lambda^{\smallbangbox}_{*}  \ottmv{x} .  \ottsym{(}   \bangbox  \ottnt{M}   \ottsym{)}   &\defeq& &~   \mathrm{B}  \, \ottsym{(}    \mathrm{D}^{\smallbangbox}  \, \ottsym{(}   \bangbox  \ottsym{(}   \lambda^{\smallbangbox}_{*}  \ottmv{x} .  \ottnt{M}   \ottsym{)}   \ottsym{)}   \ottsym{)}  \,  \mathrm{4}^{\smallbangbox}\,         &&
        \end{alignat*}
    \end{minipage}
    }
  \end{tabular}

  \rule{\textwidth}{0.4pt}

  where $(a), (b), (c), (d)$ means the conditions $(x \not\in \fv{M})$, $(x \in \fv{M} \text{ and } x \not\in \fv{N})$,
  $(x \not\in \fv{M} \text{ and } x \in \fv{N})$, $(x \in \fv{M} \text{ and } x \in \fv{N}) $, respectively.
  \vspace{0.5em}

\end{rulefigure}

\begin{definition}[Bracket abstraction]
  Let $M$ be a term $M$ of $\clvbangbox$ such that $\Delta  \ottsym{;}  \Gamma  \ottsym{;}  \Sigma \, \vdash \, \ottnt{M}  \ottsym{:}  \ottnt{A}$ and $x \in \fv{M}$ for some $\Delta, \Gamma, \Sigma, A$ and $x$.
  Then, the \emph{bracket abstraction of $M$ with respect to $x$} is defined to be either one of the following, depending the variable kind of $x$:
  \begin{alignat*}{3}
    \hspace{0.30\textwidth}&\ottsym{(}   \lambda_{*}  \ottmv{x} .  \ottnt{M}   \ottsym{)}    &&\hspace{3em}\text{if $x \in \fdom{\Sigma}$\footnotemark;}\\
    \hspace{0.30\textwidth}&\ottsym{(}   \lambda^{!}_{*}  \ottmv{x} .  \ottnt{M}   \ottsym{)}   &&\hspace{3em}\text{if $x \in \fdom{\Gamma}$;}\\
    \hspace{0.30\textwidth}&\ottsym{(}   \lambda^{\smallbangbox}_{*}  \ottmv{x} .  \ottnt{M}   \ottsym{)}  &&\hspace{3em}\text{if $x \in \fdom{\Delta}$,}
  \end{alignat*}
  where each one of $\ottsym{(}   \lambda_{*}  \ottmv{x} .  \ottnt{M}   \ottsym{)}$, $\ottsym{(}   \lambda^{!}_{*}  \ottmv{x} .  \ottnt{M}   \ottsym{)}$, and $\ottsym{(}   \lambda^{\smallbangbox}_{*}  \ottmv{x} .  \ottnt{M}   \ottsym{)}$
  is the meta-level \emph{bracket abstraction} operation given in Figure~\ref{fig:bracket_abstraction},
  which takes the pair of $x$ and $M$, and yields a $\clvbangbox$-term.
  \footnotetext{$\fdom{\Gamma}$ is defined to be the set $\{ x ~|~ (x : A) \in \Gamma \}$ for all type contexts $\Gamma$.}
\end{definition}

\begin{remark}
  As in the case of standard bracket abstraction algorithm,
  the intuition behind the operations $\ottsym{(}   \lambda_{*}  \ottmv{x} .  \ottnt{M}   \ottsym{)}$, $\ottsym{(}   \lambda^{!}_{*}  \ottmv{x} .  \ottnt{M}   \ottsym{)}$, and $\ottsym{(}   \lambda^{!}_{*}  \ottmv{x} .  \ottnt{M}   \ottsym{)}$
  is that they are defined so as to mimic the $\lambda$-abstraction operation in the framework of combinatory logic.
  For instance, the denotation of $\ottsym{(}   \lambda_{*}  \ottmv{x} .  \ottnt{M}   \ottsym{)}$ is a $\clvbangbox$-term that represents a function with the parameter $x$, that is, it is a term that satisfies that $ \ottsym{(}   \lambda_{*}  \ottmv{x} .  \ottnt{M}   \ottsym{)} \, \ottnt{N}   \leadsto^+   \ottnt{M}  [  \ottmv{x}  :=  \ottnt{N}  ] $ in $\clvbangbox$,
  for all $\clvbangbox$-terms $N$.
\end{remark}

\begin{remark}
  There are no definitions for some cases in $\ottsym{(}   \lambda_{*}  \ottmv{x} .  \ottnt{M}   \ottsym{)}$ and $\ottsym{(}   \lambda^{!}_{*}  \ottmv{x} .  \ottnt{M}   \ottsym{)}$,
  e.g, the case that $\ottsym{(}   \lambda_{*}  \ottmv{x} .  \ottsym{(}   \ottnt{M} \, \ottnt{N}   \ottsym{)}   \ottsym{)}$ such that $x \in \fv{M}$ and $x \in \fv{N}$,
  and the case that $\ottsym{(}   \lambda^{!}_{*}  \ottmv{x} .  \ottsym{(}   \bangbox  \ottnt{M}   \ottsym{)}   \ottsym{)}$. This is because that these are actually unnecessary due to the linearity condition and the side condition of the rule $\hbbnec$~.
  Moreover, the well-definedness of the bracket abstraction operations can be shown by induction on $M$,
  and in reality, the proof of the deduction theorem can be seen as what justifies it.
  The intentions that $ \ottsym{(}   \lambda_{*}  \ottmv{x} .  \ottnt{M}   \ottsym{)} \, \ottnt{N}   \leadsto^+   \ottnt{M}  [  \ottmv{x}  :=  \ottnt{N}  ] $, etc. can also be shown by easy calculation.
\end{remark}

\end{document}